\newtheorem{theorem}{Theorem}{\bfseries}{\itshape}
\newtheorem{definition}{Definition}{\bfseries}{\itshape}
\newtheorem{lemma}{Lemma}{\bfseries}{\itshape}
{\bfseries}{\itshape}
\newtheorem{observation}{Observation}{\bfseries}{\itshape}
{\bfseries}{\itshape}
\newtheorem{remark}{Remark}{\bfseries}{\itshape}
\newtheorem{question}{Question}{\bfseries}{\itshape}
\newtheorem{property}{Property}{\bfseries}{\itshape}
\newtheorem{condition}{Condition}{\bfseries}{\itshape}
\newtheorem{conjecture}{Conjecture}{\bfseries}{\itshape}
\title{\textbf{Edge-coloured graphs with only monochromatic perfect matchings and their connection to quantum physics}}
\author{\normalsize\textbf{L. Sunil Chandran}}
\author{\textbf{Rishikesh Gajjala}}
\address{Indian Institute of Science, Bengaluru\\ Email: \texttt{sunil@iisc.ac.in, rishikeshg@iisc.ac.in}}
\date{}
\def\ps@pprintTitle{%
 \let\@oddhead\@empty
 \let\@evenhead\@empty
 \def\@oddfoot{\centerline{\thepage}}%
 \let\@evenfoot\@oddfoot}
\begin{document}

	\begin{abstract}
		\noindent Krenn, Gu and Zeilinger initiated the study of PMValid edge-colourings because of its connection to a problem from quantum physics. A graph is defined to have a PMValid $k$-edge-colouring if it admits a $k$-edge-colouring (i.e. an edge colouring with $k$-colours) with the property that
		\begin{itemize}
		\item All perfect matchings are monochromatic
		\item Each of the $k$ colour classes contain at least one perfect matching
		\end{itemize}
The matching index of a graph $G$, $\mu(G)$ is defined as the maximum value of $k$ for which $G$ admits a PMValid $k$-edge-colouring. It is easy to see that $\mu(G)\geq 1$ if and only if $G$ has a perfect matching (due to the trivial $1$-edge-colouring which is PMValid). Bogdanov observed that for all graphs non-isomorphic to $K_4$, $\mu(G)\leq 2$ and $\mu(K_4)=3$. However, the characterisation of graphs for which $\mu(G)=1$ and $\mu(G)=2$ is not known. In this work, we answer this question. Using this characterisation, we also give a fast algorithm to compute $\mu(G)$ of a graph $G$. In view of our work, the structure of PMValid $k$-edge-colourable graphs is now fully understood for all $k$. Our characterisation, also has an implication to the aforementioned quantum physics problem. In particular, it settles a conjecture of Krenn and Gu for a sub-class of graphs. 
		
		\noindent Keywords: \textit{Matching covered graphs, Perfect matchings, Edge colourings}
	\end{abstract}

	\maketitle

\section{Introduction}
% \textcolor{red}{transitive closure, simpler algorithm,mcg,fig 4}
\subsection{Problem definition}
We will be considering only simple, finite, undirected and loopless graphs. In an edge-colouring of $G$, each edge $e$ is associated with a colour which is denoted using a natural number. We emphasise that this need not be a proper edge-colouring. If exactly $k$ colours are used in an edge-colouring, we call it to be a $k$-edge-colouring. $G_c$ represents an edge-coloured graph formed by a $k$-edge-colouring $c$ over graph $G$.

%A graph is matching covered if every edge of it is part of at least one perfect matching. If an edge $e$ is not part of any perfect matching $M$, then we call the edge $e$ to be redundant. By removing all redundant edges from the given graph $G$, we get its unique maximum matching covered sub-graph $mcg(G)$.

%\begin{definition}
On a graph $G$, a $k$-edge-colouring $c$ is said to be \textit{PMValid} if

\begin{itemize}
		\item All perfect matchings are monochromatic
		\item Each of the $k$ colour classes contain at least one perfect matching
\end{itemize}
Given a graph $G$, its \textit{matching index} $\mu(G)$ is defined as the maximum value of $k$ for which $G$ admits a PMValid $k$-edge-colouring. 
%\end{definition}

\begin{question}
\label{graph_question}
For each $i \in \mathbb{N}$, can one characterise all graphs for which $\mu(G)=i$?
\end{question}
 
Bogdanov observed the following \cite{bogdanov,Quantum_graphs}
\begin{theorem}
\label{bogdanov}
For a graph $G$ which is non-isomorphic to $K_4$, $\mu(G)\leq 2$ and $\mu(K_4)=3$.
\end{theorem}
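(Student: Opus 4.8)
The plan is to handle the two claims separately. For $K_4$, I would first exhibit a PMValid $3$-edge-colouring: the three perfect matchings of $K_4$ partition its six edges, so colouring each matching with its own colour makes every perfect matching monochromatic and every colour class a matching, giving $\mu(K_4)\ge 3$. Since $K_4$ has only three perfect matchings, no PMValid colouring can have more than three non-empty colour classes, so $\mu(K_4)=3$. For the upper bound I would argue by contradiction: assume $G\not\cong K_4$ admits a PMValid $k$-edge-colouring with $k\ge 3$, and fix perfect matchings $M_1,M_2,M_3$ lying in three distinct colour classes.

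The heart of the argument is a recombination (alternating-cycle) lemma. Two perfect matchings of distinct colours must be edge-disjoint, since a shared edge would carry two colours; hence $M_i\cup M_j$ is a disjoint union of even cycles that alternate between the two matchings and together span $V(G)$. If this union had two or more cycles, I could keep the $M_i$-edges on one cycle and the $M_j$-edges on another to build a perfect matching using both colours, i.e.\ a non-monochromatic perfect matching, contradicting PMValidity. Therefore $M_i\cup M_j$ is a single Hamiltonian cycle for every pair; in particular $n=|V(G)|$ is even and $C:=M_1\cup M_2$ is a Hamiltonian cycle, with the edges of $M_3$ appearing as chords of $C$.

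It then suffices to produce, whenever $n>4$, a perfect matching of $G$ that mixes two of the colours. Writing the vertices as $0,1,\dots,n-1$ around $C$ (so $M_1,M_2$ are the alternate edges, and two vertices have the same parity exactly when both arcs of $C$ between them are even), I distinguish two cases. If some chord of $M_3$ joins two vertices of different parity, both arcs it cuts off have an even number of interior vertices, so I can match those interiors by edges of $C$ and add the chord itself; this uses an $M_3$-edge together with $C$-edges and is non-monochromatic. Otherwise every chord joins equal-parity vertices, i.e.\ $M_3$ respects the bipartition of $C$; here I would take a chord whose shorter arc is inclusion-minimal, observe that no chord lies strictly inside that arc, and note that this arc must contain an interior vertex of the opposite parity, whose $M_3$-partner therefore lies outside the arc. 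This produces an even chord and an odd chord that cross; their four endpoints alternate in parity around $C$, so the four arcs they determine again have even interiors matchable by $C$-edges, and for $n>4$ at least one such edge is used, yielding a non-monochromatic perfect matching. Either way we contradict PMValidity, so $n=4$; but then $C$ together with its two chords is exactly $K_4$, forcing $G\cong K_4$ and completing the contradiction.

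The routine part is the recombination lemma and the single-chord case; the delicate step is the equal-parity case, where the existence of a crossing even/odd chord pair must be \emph{established} rather than assumed, and the inclusion-minimal-arc argument above is what makes it go through. I expect this to be the main obstacle. An alternative finish is to observe that $C$ guarantees $H:=M_1\cup M_2\cup M_3$ is a bridgeless cubic graph in which $M_1,M_2,M_3$ are the only monochromatic perfect matchings, so any further perfect matching of $H$ is automatically non-monochromatic; the whole upper bound then reduces to the fact that $K_4$ is the unique bridgeless cubic graph with exactly three perfect matchings, which one could cite instead of proving the chord statement by hand.
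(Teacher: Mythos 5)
Your proposal is correct, but there is nothing in the paper to compare it against line by line: the paper does not prove \cref{bogdanov} at all, quoting it as Bogdanov's observation with citations \cite{bogdanov,Quantum_graphs}. What you have written is a self-contained proof of that cited result, and every step checks out: the $\mu(K_4)=3$ argument (three pairwise disjoint perfect matchings give the colouring; monochromaticity plus ``only three perfect matchings exist'' caps the number of colour classes) is sound; the recombination lemma is the standard alternating-cycle exchange; and both chord cases go through. In the different-parity case both cut-off arcs have even interiors, so the chord plus $C$-edges is a perfect matching that mixes colour $3$ with colours $1,2$ as soon as $n\geq 4$. In the equal-parity case your inclusion-minimality argument is the delicate point, and it works: a chord with both endpoints strictly inside the minimal shorter arc $A$ would have its inner arc of length less than $|A|\leq n/2$, hence that inner arc is its shorter arc and properly contained in $A$, a contradiction; the interior of $A$ has odd size beginning and ending with the opposite parity, so the required crossing even--odd pair exists, and the four even interiors are matchable by $C$-edges, at least one of which is used when $n>4$. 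When $n=4$ the construction degenerates exactly to $C_4$ plus both chords, i.e.\ $K_4$, as you say.

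It is worth noting that, although the paper omits a proof of this theorem, your ingredients closely parallel the machinery it builds for the finer characterization (\cref{2_characterization_result}): your recombination lemma is the argument inside the proof of \cref{simple_classification_theorem} (and again in \cref{algo_proof_lemma}); your different-parity case is \cref{odd_even_edges} (elimination of ``illegal'' edges); and your minimal-arc crossing argument is a close cousin of the minimum-partition-weight argument in \cref{exactly_one_drum} combined with the crossing-pair matching of \cref{nice_pairs_are_drums}, except that in your setting the chords carry a third colour, which makes the non-monochromaticity even more immediate than in the paper's two-colour observations. So your proof is, in effect, the natural specialization of the paper's techniques back to the theorem the paper takes as given. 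One caution on your alternative finish: reducing to ``$K_4$ is the unique bridgeless cubic graph with exactly three perfect matchings'' is a genuinely nontrivial external fact (your claim that $M_1,M_2,M_3$ are the only monochromatic perfect matchings of $H=M_1\cup M_2\cup M_3$ is fine, since a monochromatic perfect matching of colour $i$ inside $H$ must be a subset of, hence equal to, $M_i$), so if you take that route you must cite it rather than treat it as folklore; your hands-on chord argument avoids the dependency and is preferable.
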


\subsection{Motivation}

If the reader of the paper is not at all comfortable with quantum physics, they may just casually browse through this sub-section since the material of this paper can be essentially understood from a purely graph-theoretic point of view. But we emphasise that the graph theoretic problem here is strongly motivated by an earlier work \cite{Quantum_graphs}, the authors of which are prominent quantum physicists, one of them being a Nobel laureate in physics of 2022. 

Quantum entanglement theory implies that two particles can influence each other, even though they are separated over large distances. In 1964, Bell demonstrated that quantum mechanics conflicts with our classical understanding of the world \citep{bell}. Later, in 1989, Greenberger, Horne, and Zeilinger (abbreviated as GHZ) studied what can happen if more than two particles are entangled \cite{Greenberger}. Such states in which three particles are entangled (%denoted byß
$|GHZ_{3,2}\rangle = \frac{1}{\sqrt{2}}\left(|000\rangle + |111\rangle \right)$) were observed rejecting local-realistic theories  ~\cite{PhysRevLett.82.1345,Pan2000}. We note that the work on experimentally constructing GHZ states is at the heart of the topic for which Zeilinger was a co-recipient of the Nobel prize in 2022 \cite{nobel}.

While the study of such states started purely out of fundamental curiosity \cite{fund_cur1,fund_cur2,fund_cur3}, they are now used in many applications in quantum information theory, such as quantum computing \cite{Gu2020}. They are also essential for early tests of quantum computing tasks \cite{quant_comp_tasks}, and quantum cryptography in quantum networks\cite{quant_networks}. Increasing the number of particles involved and the dimension of the GHZ state is essential both for foundational studies and practical applications. Motivated by this, a huge effort is being made by several experimental groups around the world to push the size of GHZ states. Photonic technology is one of the key technologies used to achieve this goal \cite{quant_comp_tasks,10photon}.

In 2017, Krenn, Gu and Zeilinger \cite{Quantum_graphs} discovered (and later extended \cite{Quantum_graphs_2, Quantum_graphs_3}) a bridge between experimental quantum optics and graph theory. They observed that large classes of quantum optics experiments (including those containing probabilistic photon pair sources, deterministic photon sources and linear optics elements) can be represented as edge-coloured graphs when there is \emph{no destructive interference}. Conversely, every edge-coloured graph can be translated into a concrete experimental setup. Suppose an experiment $\mathcal{E}$ corresponds to the edge-coloured graph $G_c$. $\mathcal{E}$ would lead to a valid GHZ state if and only if $c$ is PMValid on $G$. Moreover, if $c$ is a $k$-edge colouring, then the dimension of the GHZ state is $k$, and the number of particles in the GHZ state is equal to the number of vertices in $G$. Therefore the following question is equivalent to \cref{graph_question} 

\begin{question}
Can one characterize experiments (under the conditions described in \cite{Quantum_graphs}) which would lead to $|GHZ_{n,d}\rangle$, a $d$-dimensional $n$-particle GHZ state?
\end{question}

Vardi and Zhang \cite{vardi1, vardi2} also studied the computational complexity of several questions, which are inspired by this class of experiments. We note that while there are some experiments to construct GHZ states which does not fall into this category, these edge-coloured graphs capture many prominent ones. There is also an edge-weighted extension of this problem which arises from the discovery of new quantum interference effects \cite{Quantum_graphs_2}. Our characterisation does have some implications to this edge-weighted extension which we discuss in \cref{conclusion}. %The quantum physical details have been avoided for a better exposition and the interested reader may refer to the Appendix for them.

%We also note that while there is strong motivation to study \cref{graph_question} from quantum physics, our proofs are purely graph-theoretic and no background in quantum physics is needed to establish them.
\begin{remark}
For readers who are more familiar with the language of quantum physics, this sub-section might be too short; so, they are referred to the appendix or the earlier works \cite{Quantum_graphs, Quantum_graphs_2, Quantum_graphs_3, krenn2019questions}. The authors of this paper deal with the graph theoretic question posed by quantum physicists as it is; we are not well versed in the quantum physical aspects of this question for discussing the extent of its applications, impact or significance. To the best of our knowledge, finding a counterexample to Krenn-Gu conjecture (\cref{KGconj}) has some serious implications to the construction of GHZ states and proving Krenn-Gu conjecture in affirmation has implications in quantum resource theory. 
\end{remark}

\subsection{Notation}
For a graph $G$, let $V(G), E(G)$ denote the set of vertices and edges, respectively.
For $S\subseteq V(G)$, $G[S]$ denotes the induced subgraph of $G$ on $S$. The cardinality of a set $\cal{S}$ is denoted by $|\cal{S}|$. For a positive integer $r$, $[r]$ denotes the set $\{1,2\ldots,r\}$. We refer to a cycle with $n$ vertices as $C_n$ and the complete graph with $n$ vertices as $K_n$.  $d(v)$ denotes degree of a vertex $v$. $\delta(G)$ denotes the minimum degree of $G$ and $\Delta(G)$ denotes the maximum degree of $G$.

%We refer to the graphs whose $\mu(G)=1$ to be Type $1$ and $\mu(G)=2$ to be Type $2$. 
If $c$ is PMvalid $k$-edge-colouring on $G$, we say that $G_c$ is \textit{perfectly monochromatic} and $\mu(G,c)=k$. The reader may see that $\mu(G)$ is the maximum value of $\mu(G,c)$ over all edge-colourings $c$, which are PMvalid.

A graph is matching covered if every edge of it is part of at least one perfect matching. If an edge $e$ is not part of any perfect matching $M$, then we call the edge $e$ to be redundant. By removing all redundant edges from the given graph $G$, we get its unique maximum matching covered sub-graph $mcg(G)$. It is easy to see that $\mu(G)=\mu(mcg(G))$.

\subsection{Our results}
It is easy to see that if $G$ has at least one perfect matching, a trivial $1$-edge-colouring on $G$ is PMValid. Hence $\mu(G)=0$ if and only if $G$ does not have a perfect matching. This implies that for every graph $G$, which is non-isomorphic to $K_4$ and has a perfect matching, $\mu(G)$ is either $1$ or $2$ (from \cref{bogdanov}). However, the characterization of graphs based on the value of their matching index being $1$ or $2$ remained unknown. Our main result is such a characterization in \cref{2_characterization_result} of \cref{sec:graph_struct}, which in turn gives a complete characterization of all graphs based on the value of their matching index.

We also give a fast algorithm which runs in time $O(|V(G)||E(G)|)$ to find the matching index of a graph in \cref{algorithmandproof}. The bottleneck of our runtime is the sub-routine of finding the matching covered graph of $G$. 

Our characterization has an interesting application to the more general version of the problem i.e., the weighted case discussed in \cref{conclusion}. Note that the weighted case corresponds to the situation when the so-called destructive interference is allowed (equivalently when the edge weights are allowed to be complex real numbers). Where as the unweighted case translates to when destructive interference is not allowed. Krenn and Gu conjectured that even when complex weights are allowed, the dimension can not be greater than $2$ (except for $K_4$). We show how to use our characterization to resolve Krenn-Gu conjecture for the class of graphs $\cal{H}$ (defined in \cref{sec:graph_struct}). That is if $G$ is a graph such that in the unweighted case the matching index is $2$, then even by allowing complex weights on the edges or by allowing a more general edge colouring (details in \cref{conclusion}), surprisingly the dimension can not become higher than $2$. This fact is stated formally in \cref{type_2_resolution} in \cref{conclusion}. The consequence of our charcaterization to the general version of the quantum physics problem was appreciated by quantum physicists when they awarded the the first Quantum-Graph best paper award for making progress towards Krenn-Gu conjecture \cite{krenn_website}.

\section{Structure of graphs with matching index $2$}
%\subsection{Structure of graphs whose constructive matching index is $2$}
\label{sec:graph_struct}
%\textcolor{red}{two lines about multiedges - Rishi will do it}

%\textcolor{red}{Add figures and some intuitive explanation}

\begin{figure}[H]
    \centering
    \begin{minipage}{0.45\textwidth}
        \centering
%       \label{fig:cyclic_order_1}
       \includegraphics[width=55mm,height=55mm]{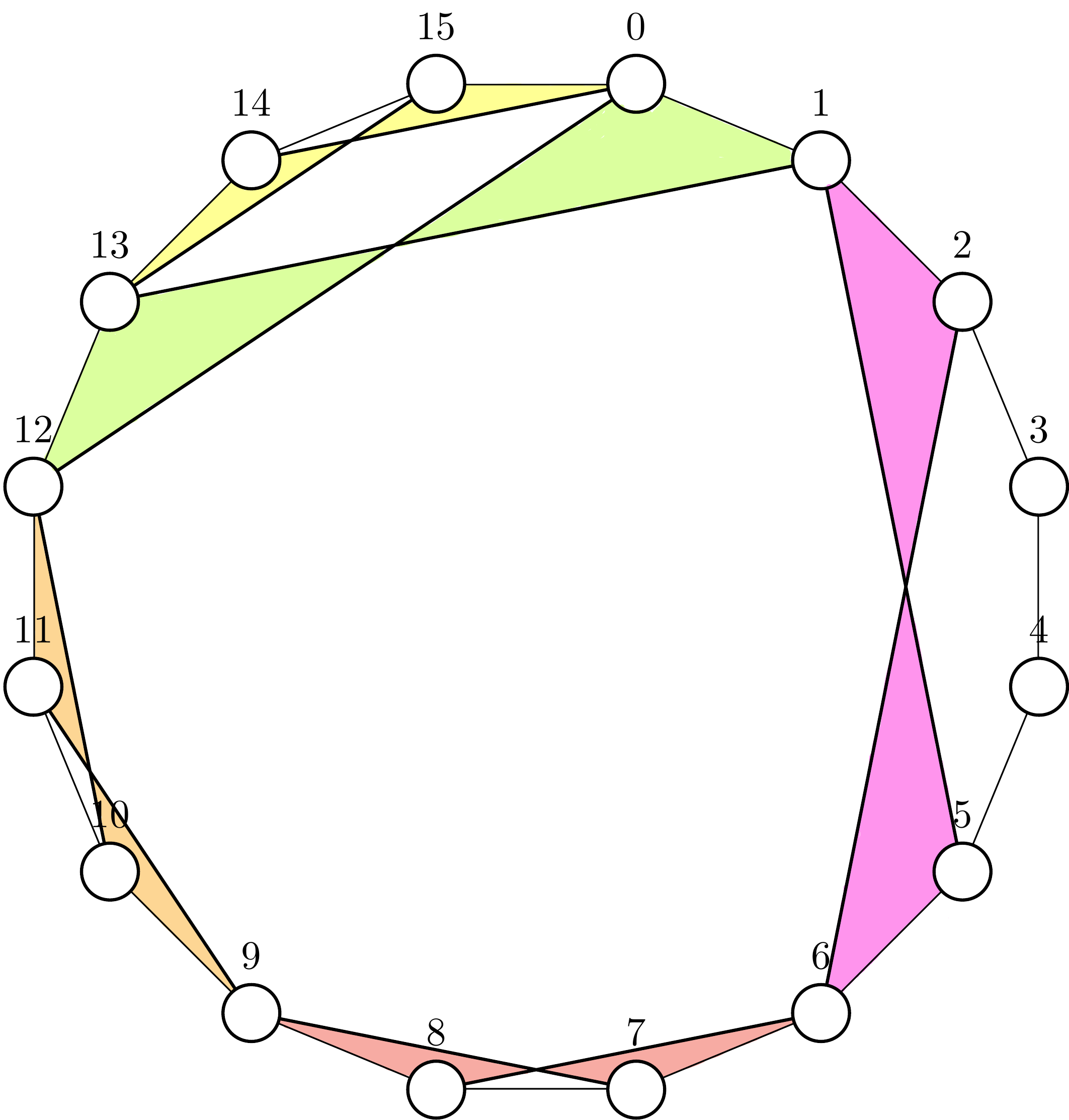}
%       \caption{(a)}
%        {\small (a)}
    \end{minipage}
    \hspace*{1cm}
    \begin{minipage}{0.45\textwidth}
        \centering
%\label{fig:cyclic_order_2}
\includegraphics[width=55mm, height=55mm]{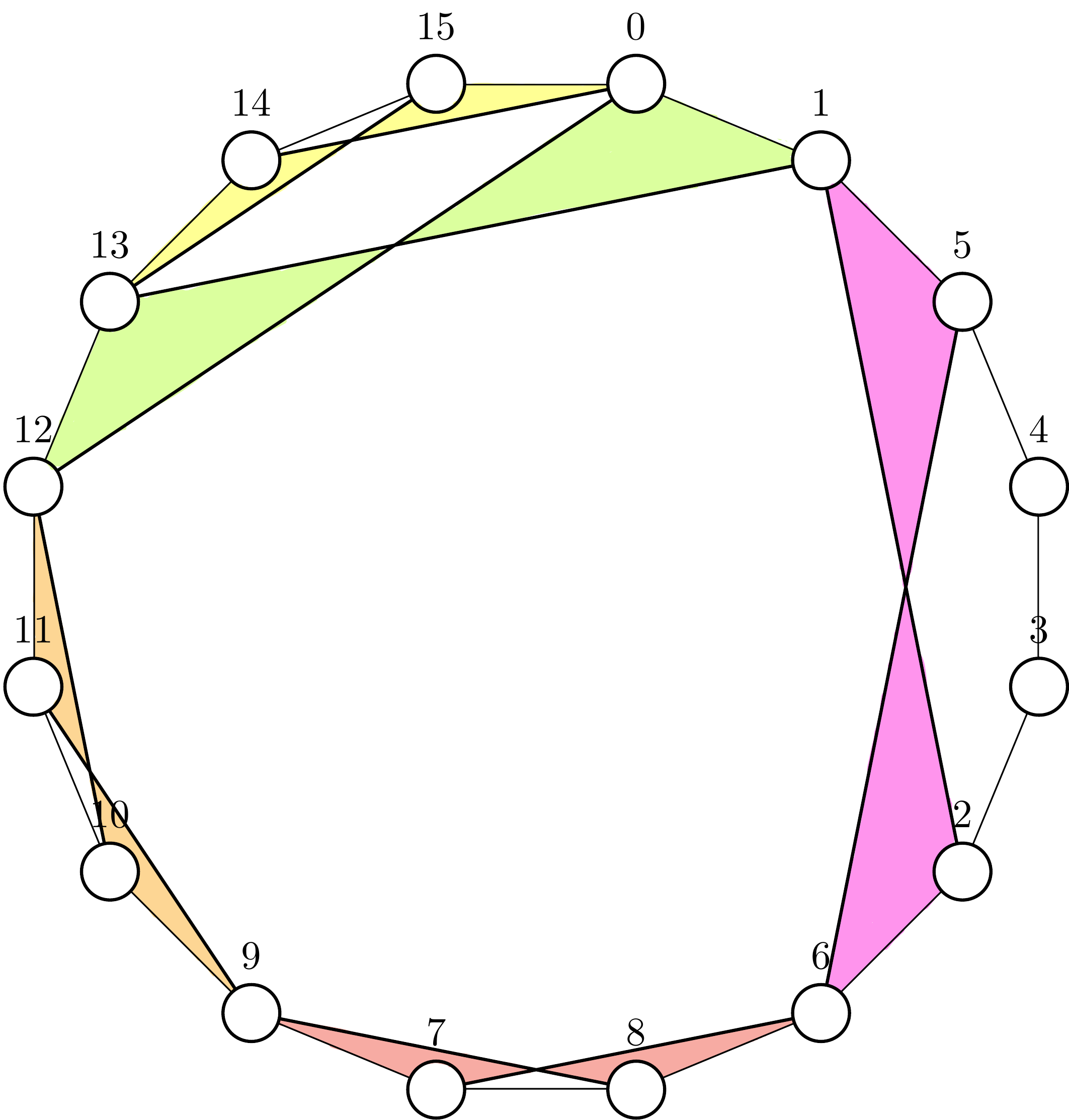}
%        \caption{(b)}    
\end{minipage}   
\caption{A graph $G$ satisfying
\cref{property1s}, \cref{property2s} and \cref{property3s} with respect to the cyclic orderings of two different Hamiltonian cycles. Note that in the figure on left, the ordering is with respect to the Hamiltonian cycle $C_1$ and the edge $e=\{1,5\}$ is a legal edge with respect to $G,C_1$. In the figure on right, the ordering is with respect to the Hamiltonian cycle $C_2$ and the same edge $e$ is now labelled $\{1,2\}$. Note that $e$ is a $C_2$-edge and not a legal edge with respect to $G,C_2$. Each drum is shaded with a different colour.}
\label{fig:type_h_graph}
\end{figure}
%\vspace{-2mm}

Let $G$ be a Hamiltonian graph of even order. Consider any Hamiltonian cycle $C$ and label its vertices be $0,1,2,3\ldots 2n-1$ in a clockwise direction along $C$. We interchangeably refer to vertex $i \mod 2n$ as $i$. The vertices $\{1,3\ldots,2n-1\}$ are referred to as odd vertices with respect to $C$ and the vertices $\{0,2\ldots,2n-2\}$ are referred to as even vertices with respect to $C$. The edges of the Hamiltonian cycle $C$ are defined to be \emph{$C$}-edges. The edges between two vertices of the same parity are defined to be \emph{legal} edges with respect to $C$. Every edge $e=\{i,j\}$ partitions the remaining vertices of graph into two parts ${\cal{P}}(e)=\{i+1,i+2,\ldots j-1\}$ and ${\cal{P}}'(e)=\{j+1,j+2\ldots,i-1\}$ with respect to $C$. If an edge $e'$ has one endpoint in ${\cal{P}}(e)$ and the other endpoint in ${\cal{P}}'(e)$, then $e'$ and $e$ are said to be \emph{crossing} each other with respect to $C$ and the pair of edges $e,e'$ is referred to as a \emph{crossing} pair with respect to $C$. 

Let the legal edges (with respect to $C$) $e=\{2i_1+1,2i_2+1\}$ and $e'=\{2j_1,2j_2\}$ cross each other (with respect to $C$) and without loss of generality, assume that the vertices $2i_1+1,2j_1,2i_2+1,2j_2$ appear in clockwise order on $C$. We say that $e,e'$ form a \emph{drum} with respect to $C$ if $e'=\{2j_1,2j_2\}=\{2i_2,2i_1\}$  or $e'=\{2j_1,2j_2\}=\{2i_1+2,2i_2+2\}$. In other words, an end vertex of $e$ and an end vertex of $e'$ must be adjacent on the cycle (that is, there should be a $C$-edge between them) and the remaining end vertices of $e,e'$ must also be adjacent on the cycle $C$. %For example the drums with respect to two different Hamiltonian cycles are shaded in \cref{fig:type_h_graph}. 
%Let ${\cal{H}}$ be the family of even-order matching-covered Hamiltonian graphs with at least $6$ vertices such that all graphs $G \in {\cal{H}}$ satisfy the following:
%For all Hamiltonian cycles $C$ of $G$, \cref{property1s}, \cref{property2s} and \cref{property3s} are satisfied with respect to the vertex labelling obtained from a clockwise cyclic ordering of $C$.

Let ${\cal{H}}$ be the family of even-order matching-covered Hamiltonian graphs with at least $6$ vertices such that all graphs $G \in {\cal{H}}$ satisfy the following: There \textit{exists} a Hamiltonian cycle $C$ of $G$ such that \cref{property1s}, \cref{property2s} and \cref{property3s} are satisfied with respect to the vertex labelling obtained from a clockwise cyclic ordering of $C$.

\begin{property}
\label{property1s}
All edges of $G$ are $C$-edges or legal edges with respect to $C$.
\end{property}

\begin{property}
\label{property2s}
Every crossing pair (with respect to $C$) of $G$ forms a drum with respect to $C$.
\end{property}

\begin{property}
\label{property3s}
Each legal edge (with respect to $C$) of $G$ is part of exactly one drum with respect to $C$.
\end{property}

An example graph $G$ is shown in \cref{fig:type_h_graph}, which satisfies \cref{property1s}, \cref{property2s} and \cref{property3s} with respect to the cyclic orderings along two different Hamiltonian cycles. As the properties are satisfied with respect to at least one cycle, $G \in {\cal{H}}$. %To verify that $G \in {\cal{H}}$, we have to consider the vertex labelling with respect to all Hamiltonian cycles of $G$ one by one. Though it may look a little surprising, the following theorem states that \cref{property1s}, \cref{property2s}, \cref{property3s} are satisfied with respect to \emph{all} the Hamiltonian cycles of $G$ since they are seen to be satisfied with respect to \emph{some} Hamiltonian cycle of $G$! 
This special class of graphs also happens to be equivalent to the set of matching covered graphs with a matching index of $2$.

\begin{theorem}
\label{2_characterization_result}
For any graph $H$ with $V(H)>4$, $\mu(H)=2$ if and only if $mcg(H)\in {\cal{H}}$.
\end{theorem}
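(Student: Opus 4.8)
The plan is to work throughout with $G = mcg(H)$, which is matching covered and satisfies $\mu(G)=\mu(H)$, and to exploit a single principle in both directions: a PMValid $2$-colouring forbids any perfect matching whose edges receive both colours. Note that two edges lying in a common perfect matching must get the same colour (that matching is monochromatic), and matching-coveredness guarantees every colour class carries a perfect matching; moreover $|V(G)|=|V(H)|>4$ is even, so $|V(G)|\ge 6$ and $G\not\cong K_4$, whence $\mu(G)\le 2$ by \cref{bogdanov}. The common first step of both directions is to pin down the colours of the Hamiltonian-cycle edges.

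For the reverse direction, assume $\mu(H)=2$, fix a PMValid $2$-colouring, and pick monochromatic perfect matchings $M_1,M_2$ of the two distinct colours. They are edge-disjoint, since a shared edge would be bichromatic, so $M_1\cup M_2$ is a vertex-disjoint union of even cycles covering $V(G)$. If this union had two or more cycles, taking the $M_1$-part on one cycle and the $M_2$-part on another (and either part elsewhere) would yield a perfect matching meeting both colour classes, a contradiction; hence $M_1\cup M_2$ is a single Hamiltonian cycle $C$. Labelling $V(G)$ as $0,1,\dots,2n-1$ along $C$ so that $M_1=\{\{2i,2i+1\}\}$ is colour $1$ and $M_2=\{\{2i+1,2i+2\}\}$ is colour $2$, the colour of every $C$-edge is forced by its parity pattern. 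Each remaining property is then proved by contradiction through the same device: completing an offending chord configuration to a perfect matching by filling the arcs of $C$ between the chords with $C$-edges, where a short parity count shows that filling an arc with $C$-edges uses edges of a single colour determined by the parities of the arc's endpoints. \cref{property1s} follows because a non-$C$-edge joining vertices of opposite parity splits $C$ into two arcs whose $C$-edge fillings have opposite colours, producing a bichromatic perfect matching; \cref{property2s} and \cref{property3s} follow by analogous counts showing that a crossing pair which is not a drum, or a legal edge lying in zero or in two drums, again forces a bichromatic perfect matching (existence of at least one drum per legal edge coming from matching-coveredness together with the parity obstruction that a legal edge's arc cannot be filled by $C$-edges alone).

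For the forward direction, let $G\in\mathcal{H}$ with Hamiltonian cycle $C$ and the induced labelling. I would colour each $C$-edge by its $M_1/M_2$ membership and each legal edge by the type ($M_1$- or $M_2$-associated) of its unique drum, which by \cref{property1s} and \cref{property3s} colours every edge exactly once; the perfect matchings $M_1$ and $M_2$ witness that both colour classes are nonempty. It then remains to show every perfect matching $M$ is monochromatic. The key lemma is that if $M$ contains a legal edge $e$, then the parity obstruction forces a crossing partner of $e$ — which by \cref{property2s} and \cref{property3s} is exactly its drum partner $e'$ — to lie in $M$ as well, and the drum $\{e,e'\}$ isolates two even arcs that are then matched independently; an induction on the number of legal edges in $M$ shows that $M\triangle M_1$ or $M\triangle M_2$ is a disjoint union of drum $4$-cycles, so $M$ is of a single colour. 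Combining $\mu(G)\ge 2$ from this valid colouring with $\mu(G)\le 2$ from \cref{bogdanov} gives $\mu(G)=2$.

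The main obstacle is the reverse direction's verification of \cref{property2s} and \cref{property3s}: unlike \cref{property1s}, ruling out a crossing pair that fails to be a drum, and a legal edge lying in two drums, requires a careful enumeration of how the offending chords partition $C$ into arcs and a precise determination of each arc's filling colour, so as to always expose two $C$-edges of different colours inside one perfect matching. Getting the parity/colour bookkeeping right in every sub-case, while simultaneously establishing the uniqueness half of \cref{property3s}, is where the real work lies; the forward direction's monochromaticity lemma rests on the same arc-filling parity principle and should follow cleanly once that principle is in place.
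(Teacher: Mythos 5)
Your overall architecture matches the paper's: both directions pivot on the fact that a red--blue pair of monochromatic perfect matchings must union to a single alternating Hamiltonian cycle, and your forward direction (colour the cycle alternately, colour each legal edge by its unique drum, prove the ``both-or-neither'' lemma for a drum's crossing pair via the parity of $|{\cal{P}}(e)|$, then replace drum pairs by their $C$-edges to reduce any perfect matching to one inside $C$) is essentially identical to the paper's proof of \cref{structure_proof_dir_2}. Likewise your exclusion of illegal edges is exactly \cref{odd_even_edges}, and your exclusion of a nice crossing pair that is not a drum is exactly \cref{nice_pairs_are_drums}.

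However, the reverse direction has a genuine gap at precisely the point you flag as ``the main obstacle,'' and the gap is not mere bookkeeping: your single device --- complete the offending chords to a perfect matching by filling the arcs of $C$ with $C$-edges --- \emph{cannot work} for a crossing pair of two odd--odd (or two even--even) legal edges, because all four arcs between the endpoints then have odd length, so no perfect matching through both chords can be completed with $C$-edges at all; indeed such a pair need not lie in any common perfect matching, so no direct bichromatic matching is exposed. The paper handles this case indirectly (\cref{legal_are_nice}): it first establishes that every legal edge lies in \emph{exactly one} drum, and then shows a same-parity crossing pair would force the even--even drum-partner of one edge into a second drum. Moreover, the existence half of \cref{property3s} is itself not covered by your stated justification: matching-coveredness plus the parity obstruction only yields \emph{some} edge of the matching crossing $e$, not one of the opposite parity class, which is what you need to invoke the nice-pair argument --- and ruling out a same-parity crossing partner is the very case above, so the naive argument is circular. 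The paper breaks the circle in \cref{exactly_one_drum} with an extremal argument you do not have: take an unsaturated legal edge of \emph{minimum partition weight}, note that by minimality every edge of the matching inside the smaller side is a $C$-edge or half of a full drum pair, hence matches equally many odd and even vertices, and then count ($j$ even versus $j-1$ odd vertices in the side) to force an even--even edge crossing $e$. Without this minimality-plus-counting device and the indirect derivation of niceness from drum-uniqueness, Properties \ref{property2s} and \ref{property3s} remain unproved, so the reverse direction of your proposal is incomplete.
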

We prove \cref{2_characterization_result} in \cref{sec:proof_of_algorithm}. %We restate \cref{2_characterization_result} in terms of SPQR trees in \cref{spqr}.

\section{Proof of \cref{2_characterization_result}}
\label{sec:proof_of_algorithm}
%\subsection{Notation}

\subsection{Preliminaries}
\label{subsec:proof_of_prelim}
Throughout this section, we assume that the graph $G$ is not isomorphic to $K_4$. We first prove some simple structural properties of an edge-coloured graph $G_c$ based on its $\mu(G,c)$ value. 

%If the two disjoint perfect matchings of an even cycle induce two different monochromatic vertex colourings, we call it a cycle with alternating colours.
If an even cycle contains disjoint monochromatic perfect matchings of different colours, we call it a cycle with alternating colours. We now present a simple theorem about edge-coloured graphs.
\begin{theorem}
\label{simple_classification_theorem}
For an unweighted edge-coloured graph $G_c$, where $G$ is the underlying graph non-isomorphic to $K_4$ and $c$ is the colouring, the statements $1$ through $4$ are true about the following conditions.
\setcounter{condition}{0}
\begin{condition}
\label{pm_graph}
$G_c$ is perfectly monochromatic.
\end{condition}

\begin{condition}
\label{pm_cond}
$G_c$ has a perfect matching.
\end{condition}

\begin{condition}
\label{ham_cond}
$G_c$ has a Hamiltonian cycle with alternating colours. 
\end{condition}
\begin{enumerate}
    \item $\mu(G,c)$ is not defined if and only if \cref{pm_graph} is false.
    \item $\mu(G,c)=0$ if and only if \cref{pm_graph} is true and \cref{pm_cond} is false.
    \item $\mu(G,c)=1$ if and only if \cref{pm_graph}, \cref{pm_cond} are true and \cref{ham_cond} is false. 
    \item $\mu(G,c)=2$ if and only if \cref{pm_graph}, \cref{pm_cond} and \cref{ham_cond} are true.
\end{enumerate}

\end{theorem}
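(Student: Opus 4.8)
The plan is to exploit the fact that the four cases on the right-hand side are mutually exclusive and jointly exhaustive, and that the four possible values of $\mu(G,c)$ are likewise mutually exclusive and, by \cref{bogdanov} (since $G\not\cong K_4$ throughout this section), exhaustive. Indeed \cref{ham_cond} can hold only when \cref{pm_cond} holds, because an alternating Hamiltonian cycle splits into two disjoint perfect matchings, so the three conditions generate exactly the four listed cases. Consequently it is enough to prove the four forward implications ``this condition-case $\Rightarrow$ this value of $\mu(G,c)$''; the biconditionals then follow automatically from the partition structure. Throughout I use the operative reading that $\mu(G,c)$ is defined precisely when every perfect matching is monochromatic (\cref{pm_graph}), and that in that case it equals the number of colour classes containing at least one perfect matching.

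With this in place, statements $1$ and $2$ are bookkeeping. If \cref{pm_graph} fails, $\mu(G,c)$ is undefined by definition. If \cref{pm_graph} holds but \cref{pm_cond} fails, then no colour class contains a perfect matching, so the count is $0$; conversely the count is $0$ only when $G$ has no perfect matching. Hence the entire content lies in separating $\mu(G,c)=1$ from $\mu(G,c)=2$ under the standing hypotheses that \cref{pm_graph} and \cref{pm_cond} hold. Since \cref{pm_cond} gives $\mu(G,c)\ge 1$ and \cref{bogdanov} gives $\mu(G,c)\le 2$ (restricting $c$ to its perfect-matching-bearing colour classes yields a PMValid colouring with that many colours), it remains only to establish the equivalence: \emph{at least two colour classes contain a perfect matching if and only if \cref{ham_cond} holds.}

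One direction is immediate. An alternating Hamiltonian cycle decomposes into its two disjoint perfect matchings, which are monochromatic and, by the definition of ``alternating colours'', of two different colours; so at least two colour classes contain a perfect matching. For the converse, suppose colour classes $1$ and $2$ contain perfect matchings $M_1$ and $M_2$. Since they have different colours they share no edge, so $M_1\cup M_2$ is a disjoint union of even cycles $D_1,\dots,D_t$ spanning $V(G)$, each alternating between colour-$1$ and colour-$2$ edges (each $D_i$ has length at least $4$, as $M_1\cap M_2=\emptyset$ in a simple graph). If $t=1$ this is exactly an alternating Hamiltonian cycle and \cref{ham_cond} holds.

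The key step, and the main obstacle, is to rule out $t\ge 2$: the naive union of two differently coloured perfect matchings need not be connected, so one must explain why the monochromatic hypothesis forbids several components. Here I would perform an alternating-cycle switch, replacing $M_1$ by $M_1'=(M_1\setminus D_1)\cup(M_2\cap D_1)$. This $M_1'$ is again a perfect matching, but it carries colour-$2$ edges on $D_1$ and colour-$1$ edges on $D_2$, hence is bichromatic, contradicting \cref{pm_graph}. Therefore $t=1$, establishing \cref{ham_cond}. I expect the remaining work to be purely routine: checking $M_1\cap M_2=\emptyset$ so that $M_1'$ is well defined, and invoking \cref{bogdanov} for the upper bound $\mu(G,c)\le 2$ that closes the $\mu(G,c)=1$ case by elimination.
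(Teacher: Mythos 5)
Your proposal is correct and takes essentially the same approach as the paper: the one substantive step in both is ruling out a disconnected union $M_1\cup M_2$ of two differently coloured monochromatic perfect matchings, and your alternating-cycle switch $M_1'=(M_1\setminus D_1)\cup(M_2\cap D_1)$ is exactly the paper's recombination of the red matching from one cycle with the blue matchings from the remaining cycles, with \cref{bogdanov} supplying the upper bound $\mu(G,c)\le 2$ in the same way. Your surrounding bookkeeping (statements 1 and 2, and deriving the converses from the partition structure) likewise mirrors the paper's proof.
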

\begin{proof}
Statement $1$ is true by definition. 

If \cref{pm_graph} is true and \cref{pm_cond} is false, as there are no perfect matchings, $\mu(G,c)=0$. If $\mu(G,c)=0$, $G_c$ must be perfectly monochromatic by Statement $1$. Therefore, if there is a perfect matching it must be monochromatic. But that would imply $\mu(G,c) \geq 1$. Therefore, there is no perfect matching in $G_c$.  Therefore, statement $2$ is true.

We prove that if \cref{pm_graph}, \cref{pm_cond} are true and \cref{ham_cond} is false, then $\mu(G,c)=1$. If \cref{pm_graph} and \cref{pm_cond} are true then $\mu(G,c) \geq 1$ by statement $1$ and statement $2$. Towards a contradiction, let  $\mu(G,c) \neq 1$. It follows that $\mu(G,c) \geq 2$. Therefore, there are at least two colours, say red and blue, with a monochromatic perfect matching. Consider the union of a red perfect matching and a blue perfect matching. It must be a disjoint union of even cycle(s) with alternating colours. Clearly, each cycle can be decomposed into two perfect matchings of different colours. If it is a disjoint union of at least two cycles, then by selecting the red perfect matching from one cycle and blue perfect matchings from the remaining cycles, we can construct a non-monochromatic perfect matching. This contradicts the assumption that \cref{pm_graph} is true. Therefore, the union of a red and a blue perfect matching is a Hamiltonian cycle with edges of alternating colours. But it is given \cref{ham_cond} is false. Contradiction.

If \cref{pm_graph}, \cref{pm_cond} and \cref{ham_cond} are true, since there are two monochromatic perfect matchings of different colours which can be formed by selecting the edges of the same colour from the Hamiltonian cycle with alternating colours, $\mu(G_c) \geq 2$. As \cref{pm_graph} is true, we know that $\mu(G,c) \leq 2$ by \cref{bogdanov}. Therefore, $\mu(G,c) = 2$. 

If $\mu(G,c)=1$, then it is easy to see that \cref{pm_graph}, \cref{pm_cond} are true. \cref{ham_cond} must be false, because if it is true, that would imply $\mu(G,c)=2$ as proved above. Similarly, if $\mu(G,c)=2$, then it is easy to see that \cref{pm_graph}, \cref{pm_cond} are true. \cref{ham_cond} must be true, because if it is false, that would imply $\mu(G,c)=1$ as proved above.
\end{proof}
%We prove \cref{simple_classification_theorem} in \cref{simple_classification_theorem_proof}

\subsection{Definitions}
\label{type_2_defintions}

%Note that the class of graphs with the same matching coverable graph $G$ have the same matching index. %So, we only study matching coverable graphs for the rest of the section.

For a graph $G$ of even order and with a Hamiltonian cycle $C$, we define some more notation with respect to $G$ and $C$. Let the vertices be $0,1,2,3\ldots 2n-1$ in clockwise direction along the cycle $C$. We interchangeably refer vertex $i \mod 2n$ as $i$. An odd-even edge is called \emph{illegal} if it is not a $C$-edge.  Recall that the odd-odd edges and even-even edges are called \emph{legal} edges. It is easy to see that the edges of the graph can be classified into legal edges, illegal edges and $C$-edges.

Recall that every edge $e=\{i,j\}$ partitions the remaining vertices of cycle into two parts ${\cal{P}}(e)=\{i+1,i+2,\ldots j-1\}$ and ${\cal{P}}'(e)=\{j+1,j+2\ldots,i-1\}$. The partition weight of $e$ is defined to be equal to $\min \{|{\cal{P}}(e)|,|{\cal{P}}'(e)|\}$. Note that both $|{\cal{P}}(e)|,|{\cal{P}}'(e)|$ are odd (even) for legal (illegal) edges. If $|i-j|$ is odd, we define ${\mathbf{M}}_{i,j}=\{\{i,i+1\},\{i+2,i+3\}\ldots,\{j-1,j\}\}$, a perfect matching formed by the alternate $C$-edges in clockwise order from $i$ to $j$.

Recall that if an edge $e'$ has one endpoint in ${\cal{P}}(e)$ and the other endpoint in ${\cal{P}}'(e)$, then $e'$ and $e$ are said to be crossing each other. We say that a pair of crossing edges form a \emph{legal crossing pair} if both the edges are legal. We say that a \emph{legal} crossing pair is a \emph{nice} crossing pair if one of the edges is odd-odd and the other edge is even-even.

%\textbf{Definition of drum:}
%Let a nice crossing pair be formed by edges $e=\{2i_1+1,2i_2+1\},e'=\{2j_1,2j_2\}$ and without loss of generality, assume that the vertices $2i_1+1,2j_1,2i_2+1,2j_2$ appear in clockwise order on $C$. We say that $e,e'$ form a drum if $e'=\{2j_1,2j_2\}=\{2i_2,2i_1\}$ or $e'=\{2j_1,2j_2\}=\{2i_1+2,2i_2+2\}$. In other words an end vertex of $e$ and an end vertex of $e'$ must be adjacent on the cycle (that is, there should be a $C$-edge between them) and the remaining end vertices of $e,e'$ must also be adjacent on the cycle $C$. 

Recall that a drum $D$, say with vertices $2i_1+1,2i_1+2,2i_2+1,2i_2+2$ partitions the remaining vertices of the cycle into two parts, ${\cal{P}}(D)=\{2i_2+3,2i_1+4,\ldots 2i_1\}$ and ${\cal{P}}'(D)=\{2i_1+3,2i_2+4\ldots 2i_2\}$. It follows that $|{\cal{P}}(D)|,|{\cal{P}}'(D)|$ are even for a drum by the definition. We refer to the paths along $C$ from $2i_2+2$ to $2i_1+1$ through ${\cal{P}}(D)$ and from $2i_1+2$ to $2i_2+1$ through ${\cal{P}}'(D)$ to be straps of the drum. The length of a strap is the number of edges on the path. We refer to $\min \{|{\cal{P}}(D)|,|{\cal{P}}'(D)|\}$ as the partition weight of the drum $D$. Note that a drum is a $4$-cycle with $2$ $C$-edges forming a pair of opposite edges and a nice crossing pair forming the remaining two opposite edges.

It is important to note that even though there might be structures similar to drums formed by two illegal crossing edges, which partition the cycle into two odd parts, we \emph{do not} refer to them as drums. Only nice crossing pairs can form drums. It follows that, if we colour the edges of the Hamiltonian cycle with alternating colours, both the $C$-edges of a drum will get the same colour. We make a few observations regarding the Hamiltonian cycles of $G$ in \cref{sec:observations}

\vspace{3mm}

\fbox{
\parbox{0.9\textwidth}{
We emphasize that legal edges, illegal edges, $C$-edges, crossing pairs and drums are defined with respect to a graph and a vertex labelling along the clockwise direction of a Hamiltonian cycle $C$. For example, for a graph $G$, the legal edges with respect to $C$ might not be legal edges with respect to a different Hamiltonian cycle $C'$. See \cref{fig:type_h_graph}.}}

%\vspace{3mm}

\subsection{Observations}
\label{sec:observations}

We refer to a Hamiltonian cycle with alternating red-blue edges as a red-blue alternating Hamiltonian cycle $C$. For graphs with such a Hamiltonian cycle $C$, whose vertices are labelled in clockwise cyclic order along $C$, we use a colouring convention that the edges of $C$ of the form $\{2i-1,2i\}$ be red and the edges $\{2i,2i+1\}$ be blue. We call a drum in $G_c$ monochromatic if all $4$ edges of the drum are all of the same colour.

\begin{condition}
\label{condition_for_observations}
$G$ is a matching covered graph with a colouring $c$ for which $\mu(G,c)=2$ and {$|V(G)|=2n\geq 6$}. From \cref{simple_classification_theorem}, it has at least one red-blue alternating Hamiltonian cycle, say $C$. Label the vertices in clockwise cyclic order with respect to $C$ so that colouring convention is satisfied.
\end{condition}
We now make some observations about unweighted edge-coloured graphs that satisfy \cref{condition_for_observations}, which we will later use to prove the structural characterization of graphs where $\mu(G)=2$. %Note that when \cref{condition_for_observations} is true, from \cref{no_bi_in_unweighted}, there are no bi-chromatic edges as it is a matching covered graph. Therefore, we need to consider only edge colourings where each edge is monochromatic. We now prove some observations for this case, which we will later use to prove the structural characterization of graphs where $\mu(G)=2$.
\begin{remark}
Note that all the observations made below for $G_c$ are with respect to a vertex labelling along the clockwise cyclic ordering of red-blue alternating Hamiltonian cycle $C$ from \cref{condition_for_observations}. For instance, in \cref{odd_even_edges}, we only claim that there are no illegal edges with respect to a cyclic ordering along $C$. %It might be the case that there are illegal edges with respect to a cyclic ordering along a different Hamiltonian cycle $C'$.
\end{remark}

\begin{observation}
\label{odd_even_edges}
If \cref{condition_for_observations} is true, $G_c$ has no illegal edges.
\end{observation}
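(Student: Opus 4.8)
The plan is to argue by contradiction: assuming an illegal edge exists, I will construct an explicit perfect matching through it whose edges are not all of one colour, contradicting that $G_c$ is perfectly monochromatic. Recall that this latter fact holds because $\mu(G,c)=2$ forces \cref{pm_graph} to be true (via \cref{simple_classification_theorem}), so every perfect matching of $G_c$ must be monochromatic.

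Concretely, suppose $e=\{i,j\}$ is an illegal edge, so $i$ and $j$ have opposite parity and, without loss of generality, $i$ is even and $j$ is odd. Since $e$ is not a $C$-edge, neither ${\cal{P}}(e)$ nor ${\cal{P}}'(e)$ is empty; and because $e$ is odd-even, both $|{\cal{P}}(e)|$ and $|{\cal{P}}'(e)|$ are even, so each part has size at least $2$. This lets me cover ${\cal{P}}(e)=\{i+1,\ldots,j-1\}$ by the $C$-edge matching ${\mathbf{M}}_{i+1,j-1}$ and ${\cal{P}}'(e)=\{j+1,\ldots,i-1\}$ by ${\mathbf{M}}_{j+1,i-1}$ (both well defined, as $j-i$ is odd makes each index gap odd). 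Then $M=\{e\}\cup {\mathbf{M}}_{i+1,j-1}\cup {\mathbf{M}}_{j+1,i-1}$ is a perfect matching of $G$ containing $e$, since its edges are exactly $e$ together with $C$-edges partitioning the two parts.

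The crux is tracking colours. In ${\mathbf{M}}_{i+1,j-1}$ every edge has the form $\{a,a+1\}$ with $a$ ranging over $i+1,i+3,\ldots$, all of which are odd (as $i$ is even); by the colouring convention of \cref{condition_for_observations} each such $C$-edge is red, so this strap matching is entirely red and, by the size bound above, non-empty. Symmetrically, the first coordinates of the edges in ${\mathbf{M}}_{j+1,i-1}$ are $j+1,j+3,\ldots$, all even (as $j$ is odd), so every such edge is blue and this strap matching is likewise non-empty. Hence $M$ contains both a red and a blue edge, i.e.\ $M$ is a non-monochromatic perfect matching, contradicting that all perfect matchings of $G_c$ are monochromatic. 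Therefore $G_c$ has no illegal edges.

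The only real care needed is the bookkeeping: confirming that an illegal edge forces \emph{both} parts to be non-empty and of even size (so that the two strap matchings exist and are simultaneously present in $M$), and that the parity of the endpoints of $e$ pins down opposite colours for the two straps. Everything else is a direct application of the $C$-edge matching notation ${\mathbf{M}}_{\cdot,\cdot}$ and the alternating colouring convention, so I expect no serious obstacle beyond this parity accounting.
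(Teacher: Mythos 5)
Your proof is correct and follows essentially the same route as the paper: assume an illegal edge $e=\{i,j\}$ exists, build the perfect matching $\{e\}\cup{\mathbf{M}}_{i+1,j-1}\cup{\mathbf{M}}_{j+1,i-1}$ from alternating $C$-edges on the two sides, and observe that the two strap matchings are non-empty and of opposite colours by the colouring convention, contradicting that $G_c$ is perfectly monochromatic. The paper merely fixes the illegal edge as $\{1,2j\}$ without loss of generality, whereas you carry general endpoints with the same parity bookkeeping; there is no substantive difference.
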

\begin{proof}
Towards a contradiction, let there be an illegal edge. Let it be $\{1,2j\}$ for some $j\in[2,n-1]$, without loss of generality. Consider the perfect matching $M=\{\{1,2j\}\}\bigcup {\mathbf{M}}_{[2,2j-1]} \bigcup {\mathbf{M}}_{[2j+1,0]}$, see \cref{type_2_defintions} for notation. As an illegal edge cannot be a $C$-edge, both ${\mathbf{M}}_{[2,2j-1]}, {\mathbf{M}}_{[2j+1,2n]}$ are non-empty. Observe that ${\mathbf{M}}_{[2,2j-1]}$ has colour blue and ${\mathbf{M}}_{[2j+1,0]}$ has colour red by the colouring convention of $C$. Therefore, the perfect matching $M$ is a non-monochromatic perfect matching. But all perfect matchings of $G_c$ are monochromatic as $G_c$ is perfectly monochromatic. Contradiction.
\end{proof}

\begin{observation}
\label{nice_pairs_are_drums}
If \cref{condition_for_observations} is true, all nice crossing pairs of $G_c$ must form monochromatic drums.
\end{observation}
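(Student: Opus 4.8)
The plan is to use the perfectly-monochromatic hypothesis as a forcing device: for any nice crossing pair I would build one specific perfect matching that contains both edges of the pair, and then let the demand that this matching be monochromatic do all the work. Throughout I use the colouring convention of \cref{condition_for_observations}, under which a $C$-edge $\{v,v+1\}$ is blue precisely when $v$ is even and red precisely when $v$ is odd.

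Concretely, write the odd--odd edge of the pair as $e=\{2i_1+1,2i_2+1\}$ and the even--even edge as $e'=\{2j_1,2j_2\}$, and assume without loss of generality (using that $e,e'$ cross) that $2i_1+1,\,2j_1,\,2i_2+1,\,2j_2$ appear in this clockwise order on $C$. These four vertices cut $C$ into four arcs. First I would verify, by a one-line parity count using that the ends of $e$ are odd and those of $e'$ are even, that every arc spans an even number of vertices; hence each arc is perfectly matched by the alternating $C$-edge matching $\mathbf{M}_{\cdot,\cdot}$ of \cref{type_2_defintions}. Setting $M=\{e,e'\}\cup(\text{the four arc matchings})$ then gives a genuine perfect matching of $G$, since every edge it uses is either a $C$-edge or one of $e,e'$.

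The heart of the argument is the colour bookkeeping. Reading off the starting parities shows that two of the four arcs are matched entirely in blue and the other two entirely in red. As $G_c$ is perfectly monochromatic, $M$ is monochromatic, so it cannot carry both a blue and a red arc-edge; therefore either both red arcs are empty or both blue arcs are empty. Spelling out ``arc empty'' as index equalities yields $2j_1=2i_1+2,\ 2j_2=2i_2+2$ in the first case and $2j_1=2i_2,\ 2j_2=2i_1$ in the second, i.e.\ $e'=\{2i_1+2,2i_2+2\}$ or $e'=\{2i_1,2i_2\}$ --- exactly the two ways the definition allows $e,e'$ to form a drum. Because $2n\ge 6$, at least one arc is non-empty, so the surviving arc colour is forced, and monochromaticity of $M$ then pushes $e$ and $e'$ onto that same colour. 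Since the two $C$-edges of the resulting drum already share that colour under the convention, all four drum edges agree and the drum is monochromatic.

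I expect the only real difficulty to be bookkeeping rather than ideas: keeping the cyclic indices and parities straight, and making sure the boundary cases in which some arcs degenerate to the empty set are handled without breaking either the parity count or the well-definedness of $M$. Once the four-arc decomposition and its two-colour split are set up correctly, the forcing is immediate.
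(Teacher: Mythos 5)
Your proposal is correct, and its first half --- the perfect matching $M=\{e,e'\}\cup(\text{four alternating arc matchings})$, with the parity check that each arc is even and the observation that two arcs are forced red and two blue --- is exactly the paper's construction for showing every nice crossing pair forms a drum. Where you genuinely diverge is in the monochromaticity half: the paper proves it with a \emph{second} matching, taking the all-red perfect matching of $C$, swapping the drum's two non-adjacent $C$-edges for the crossing pair $\{e_3,e_4\}$, and using $|V(G)|\geq 6$ to keep a red witness edge in the swapped matching; you instead squeeze both conclusions out of the single matching $M$, noting that once one colour class of arcs is forced empty, the surviving arcs (non-empty since $2n\geq 6$) pin the colour of all of $M$, hence of $e$ and $e'$, and this colour agrees with the drum's $C$-edges under the convention (your index bookkeeping $e'=\{2i_1+2,2i_2+2\}$ versus $e'=\{2i_2,2i_1\}$, and the corresponding blue/red cases, checks out). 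Your unification is more economical for this observation; the paper's swap argument is slightly more modular --- the ``replace drum $C$-edges by the crossing pair (or vice versa) inside a monochromatic matching'' move is reused later, e.g.\ in the proof of \cref{structure_proof_dir_2} --- but both are sound, and your degenerate-arc caveat (empty arcs when the drum's strap has length one) is the only boundary case and is handled by your $2n\geq 6$ remark.
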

\begin{proof}
First we prove that all nice crossing pairs form a drum. Towards a contradiction, let there be a nice crossing pair $\{2i_1+1,2i_2+1\}$ and $\{2j_1,2j_2\}$ which do not form a drum. Without loss of generality, we assume that $2i_1+1,2j_1,2i_2+1,2j_2$ are in a clockwise direction on the cycle $C$. There is a perfect matching 
\begin{multline*}
    {\mathbf{M}}=\{\{2i_1+1,2i_2+1\},\{2j_1,2j_2\}\} \bigcup {\mathbf{M}}_{2i_1+2,2j_1-1}\bigcup {\mathbf{M}}_{2j_1+1,2i_2}\\ \bigcup {\mathbf{M}}_{2i_2+2,2j_2-1}\bigcup {\mathbf{M}}_{2j_2+1,2i_1}
\end{multline*}
We see that ${\mathbf{M}}_{2j_2+1,2i_1}\bigcup {\mathbf{M}}_{2j_1+1,2i_2}$ is contained in the red colour class and ${\mathbf{M}}_{2i_1+2,2j_1-1}\bigcup{\mathbf{M}}_{2i_2+2,2j_2-1}$ is contained in blue colour class by the colouring convention of $C$. If either of these two is empty, the crossing pair forms a drum. Therefore both are non-empty, and $M$ is a non-monochromatic perfect matching. But all perfect matchings of $G_c$ are monochromatic as $G_c$ is perfectly monochromatic. This is a contradiction. Therefore, all nice crossing pairs must form drums.

Now we show that every drum is monochromatic. Towards a contradiction, let there be a non-monochromatic drum. Recall (see the definition of a drum) that non-adjacent $C$-edges $e_1, e_2$ of a drum are of the same colour, say red. Therefore, $e_1, e_2$ are part of the red perfect matching $M$ formed by all red edges of $C$. Let the crossing pair of the drum be $e_3,e_4$. Consider the perfect matching $M'=M-\{e_1,e_2\}\bigcup\{e_3,e_4\}$. Since $G$ has at least $6$ vertices, $M-\{e_1,e_2\}$ has at least one edge of the red colour. As it is a non-monochromatic drum, at least one of $e_3,e_4$ must be blue. Therefore, $M'$ is a non-monochromatic perfect matching. But all perfect matchings of $G_c$ are monochromatic as $G_c$ is perfectly monochromatic. This is a contradiction. Therefore, all nice crossing pairs must form monochromatic drums.
\end{proof}

\begin{observation}
\label{exactly_one_drum}
If \cref{condition_for_observations} is true, every legal edge of $G_c$ is contained in exactly one drum.
\end{observation}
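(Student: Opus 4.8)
The plan is to establish the two inclusions ``$e$ lies in \emph{at most} one drum'' and ``$e$ lies in \emph{at least} one drum'' separately, the first being a short colouring argument and the second the real content.

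For the upper bound I would argue purely from the colouring. Fix a legal edge $e$; say $e=\{2i_1+1,2i_2+1\}$ is odd--odd (the even--even case is symmetric). By the definition of a drum, the only even--even edges that can pair with $e$ to form a drum are $p_1=\{2i_1,2i_2\}$ and $p_2=\{2i_1+2,2i_2+2\}$, so a priori $e$ lies in at most two drums. The two $C$-edges of the drum through $p_1$ are $\{2i_1,2i_1+1\}$ and $\{2i_2,2i_2+1\}$, both blue, whereas those of the drum through $p_2$ are $\{2i_1+1,2i_1+2\}$ and $\{2i_2+1,2i_2+2\}$, both red. If both $p_1$ and $p_2$ were edges of $G$, each would cross $e$ and form a nice crossing pair with it, so by \cref{nice_pairs_are_drums} each would be a \emph{monochromatic} drum; the first would force $e$ to be blue and the second would force $e$ to be red, a contradiction. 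Hence at most one of $p_1,p_2$ is present and $e$ lies in at most one drum.

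For the lower bound I would start from matching-coveredness: $e$ lies in some perfect matching $M$, which is monochromatic because $G_c$ is perfectly monochromatic. Up to the symmetries of the labelled cycle (reversing the cyclic order interchanges the two colour classes while fixing parities, and rotating by one vertex interchanges parities while interchanging colours, and together these act transitively on the four cases), it suffices to treat $e=\{2i_1+1,2i_2+1\}$ odd--odd with $M$ red. By \cref{odd_even_edges} there are no illegal edges, and a $C$-edge cannot cross $e$, so every edge of $M$ crossing $e$ is legal; since $|{\cal P}(e)|$ is odd, at least one edge of $M$ crosses $e$. If any such crossing edge is even--even it forms a nice crossing pair with $e$, and \cref{nice_pairs_are_drums} immediately places $e$ in a drum.

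The main obstacle is the remaining case, where \emph{every} edge of $M$ crossing $e$ is odd--odd, and I would attack it by an extremal argument. Take $e$ to be a legal edge of \emph{minimum} partition weight lying in no drum, and let ${\cal P}(e)$ be its smaller side; note that ``no drum through $e$'' gives, via \cref{nice_pairs_are_drums}, that \emph{no} even--even edge crosses $e$ at all. The first vertex $2i_1+2$ of ${\cal P}(e)$ has its red $C$-edge $\{2i_1+1,2i_1+2\}$ blocked by $e$, so $M$ must match it by a red even--even edge $g_1=\{2i_1+2,y\}$; this edge does not cross $e$, hence is internal to ${\cal P}(e)$ and has strictly smaller partition weight, so by minimality $g_1$ lies in a drum, necessarily red, with partner $\{2i_1+1,y-1\}$. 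Using this nested drum I would construct a perfect matching replacing the two red edges $e,g_1$ by the red $C$-edge at $2i_1+1$ together with a \emph{blue} $C$-edge, yielding a non-monochromatic perfect matching and contradicting perfect monochromaticity. The delicate point I expect to cost the most effort is that this swap is immediate only when the nested drum already reaches the far endpoint $2i_2$ (so that the two leftover vertices are $C$-adjacent); the general position requires descending through the chain of nested drums inside ${\cal P}(e)$, and arranging that this descent terminates with $C$-adjacent leftover vertices is the step that needs the most care.
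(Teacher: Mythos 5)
Your ``at most one drum'' half is correct and is essentially the paper's own argument: the two candidate partners $p_1,p_2$ would give drums whose $C$-edges have opposite colours, and since every nice crossing pair forms a \emph{monochromatic} drum (\cref{nice_pairs_are_drums}), at most one candidate can be present. Your reduction of the ``at least one drum'' half to the case where every $M$-edge crossing $e$ is odd--odd is also sound. But the remaining case is a genuine gap, and you say so yourself: the claim that the descent through nested drums ``terminates with $C$-adjacent leftover vertices'' is exactly the content that is missing, and nothing you have established forces it. Concretely, after the first step your nested drum has partner $h_1=\{2i_1+1,y-1\}$, which passes through an endpoint of $e$ and hence is \emph{not} in $M$; the vertex $y-1$ must then be matched by a red odd--odd edge, and that edge may itself cross $e$ (at this stage of the paper nothing forbids odd--odd edges crossing $e$ --- the observation that all legal crossing pairs are nice comes \emph{later} and is proved \emph{using} the present observation, so you cannot invoke it). Once an $M$-edge escapes across $e$, the nesting inside ${\cal P}(e)$ breaks and your swap (red $C$-edge plus blue $C$-edge) no longer closes up to a perfect matching except in the special position $y=2i_2$ that you already identified.

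The root of the difficulty is your choice of minimal counterexample. You minimise over legal edges lying in \emph{no} drum, which only tells you that each internal legal edge of $M$ lies in \emph{some} drum of $G$ --- with a partner that, as above, need not belong to $M$. The paper instead minimises over a stronger failure: it calls $e_1$ \emph{saturated} if every perfect matching containing $e_1$ also contains a drum partner of $e_1$, and takes $e$ of minimum partition weight among non-saturated edges. Minimality then forces every legal $M$-edge internal to the smaller side $[2,2j]$ to have its drum partner \emph{inside $M$} (and that partner stays internal, since the endpoints of $e$ are blocked), so the internal $M$-edges decompose into $C$-edges and odd--odd/even--even drum pairs, each covering equally many odd and even vertices. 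Since $[2,2j]$ has one more even vertex than odd, some even vertex is matched across $e$; by \cref{odd_even_edges} its partner is even, giving an even-even edge of $M$ crossing $e$, hence by \cref{nice_pairs_are_drums} a drum partner of $e$ inside $M$ --- contradicting the choice of $M$. This parity count is unavailable under your weaker minimality (in your notation, one only gets $2a=2b+t+1$ with $t$ odd, which is consistent), so the fix is not a matter of care in the descent but of strengthening the induction hypothesis to the saturation property; with it, no matching surgery is needed at all.
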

\begin{proof}
Let the legal edge be $e=\{1,2j+1\}$ and of colour red, without loss of generality. Recall that $e$ can only be part of the drums with non-adjacent cycle edges $\{1,2\},\{2j+1,2j+2\}$ or $\{0,1\},\{2j,2j+1\}$ by definition. We claim that the latter possibility can be ruled out. This is because $e$ cannot form a drum with the legal edge $e'=\{0,2j\}$, since $\{e,e'\}$ would form a nice crossing pair and the drum formed by $\{e,e'\}$ should be monochromatic by \cref{nice_pairs_are_drums}; However, $\{0,1\},\{2j,2j+1\}$ which would form the $C$-edges of that drum are of colour blue by the colouring convention of $C$. Therefore, $e$ can be part of at most one drum. A similar argument holds even if $e$ was of the colour blue or $e$ was an even-even edge.

We define a legal edge $e_1$ to be saturated if, for every perfect matching ${M}$ containing $e_1$, there exists another legal edge $e_2 \in {M}$ forming a drum with $e_1$. We now claim that all legal edges are saturated. Suppose not. Without loss of generality, let $e=(1,2j+1)$ have the minimum partition weight among all the legal edges that are not saturated. Let $[2,2j]$ be the smaller among ${\cal{P}}(e)$ and ${\cal{P}}'(e)$, without loss of generality. Let ${M}$ be the perfect matching which makes $e$ not saturated, i.e. there is no edge $e'$ in ${M}$, which forms a drum with $e$. By assumption on minimality of $e$, all the edges part of ${M}$ with both its vertices in $[2,2j]$ would be $C$-edges and the pairs of non-$C$-edges of drums. It is easy to see that a pair of non-$C$-edges of drums match the same number of odd vertices and even vertices, and a $C$-edge also matches the same number of odd vertices and even vertices. Therefore, all the edges part of ${M}$ with both its vertices in $[2,2j]$ match the same number of odd vertices and even vertices in  $[2,2j]$. However, as there are $j$ even vertices and $j-1$ odd vertices in $[2,2j]$, an even vertex in $[2,2j]$ must match with a vertex in $[2j+2,0]$. Such a vertex in $[2j+2,0]$ must also be even since illegal edges are absent by \cref{odd_even_edges}. It follows that this even-even edge crosses $e$, and since $e$ is an odd-odd edge, they form a nice crossing pair. Therefore, they must form a drum from \cref{nice_pairs_are_drums} and hence $e$ is saturated. Contradiction. 

Since the graph is matching covered, we know that every legal edge is part of at least one perfect matching. As every legal edge is saturated, it follows that every legal edge is also part of a drum.
\end{proof}

\begin{observation}
\label{cycle_edge_in_one_drum}
If \cref{condition_for_observations} is true, each $C$-edge of $G_c$ can be part of at most one drum.
\end{observation}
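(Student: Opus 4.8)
The plan is to prove \cref{cycle_edge_in_one_drum}: under \cref{condition_for_observations}, each $C$-edge lies in at most one drum. I would argue by contradiction, supposing that a single $C$-edge $e$ belongs to two distinct drums. Recall that a drum is a $4$-cycle consisting of two non-adjacent $C$-edges (which receive the same colour) and a nice crossing pair (one odd-odd legal edge and one even-even legal edge) forming the other two sides. So if the $C$-edge $e=\{v,v+1\}$ lies in two drums $D_1$ and $D_2$, then $e$ is the ``anchor'' $C$-edge of each, and each drum attaches to $e$ a pair of legal edges incident to $v$ and $v+1$ respectively.

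First I would set up coordinates using the colouring convention. Take $e=\{2i-1,2i\}$, which is red by convention. A drum containing $e$ as one of its two $C$-edges is pinned down by the legal edges emanating from its endpoints: the drum uses legal edges $\{2i-1,a\}$ and $\{2i,b\}$ where $\{2i-1,a\}$ is odd-odd, $\{2i,b\}$ is even-even, and the two legal edges cross so that $a,b$ and the second $C$-edge $\{a,a\pm1\}=\{b\mp1,b\}$ close up the $4$-cycle. The key point is that, by the drum definition, once the anchor $C$-edge $e$ is fixed, specifying which legal edge through $2i-1$ is used determines the entire drum: the partner legal edge through $2i$ and the second $C$-edge are forced, since the remaining endpoints must be cycle-adjacent. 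Hence two distinct drums on $e$ must use two distinct odd-odd legal edges at $2i-1$, say $\{2i-1,a_1\}$ and $\{2i-1,a_2\}$ with $a_1\neq a_2$, and correspondingly two distinct even-even legal edges at $2i$.

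Now I would derive the contradiction by building a non-monochromatic perfect matching, which is forbidden since $G_c$ is perfectly monochromatic. The natural move is to take one legal edge from $D_1$ and a suitable legal edge from $D_2$ so that together they cross in a way that does \emph{not} form a drum, then invoke \cref{nice_pairs_are_drums} or directly the $\mathbf{M}_{i,j}$ construction. Concretely, the two odd-odd legal edges $\{2i-1,a_1\}$ and $\{2i-1,a_2\}$ share the vertex $2i-1$, so they cannot both be in one matching; instead I would pair the odd-odd edge of $D_1$ with the even-even edge of $D_2$ (both avoiding the shared anchor appropriately) and check their relative cyclic position. If these two legal edges form a nice crossing pair that is not a drum, \cref{nice_pairs_are_drums} is directly contradicted. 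If they happen to be nested or disjoint rather than crossing, I would instead use the explicit perfect-matching construction with $\mathbf{M}_{i,j}$ segments: filling the cyclic arcs between the four relevant endpoints with alternating $C$-edges and verifying via the colouring convention that the red and blue $C$-edge segments are both nonempty, yielding a bichromatic perfect matching.

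The main obstacle I anticipate is the case analysis on the relative positions of the four legal-edge endpoints from the two drums on the cycle, together with a degenerate configuration: because a drum also has a \emph{second} anchor $C$-edge, I must make sure ``two drums sharing the $C$-edge $e$'' genuinely forces distinct legal pairs and is not secretly the same drum counted twice, and I must handle whether $e$ serves as the first or the second $C$-edge in each drum. Carefully normalising so that $e$ plays the same structural role in both drums (using the symmetry of the drum definition) should collapse the casework. Once the endpoints are laid out, verifying that some choice of one legal edge from each drum produces either a forbidden non-drum nice crossing pair (contradicting \cref{nice_pairs_are_drums}) or a directly constructible bichromatic perfect matching will finish the proof; the arithmetic of checking nonemptiness of the alternating segments via parities is routine given the groundwork already laid in \cref{odd_even_edges} and the definition of partition weight.
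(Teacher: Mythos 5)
Your overall strategy matches the paper's proof of \cref{cycle_edge_in_one_drum} — take one legal edge from each of the two drums through $e$, pair them crosswise, and derive a contradiction from the crossing-pair machinery — but your case analysis has a concrete hole. Your stated trigger in the crossing case is that the mixed pair is ``a nice crossing pair that is not a drum,'' contradicting \cref{nice_pairs_are_drums}. That trigger can fail: the mixed pair can cross \emph{and} form a drum. Take $e=\{2i-1,2i\}$ with the two drums hugging $e$ on opposite sides: $D_1$ with legal edges $a_1=\{2i-1,2i+1\}$, $b_1=\{2i,2i+2\}$, and $D_2$ with legal edges $a_2=\{2i-3,2i-1\}$, $b_2=\{2i-2,2i\}$. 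Then $a_1$ and $b_2$ cross, and their endpoints pair up as $C$-neighbours ($2i-1$ with $2i-2$, and $2i+1$ with $2i$), so $\{a_1,b_2\}$ satisfies the drum definition, with $C$-edges $\{2i-2,2i-1\}$ and $\{2i,2i+1\}$. In this configuration neither of your named mechanisms fires: the pair crosses, so your fallback (which you peg to nested or disjoint pairs) does not apply, and it does form a drum, so \cref{nice_pairs_are_drums} is not ``directly contradicted.'' Note also that your fallback on the other mixed pairing $\{a_2,b_1\}$, which is non-crossing here, cannot in general be completed with $\mathbf{M}$-segments: putting both edges in a matching strands the single vertices $2i-2$ and $2i+1$, which may have degree $2$, so no perfect matching through that pair need exist.

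The missing idea is the one the paper uses, and it closes all cases uniformly: once the mixed pair crosses (and exactly one of the two mixed pairings always does, as your positional analysis would show), \cref{nice_pairs_are_drums} guarantees it forms a drum $D_3$, which is necessarily distinct from $D_1$ and $D_2$ because its legal-edge pair is $\{a_1,b_2\}$ rather than $\{a_1,b_1\}$ or $\{a_2,b_2\}$; then $a_1$ lies in two drums $D_1$ and $D_3$, contradicting \cref{exactly_one_drum} — an observation your proposal never invokes, although it is the finishing blow in the paper's argument (it also supplies, more simply than your forcing argument, the disjointness of the two drums' legal edges). The degenerate sub-case can alternatively be killed by noticing that $D_3$ would be bichromatic — its legal edges inherit red from the monochromatic red drums through $e$, while its $C$-edges are blue by the colouring convention, contradicting the monochromaticity clause of \cref{nice_pairs_are_drums} — or by exhibiting the bichromatic perfect matching $\{a_1,b_2\}\cup \mathbf{M}_{2i+2,2i-3}$. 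Some such extra step is genuinely needed; as written, the ``crossing and yet a drum'' sub-case is unhandled.
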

\begin{proof}
Let a $C$-edge $e$ be part of two drums, say $D_1,D_2$ towards a contradiction. From \cref{exactly_one_drum}, the legal edges in both drums must be disjoint. Let they be $e_1,e_1'$ in drum $D_1$ and $e_2,e_2'$ in drum $D_2$. Let $e_1,e_2$ be the edges incident on even end point of $e$ and $e_1',e_2'$  be the edges incident on the odd end point of $e$. Observe that one of $\{e_1,e_2'\}$ and $\{e_1',e_2\}$ cross each other. It follows that they form a nice crossing pair and hence form a drum $D_3\notin \{D_1,D_2\}$ by \cref{nice_pairs_are_drums}. However, a legal edge can be part of only one drum by \cref{exactly_one_drum}. Contradiction.
\end{proof}

\begin{observation}
\label{legal_are_nice}
If \cref{condition_for_observations} is true, all legal crossing pairs of $G_c$ are nice.
\end{observation}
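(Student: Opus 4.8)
The plan is to argue by contradiction, leaning entirely on the drum machinery already in hand rather than trying to build an offending perfect matching. Suppose some legal crossing pair is not nice. Since both edges are legal, they have the same parity, so they are either both odd--odd or both even--even; it suffices to rule out the odd--odd case, as the even--even case is handled identically with the two parities interchanged. So I would fix two odd--odd legal edges $e_1=\{a,b\}$ and $e_2=\{c,d\}$ that cross, with $a,c,b,d$ occurring in clockwise order along $C$, so that $c\in{\cal{P}}(e_1)$ and $d\in{\cal{P}}'(e_1)$.

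The key step is to feed the drum partner of $e_2$ into \cref{nice_pairs_are_drums}. By \cref{exactly_one_drum}, $e_2$ lies in exactly one drum, and since a drum is a nice crossing pair whose other legal edge must have the opposite parity, the partner $f_2$ of $e_2$ is an \emph{even--even} legal edge; by the definition of a drum it is either $\{c-1,d-1\}$ or $\{c+1,d+1\}$. First I would verify that in both cases $f_2$ crosses $e_1$: the endpoint near $c$ (namely $c\pm1$) stays strictly between $a$ and $b$ on $C$, hence lies in ${\cal{P}}(e_1)$, while the endpoint near $d$ stays strictly between $b$ and $a$, hence lies in ${\cal{P}}'(e_1)$. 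This is a short position check, using only that consecutive odd vertices on $C$ differ by at least $2$, so $c\pm1$ and $d\pm1$ cannot jump past $a$ or $b$.

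Once the crossing is established, $\{e_1,f_2\}$ is a legal crossing pair with $e_1$ odd--odd and $f_2$ even--even, i.e.\ a nice crossing pair, so \cref{nice_pairs_are_drums} forces it to form a drum. But then $f_2$ belongs to two drums — the one it forms with $e_2$ and the one it forms with $e_1$ — and these are distinct since $e_1\neq e_2$. This contradicts \cref{exactly_one_drum}, which says $f_2$ is contained in exactly one drum, and the contradiction completes the proof.

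I expect the main obstacle to be recognising that this drum-uniqueness collision is the right line, rather than the more natural-looking attempt to directly exhibit a non-monochromatic perfect matching through $e_1$ and $e_2$. That direct route does not work: the four arcs cut off by $\{a,b,c,d\}$ each have odd size, so completing $e_1,e_2$ to a perfect matching is forced to use the drum partners, and the resulting matching comes out monochromatic, yielding no contradiction. The remaining care is purely in the degenerate configurations (short straps, or $2n$ near its minimum); in particular two crossing odd--odd edges require four distinct odd vertices and hence $2n\ge 8$, so for $2n=6$ there are no non-nice legal crossing pairs at all and the statement holds vacuously.
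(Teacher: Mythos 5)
Your proof is correct and follows essentially the same route as the paper: both take the unique even--even drum partner of one of the two odd--odd edges (via \cref{exactly_one_drum}), show it must cross the other odd--odd edge, and then use \cref{nice_pairs_are_drums} to force a second drum on that partner, contradicting \cref{exactly_one_drum}. The only cosmetic differences are which edge's partner you track and your explicit $c\pm1,d\pm1$ position check, where the paper instead argues via parity and vertex-disjointness that the partner's endpoints fall in ${\cal{P}}(D)$ and ${\cal{P}}'(D)$.
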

\begin{proof}
Towards a contradiction, let there be a legal crossing pair $e,e'$, which is not nice. Without loss of generality, let both of them be odd-odd edges. From \cref{exactly_one_drum}, $e$ must form a drum $D$ with a even-even edge, say $e''$. As $e,e'$ cross, they can not be incident on the same vertex. As $e'$ is odd-odd and $e''$ is even-even, they can not be incident on the same vertex. Therefore, $e'$ should be incident on a vertex from ${\cal{P}}(D)$ and a vertex from ${\cal{P}}'(D)$. Therefore, $e',e''$ cross each other. As they are a nice crossing pair, from \cref{nice_pairs_are_drums}, they must form a drum. Therefore, $e''$ is part of two drums. But from \cref{exactly_one_drum}, $e''$ is part of exactly one drum. Contradiction.
\end{proof}
\begin{observation}
\label{crossing_are_mono}
If \cref{condition_for_observations} is true, all crossing pairs of $G_c$ form monochromatic drums.
\end{observation}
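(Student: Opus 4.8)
The plan is to chain together the earlier observations so that every crossing pair is forced to be a nice crossing pair, after which \cref{nice_pairs_are_drums} immediately finishes the job. The first step is to pin down which edges can possibly participate in a crossing pair at all. By \cref{odd_even_edges}, $G_c$ has no illegal edges, so every edge is either a $C$-edge or a legal edge. I would then observe that a $C$-edge can never cross any edge: if $e=\{i,i+1\}$ is a $C$-edge, then its clockwise-inner partition class $\mathcal{P}(e)=\{i+1,\ldots,i\}$ is empty, since there is no vertex strictly between two consecutive vertices of $C$. Consequently no edge $e'$ can have one endpoint inside $\mathcal{P}(e)$ and the other inside $\mathcal{P}'(e)$, which is precisely what crossing $e$ would require. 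Hence every crossing pair consists of two non-$C$-edges.

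Combining this with the absence of illegal edges, both members of an arbitrary crossing pair must be legal, so every crossing pair is in fact a legal crossing pair. At this point \cref{legal_are_nice} applies and tells me that every legal crossing pair is nice, i.e.\ one edge is odd-odd and the other even-even. Invoking \cref{nice_pairs_are_drums}, each such nice crossing pair forms a monochromatic drum, which is exactly the assertion of the observation.

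The argument is essentially a bookkeeping chain over the four preceding observations, so I do not anticipate a genuine obstacle; the entire content lies in noticing that the edge classification (\cref{odd_even_edges}) plus the emptiness of one partition class of a $C$-edge collapses ``crossing pair'' down to ``nice crossing pair''. The only step warranting explicit care is the claim that a $C$-edge has an empty partition class and therefore cannot cross anything: this follows directly from the definition of $\mathcal{P}(e)$ applied to adjacent endpoints on $C$, but it is worth stating plainly, since the notion of a crossing pair is only substantive for edges with non-adjacent endpoints.
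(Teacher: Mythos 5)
Your proof is correct and follows essentially the same chain as the paper's: no illegal edges (\cref{odd_even_edges}), hence every crossing pair is legal, hence nice (\cref{legal_are_nice}), hence a monochromatic drum (\cref{nice_pairs_are_drums}). The only difference is cosmetic --- you explicitly verify that a $C$-edge has an empty partition class $\mathcal{P}(e)$ and so cannot participate in any crossing pair, a step the paper leaves implicit when it asserts that all crossing pairs are legal.
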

\begin{proof}
All crossing pairs are legal since there are no illegal edges by \cref{odd_even_edges}. All legal crossing pairs are nice from \cref{legal_are_nice}. All nice crossing pairs form monochromatic drums from \cref{nice_pairs_are_drums}. Therefore, all crossing pairs form monochromatic drums.
\end{proof}

\begin{observation}
\label{red_2_blue_2}
If \cref{condition_for_observations} is true, every vertex $v$ of $G_c$ can have at most two red edges and at most two blue edges incident on it. Moreover, if the vertex has two red(blue) edges incident on it, both of those red(blue) edges must be part of a red(blue) drum.
\end{observation}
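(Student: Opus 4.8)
The plan is to reduce the statement to a count of \emph{legal} edges. By the colouring convention fixed in \cref{condition_for_observations}, each vertex lies on exactly two $C$-edges, one of the form $\{2i-1,2i\}$ (red) and one of the form $\{2i,2i+1\}$ (blue); hence every vertex is incident to exactly one red $C$-edge and exactly one blue $C$-edge. Since by \cref{odd_even_edges} every non-$C$-edge is legal, it therefore suffices to show that $v$ is incident to at most one red legal edge and at most one blue legal edge. I will argue the red case; the blue case is symmetric.

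The key mechanism is as follows. First I would recall that every legal edge lies in exactly one drum (\cref{exactly_one_drum}) and that this drum is monochromatic in the colour of that edge (\cref{nice_pairs_are_drums}). So a red legal edge $e$ incident on $v$ determines a unique red drum $D$ with $e\in D$, and $D$ contains $v$. I then inspect the two possible shapes of a drum from the definition: examining the cases $e'=\{2i_1,2i_2\}$ and $e'=\{2i_1+2,2i_2+2\}$ shows that the two $C$-edges of a drum receive a common colour, and that in a \emph{red} drum the $C$-edge incident on any of its vertices $w$ is precisely the red $C$-edge of $w$ (the edge $\{w-1,w\}$ or $\{w,w+1\}$, according to the parity of $w$). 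Consequently the $C$-edge of $D$ meeting $v$ must be the unique red $C$-edge at $v$.

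With this in hand the count follows quickly. Since the red $C$-edge at $v$ belongs to at most one drum by \cref{cycle_edge_in_one_drum}, the vertex $v$ can lie in at most one red drum; and since a drum is a $4$-cycle meeting $v$ in exactly one legal edge, $v$ has at most one red legal edge. Together with the single red $C$-edge this gives at most two red edges at $v$, and symmetrically at most two blue edges. For the ``moreover'' clause: if $v$ carries two red edges, one is its red $C$-edge and the other is a red legal edge $e$; the drum $D$ produced above is red, contains $e$, and uses exactly the red $C$-edge of $v$ as one of its two $C$-edges, so both red edges at $v$ lie in the single red drum $D$, as required.

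I expect the only delicate step to be the parity/colour bookkeeping in the second paragraph --- namely verifying, by checking the two admissible forms of $e'$ in the drum definition, that the colour of a monochromatic drum forces which of the two $C$-edges at $v$ is used. Everything else is an immediate consequence of \cref{exactly_one_drum} and \cref{cycle_edge_in_one_drum}.
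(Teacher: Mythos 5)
Your proposal is correct and takes essentially the same route as the paper: it classifies the edges at $v$ into the two $C$-edges (one red, one blue, by the colouring convention) and legal edges via \cref{odd_even_edges}, then combines \cref{exactly_one_drum}, the monochromaticity of drums (\cref{nice_pairs_are_drums}), and \cref{cycle_edge_in_one_drum} to conclude that $v$ lies in at most one red and at most one blue drum, hence carries at most one legal edge of each colour. Your explicit parity check that a red drum must use precisely the red $C$-edge at each of its vertices is just a spelled-out version of the step the paper leaves implicit when it says each drum through a legal edge at $v$ must contain a $C$-edge incident on $v$.
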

\begin{proof}
Let the two cycle edges incident on $v$ be $e_1,e_2$. Without loss of generality, let $e_1$ be red and $e_2$ be blue. Every $C$-edge can be part of at most one drum from \cref{cycle_edge_in_one_drum}. Also, all drums are monochromatic from \cref{crossing_are_mono}. Therefore, $e_1$ can be part of at most one red drum and $e_2$ can be part of at most one blue drum.

Recall that other than $C$-edges, only legal edges are incident on $v$ from \cref{odd_even_edges}. Every legal edge incident on vertex $v$ must be part of a unique drum from \cref{exactly_one_drum}. But each of these drums must contain a cycle edge incident on $v$. It follows that $v$ can be part of at most one red drum and at most one blue drum, and therefore, $v$ can have at most two red edges and two blue edges incident on it.
\end{proof}

\begin{observation}
\label{max_degree_4}
If $G$ is a matching covered graph and $\mu(G)=2$, then $\Delta(G)\leq 4$.
\end{observation}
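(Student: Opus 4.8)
The plan is to derive this almost immediately from \cref{red_2_blue_2}, which already does the substantive work of bounding the number of edges of each colour at a vertex. Since $\mu(G)=2$, by definition of the matching index there exists a PMValid $2$-edge-colouring $c$ with $\mu(G,c)=2$; I will fix such a colouring $c$ and work with $G_c$. The whole argument then reduces to checking that the hypotheses of \cref{red_2_blue_2} (namely \cref{condition_for_observations}) are met, after which the degree bound is a one-line counting step.

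First I would dispose of the small-order cases. Since $\mu(G)=2>0$, the graph $G$ has a perfect matching and hence is of even order; recall also that throughout this section $G$ is assumed non-isomorphic to $K_4$. If $|V(G)|\leq 4$, then $\Delta(G)\leq |V(G)|-1\leq 3\leq 4$ holds trivially for a simple graph, so the claim is immediate. This leaves the main case $|V(G)|=2n\geq 6$.

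For $|V(G)|\geq 6$, I would invoke \cref{simple_classification_theorem}(4): because $\mu(G,c)=2$, the edge-coloured graph $G_c$ possesses a Hamiltonian cycle with alternating colours. Fixing one such red-blue alternating Hamiltonian cycle $C$ and labelling the vertices in clockwise cyclic order so that the colouring convention holds, the pair $(G,c)$ satisfies \cref{condition_for_observations} exactly. Consequently \cref{red_2_blue_2} applies verbatim, telling us that every vertex $v$ of $G_c$ is incident to at most two red edges and at most two blue edges.

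To conclude, I would observe that $c$ is a $2$-edge-colouring, so the only colours appearing are red and blue, and every edge incident to $v$ is therefore counted in one of these two bounds. Summing gives $d(v)\leq 2+2=4$ for every vertex $v$, whence $\Delta(G)\leq 4$. I do not anticipate a genuine obstacle here: the only care needed is to confirm that the chosen optimal colouring really does place us in \cref{condition_for_observations} (handled via \cref{simple_classification_theorem}) and to separate out the trivial small-order graphs so that the $|V(G)|\geq 6$ requirement of \cref{condition_for_observations} is respected.
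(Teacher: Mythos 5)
Your proposal is correct and follows essentially the same route as the paper's own proof: fix a colouring $c$ with $\mu(G,c)=2$, use \cref{simple_classification_theorem} to verify \cref{condition_for_observations} when $|V(G)|\geq 6$, apply \cref{red_2_blue_2}, and handle small orders trivially. Your added remark that $c$ uses only two colours (so every edge at $v$ is counted among the red and blue bounds) is a small clarification the paper leaves implicit, not a different argument.
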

\begin{proof}
If $|V(G)|<6$, it is easy to see that $\Delta(G)\leq 4$. We now assume that $|V(G)|\geq 6$. As $\mu(G)=2$, there is a colouring $c$ such that $\mu(G,c)=2$. It follows from \cref{simple_classification_theorem} that \cref{condition_for_observations} holds. Therefore, from \cref{red_2_blue_2}, on each vertex of $G_c$, at most two red edges and at most two blue edges are incident. Therefore, the degree is $4$ for any vertex on $G$ and hence $\Delta(G)\leq 4$.
\end{proof}

\begin{observation}
If \cref{condition_for_observations} is true, all Hamiltonian cycles of $G_c$ are red-blue alternating cycles.
\label{all_cycles_are_alternating}
\end{observation}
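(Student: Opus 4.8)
The plan is to pick an arbitrary Hamiltonian cycle $H$ of $G_c$ and argue that its colours alternate. Since $G$ has even order $2n$, the cycle $H$ has an even number of edges and splits uniquely into two perfect matchings $M_1,M_2$ by taking alternate edges around it. As $G_c$ is perfectly monochromatic, each of $M_1,M_2$ is monochromatic. If they receive different colours, then because they alternate around $H$ by construction, $H$ is red-blue alternating and there is nothing left to prove. So the entire content of the observation lies in excluding the possibility that $H$ is monochromatic, say entirely red.

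To kill the monochromatic case I would lean on the drum structure already forced by \cref{red_2_blue_2}. If $H$ is all red, then at every vertex $v$ the two edges of $H$ incident on $v$ are red, so $v$ carries exactly two red edges. By \cref{red_2_blue_2} these two red edges must lie in a red drum, and (from the proof of that observation) $v$ belongs to at most one red drum; hence the two red edges at $v$ are exactly the two edges, incident on $v$, of the unique red drum $D_v$.

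Next I would fix any one red drum $D$ — one exists because every vertex of $H$ sits in such a drum. Since $D$ is monochromatic all four of its edges are red, and at each of its four corners the two drum-edges are red and therefore exhaust the at-most-two red edges present there. Because $H$ uses both red edges at every vertex, $H$ must contain all four edges of $D$; but then the $4$-cycle $D$ would be a subgraph of the Hamiltonian cycle $H$. As $H$ is $2$-regular, its four vertices lying on $D$ would form an entire connected component of $H$, forcing $2n=4$ and contradicting $2n\ge 6$. This contradiction rules out a monochromatic Hamiltonian cycle and completes the argument.

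The step I expect to be the crux is establishing the local rigidity in the monochromatic case: that the two red edges at each vertex must be precisely the two drum-edges there. This is exactly where both halves of \cref{red_2_blue_2} — the degree bound of two and the ``moreover'' clause placing both red edges in a common red drum — are needed, together with the elementary fact that any drum incident on $v$ contains $v$ as a corner. Once that rigidity is in hand, the global contradiction (a short cycle embedded inside a long Hamiltonian cycle) is essentially immediate.
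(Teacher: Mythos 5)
Your proposal is correct and takes essentially the same route as the paper's own proof: both decompose an arbitrary Hamiltonian cycle into two monochromatic perfect matchings, reduce to excluding a monochromatic (say red) cycle, and then use both halves of \cref{red_2_blue_2} to force a red monochromatic drum all four of whose edges must lie on that cycle, so the cycle would be a $4$-cycle, contradicting $|V(G)|\geq 6$. The only cosmetic difference is that the paper phrases the intermediate rigidity globally (``all red edges of the graph form the Hamiltonian cycle $C_1$'') while you establish the same fact vertex-by-vertex at the corners of a fixed red drum.
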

\begin{proof}
Any Hamiltonian cycle can be partitioned into two perfect matchings. Since $G_c$ is perfectly monochromatic, both these perfect matchings must be monochromatic. Therefore, the Hamiltonian cycle is either a red-blue alternating cycle or a monochromatic cycle.

There is a red-blue alternating Hamiltonian cycle $C_2$ from \cref{simple_classification_theorem}. Towards a contradiction, let there be a monochromatic cycle $C_1$, say of colour red, without loss of generality. Therefore, every vertex must have at least two red edges incident on it. From \cref{red_2_blue_2}, every vertex can have at most two red edges incident on it. It follows that all red edges of the graph form a Hamiltonian cycle $C_1$. From \cref{red_2_blue_2}, since all vertices have $2$ red edges incident on them, each vertex must be part of a drum with respect to $C_2$. Therefore, the monochromatic red Hamiltonian cycle $C_1$ must contain a red monochromatic drum, which is a $4$-cycle. This is possible only when $C_1$ is a $4$-cycle. But $|V(G)|\geq 6$ from \cref{condition_for_observations}. Contradiction. It follows that there are no monochromatic red Hamiltonian cycles. Using a similar argument, it can be proved that there are no monochromatic blue Hamiltonian cycles. Therefore, all Hamiltonian cycles of $G_c$ are red-blue alternating cycles. 
\end{proof}

\begin{observation}
\label{4_cycle_observation}
If \cref{condition_for_observations} is true and $G$ is not a cycle, a drum $D$ exists such that all vertices in ${\cal{P}}(D)$ have degree $2$.
\end{observation}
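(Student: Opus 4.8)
The plan is to use an extremal argument: take the drum of minimum partition weight and show its smaller strap consists entirely of degree-$2$ vertices. First I note that the family of drums is non-empty. Since $G$ is not a cycle, it has an edge outside $C$; by \cref{odd_even_edges} that edge is legal, and by \cref{exactly_one_drum} it lies in a drum. So I may choose a drum $D$ with vertices $2i_1+1, 2i_1+2, 2i_2+1, 2i_2+2$ of minimum partition weight and designate ${\cal{P}}(D) = \{2i_2+3,\ldots,2i_1\}$ to be the strap realizing this minimum, so that $|{\cal{P}}(D)|$ equals the partition weight of $D$. The goal becomes showing that no vertex of ${\cal{P}}(D)$ carries a legal edge; since by \cref{odd_even_edges} the only non-$C$-edges are legal, this forces every such vertex to have degree $2$.

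The main argument is by contradiction. Suppose some $v \in {\cal{P}}(D)$ has a legal edge $e = \{v,u\}$. The first step is to confine $u$ to the closed strap $\{2i_2+2, 2i_2+3, \ldots, 2i_1, 2i_1+1\}$. If instead $u$ lay in the complementary arc $\{2i_1+2,\ldots,2i_2+1\}$, then checking the partitions of $C$ cut out by the two legal edges $\{2i_1+1,2i_2+1\}$ and $\{2i_1+2,2i_2+2\}$ of $D$ shows that $e$ must cross one of them; being a legal crossing pair it is nice by \cref{legal_are_nice} and hence forms a drum by \cref{nice_pairs_are_drums}, which would place a legal edge of $D$ into two drums and contradict \cref{exactly_one_drum}. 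The two boundary values $u \in \{2i_1+2, 2i_2+1\}$ that are $C$-adjacent to $D$ are caught by the very same crossing check.

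The second step derives a contradiction with minimality. With $v$ and $u$ both in the closed strap and $v \neq u$, say $v$ preceding $u$ along $C$, the drum $D'$ guaranteed for $e$ by \cref{exactly_one_drum} has one of its two straps contained in the open interval $(v,u) = \{v+1,\ldots,u-1\}$, regardless of which of the two drum orientations $D'$ takes. Because $v \in {\cal{P}}(D)$ gives $v \ge 2i_2+3$ and $u \le 2i_1+1$, this interval is a proper sub-arc of ${\cal{P}}(D)$ of size at most $|{\cal{P}}(D)| - 1$. Hence the partition weight of $D'$, being the minimum over its two straps, is strictly smaller than that of $D$, contradicting the choice of $D$. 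Therefore ${\cal{P}}(D)$ carries no legal edge and all its vertices have degree $2$.

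I expect the delicate point to be the first step: verifying that \emph{every} placement of $u$ outside the closed strap, including the two drum-adjacent boundary vertices, genuinely yields a crossing with a legal edge of $D$ rather than merely sharing an endpoint with it. This requires writing out the explicit bipartitions of $C$ induced by $\{2i_1+1,2i_2+1\}$ and $\{2i_1+2,2i_2+2\}$ and matching them against the parity of $u$. Once $u$ is trapped inside the strap, the minimality contradiction in the second step follows almost immediately from the definition of a drum.
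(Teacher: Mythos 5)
Your proof is correct and takes essentially the same approach as the paper: an extremal argument on the drum of minimum partition weight, using \cref{odd_even_edges}, \cref{exactly_one_drum}, \cref{legal_are_nice} and \cref{nice_pairs_are_drums} to show that a legal edge at a vertex of ${\cal{P}}(D)$ would force a drum nested strictly inside the strap, contradicting minimality. The differences are only presentational: you make the non-emptiness of the drum family explicit (the paper leaves it implicit) and first confine the endpoint $u$ to the closed strap, whereas the paper argues directly that neither legal edge of the new drum $D'$ can cross or coincide with those of $D$, so both $C$-edges of $D'$ lie on the strap.
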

\begin{proof}
Among all drums, let $D$ be a drum having smallest partition weight equal to $|{\cal{P}}(D)|$. Let $D$ be formed by the crossing edges $e_1,e_2$. If ${\cal{P}}(D)=\emptyset$, the observation trivially holds. Therefore, we assume that ${\cal{P}}(D)$ has at least $1$ vertex. Towards a contradiction, assume that there is a vertex $v$ in ${\cal{P}}(D)$ with degree greater than $2$. Therefore, there exists some legal edge $e_3$, incident on $v\in {\cal{P}}(D)$.

From \cref{exactly_one_drum}, $e_3$ must be part of some drum $D'$ along with a crossing edge $e_3'$. As $e_1,e_2$ are part of exactly one drum $D$, $e_3'$ or $e_3$ cannot be in $\{e_1,e_2\}$, as in that case there would another drum involving $e_1$ or $e_2$ other than $D$. From \cref{crossing_are_mono}, as all crossing pairs are drums, neither $e_3$ nor $e_3'$ can cross $e_1$ or $e_2$. 
Therefore, both the $C-$edges of $D'$ must be on the strap of $D$ along ${\cal{P}(D)}$. Therefore, $|{\cal{P}}(D)| > |{\cal{P}}(D')|$. But this contradicts the assumption that $D$ has the smallest partition weight.
\end{proof}

\subsection{Structural characterization of graphs for which $\mu(G)=2$}
\label{type_2_structure}
%For a perfectly monochromatic unweighted edge-coloured graph $G_c$ with $G$ as the underlying graph and $c$ as the colouring, we define $\mu(G,c)$ to be the dimension achieved by $G_c$.
Recall the definitions of $\cal{H}$ from \cref{sec:graph_struct}. Let ${\cal{H}}'$ be the family of even-order matching-covered Hamiltonian graphs with at least $6$ vertices such that all graphs $G \in {\cal{H}}'$ satisfy the following:
For \textit{all} Hamiltonian cycles $C$ of $G$, \cref{property1s}, \cref{property2s} and \cref{property3s} are satisfied with respect to the vertex labelling obtained from a clockwise cyclic ordering of $C$. 

Observe that the \cref{property1s}, \cref{property2s} and \cref{property3s} are satisfied for all cycles for ${\cal{H}}'$ and some cycle for ${\cal{H}}$. It is easy to see that $ {\cal{H}}' \subseteq {\cal{H}}$. Surprisingly, we show that the two classes ${\cal{H}}'$ and ${\cal{H}}$ are, in fact, the same!
%We make the following observations which we shall prove in \cref{sec:observations}
\begin{lemma}
\label{structure_proof_dir_1}
For any graph $H$ with $|V(H)|>4$, if $\mu(H)=2$, then $mcg(H) \in {\cal{H}'}$
\end{lemma}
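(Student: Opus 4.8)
The plan is to set $G = mcg(H)$ and prove $G \in {\cal{H}}'$ straight from the definition of ${\cal{H}}'$, by verifying \cref{property1s}, \cref{property2s} and \cref{property3s} with respect to \emph{every} Hamiltonian cycle of $G$. First I would record the ambient hypotheses for membership in ${\cal{H}}'$. Since $\mu(G) = \mu(mcg(G))$, we have $\mu(G) = 2$, so there is a colouring $c$ with $\mu(G,c) = 2$; and as $|V(H)| > 4$, $G$ is non-isomorphic to $K_4$. By \cref{simple_classification_theorem}, $G_c$ is perfectly monochromatic, has a perfect matching, and contains a red-blue alternating Hamiltonian cycle. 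In particular $G$ is Hamiltonian, its order is even (an alternating $2$-colouring forces an even cycle), and since $|V(G)| = |V(H)| > 4$ it is at least $6$; moreover $G$ is matching covered by construction. Thus $G$ already meets the ``even-order matching-covered Hamiltonian graph with at least $6$ vertices'' requirement, and \cref{condition_for_observations} holds for the supplied alternating cycle.

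The core idea is that the observations of \cref{sec:observations}, although phrased for the single alternating cycle furnished by \cref{condition_for_observations}, in fact apply to \emph{any} Hamiltonian cycle, and the engine driving this is \cref{all_cycles_are_alternating}. Concretely, I would fix an arbitrary Hamiltonian cycle $C'$ of $G$ and re-instantiate \cref{condition_for_observations} with $C'$ in the role of $C$. Applying \cref{all_cycles_are_alternating} to the already-established instance for the original cycle, every Hamiltonian cycle of $G_c$ is red-blue alternating, so $C'$ is red-blue alternating under the fixed colouring $c$. I then relabel the vertices $0,1,\ldots,2n-1$ clockwise along $C'$, choosing the start vertex and orientation so that the colouring convention (edges $\{2i-1,2i\}$ red and $\{2i,2i+1\}$ blue) holds; this is always possible because the colours strictly alternate along $C'$ and the cycle has even length. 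With this relabelling, $G$, $c$, $|V(G)| \ge 6$, and $C'$ satisfy exactly the hypotheses of \cref{condition_for_observations}.

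Having re-instantiated \cref{condition_for_observations} for $C'$, the three desired properties fall out of the observations read with respect to $C'$: \cref{odd_even_edges} gives that $G_c$ has no illegal edges, and since every edge is a $C'$-edge, a legal edge, or an illegal edge, this is precisely \cref{property1s}; \cref{crossing_are_mono} gives that every crossing pair forms a (monochromatic) drum, which in particular yields \cref{property2s}; and \cref{exactly_one_drum} gives that every legal edge lies in exactly one drum, which is \cref{property3s}. All three conclusions are purely combinatorial statements about the cyclic ordering of $C'$ and carry no residual dependence on the colouring, so they are genuine properties of $G$ relative to $C'$. Since $C'$ was an arbitrary Hamiltonian cycle, the three properties hold for all Hamiltonian cycles, and hence $G = mcg(H) \in {\cal{H}}'$.

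The main obstacle — and the only genuinely non-formal step — is the reduction to an arbitrary cycle, namely the appeal to \cref{all_cycles_are_alternating}. This is exactly what separates ${\cal{H}}'$ (properties for \emph{all} cycles) from the a priori weaker ${\cal{H}}$ (properties for \emph{some} cycle): without knowing that no Hamiltonian cycle of $G_c$ can be monochromatic, one could only conclude membership in ${\cal{H}}$. A secondary technical point to handle with care is the relabelling step, where I must confirm that an arbitrary red-blue alternating $C'$ admits a clockwise vertex labelling compatible with the fixed colouring convention; this is routine but should be stated explicitly, since every observation is phrased under that convention.
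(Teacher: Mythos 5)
Your proposal is correct and follows essentially the same route as the paper's own proof: obtain a colouring $c$ with $\mu(G,c)=2$ via \cref{simple_classification_theorem}, establish \cref{condition_for_observations}, invoke \cref{all_cycles_are_alternating} to upgrade from one alternating Hamiltonian cycle to all of them, and then read off \cref{property1s}, \cref{property2s} and \cref{property3s} from \cref{odd_even_edges}, \cref{crossing_are_mono} and \cref{exactly_one_drum} respectively. Your added care about relabelling an arbitrary alternating cycle to satisfy the colouring convention, and about verifying the even-order, $|V(G)|\geq 6$ and matching-covered hypotheses, only makes explicit steps the paper leaves implicit.
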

\begin{proof}
    Let $G=mcg(H)$. If $\mu(H)=2$, then $\mu(G)=2$. It follows that there exists a perfectly monochromatic colouring $c$ such that $\mu(G,c)=2$.
From \cref{simple_classification_theorem}, $G$ has a perfect matching implying that the order of $G$ is even; Moreover, $G_c$ has at least one alternating red-blue Hamiltonian cycle. Therefore \cref{condition_for_observations} holds for $G_c$. From \cref{all_cycles_are_alternating}, all Hamiltonian cycles of $G_c$ are alternating red-blue cycles. Therefore, with respect to any Hamiltonian cycle of $G$ \cref{condition_for_observations} holds. It follows that \cref{property1s}, \cref{property2s} and \cref{property3s} hold due to \cref{odd_even_edges}, \cref{crossing_are_mono} and \cref{exactly_one_drum} respectively. Therefore, $mcg(H) \in {\cal{H}'}$.
\end{proof}

\begin{lemma}
\label{structure_proof_dir_2}
If $mcg(H) \in {\cal{H}}$, then $\mu(H)=2$.
\end{lemma}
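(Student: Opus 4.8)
The plan is to prove the direction constructively: starting from $G=mcg(H)\in\mathcal H$ with a witnessing Hamiltonian cycle $C$, I would exhibit an explicit PMValid $2$-edge-colouring $c$ and then invoke \cref{simple_classification_theorem}. Since $\mu(H)=\mu(mcg(H))=\mu(G)$, it suffices to show $\mu(G)=2$. Define $c$ by colouring $C$ with the alternating convention of \cref{sec:observations} (edges $\{2i-1,2i\}$ red, edges $\{2i,2i+1\}$ blue) and colouring every legal edge with the colour of the two $C$-edges of the unique drum containing it; this is well defined by \cref{property3s}, and each drum is monochromatic because its two $C$-edges, being of the form $\{2i_1+1,2i_1+2\},\{2i_2+1,2i_2+2\}$ or $\{2i_1,2i_1+1\},\{2i_2,2i_2+1\}$, always receive the same colour. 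The red $C$-edges and the blue $C$-edges form two monochromatic perfect matchings, so both colour classes are non-empty and $C$ is a red--blue alternating Hamiltonian cycle; hence \cref{pm_cond} and \cref{ham_cond} hold. By statement~4 of \cref{simple_classification_theorem} it then remains only to establish \cref{pm_graph}, i.e.\ that every perfect matching of $G_c$ is monochromatic; this is the heart of the argument.

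I would prove monochromaticity by induction on $|V(G)|$. The base cases are $G=C$ itself (no legal edges), whose only two perfect matchings are the red and blue $C$-matchings, and the $4$-vertex instances $C_4$ and $K_4$ with their canonical colouring, all directly monochromatic. For the inductive step, observe first that the proof of \cref{4_cycle_observation} uses only \cref{property2s,property3s}, so it applies here and yields a drum $D$ whose smaller side $\mathcal P(D)$ consists entirely of degree-$2$ vertices; similarly, the analogue of \cref{cycle_edge_in_one_drum} (each $C$-edge lies in at most one drum) follows from \cref{property2s,property3s}. Writing $D$ with vertices $u'=2i_1+1,\,2i_1+2,\,2i_2+1,\,u=2i_2+2$, the strap through $\mathcal P(D)$ is a path $u-x_1-\cdots-x_k-u'$ of degree-$2$ interior vertices, and using \cref{property1s} together with the at-most-one-drum fact one checks that $u,u'$ have degree exactly $3$, their only red edges being the $C$-edges and legal edges of $D$. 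I would then form $G'$ by deleting the consecutive arc $\{u\}\cup\mathcal P(D)\cup\{u'\}$ and adding the $C'$-edge $\{2i_1+2,2i_2+1\}$, closing the complementary arc into a Hamiltonian cycle $C'$; because an even-length arc is removed, $C'$ is again parity-alternating, the induced colouring $c'$ is exactly the canonical one, and the new $C'$-edge is red. One verifies that $G'$ is matching covered, of even order, and satisfies \cref{property1s,property2s,property3s} with respect to $C'$, since no surviving drum loses a $C$-edge and no new crossing pair is created (the new edge is a $C'$-edge).

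The coupling that makes the induction go through, and the step I expect to be the main obstacle, is relating the colour of a perfect matching $M$ of $G$ to that of its image $M'$ in $G'$. Because every interior $x_i$ has degree $2$, $M$ restricted to the strap is forced into one of exactly two patterns: either the $x_i$ are matched internally, an all-red configuration with $u,u'$ matched by the red edges of $D$, or $u,u'$ are matched into the strap, an all-blue configuration. The decisive point is that in $G'$ both endpoints $2i_1+2$ and $2i_2+1$ of the new edge have it as their \emph{only} red edge, since the red $C$-edges and red legal edges of $D$ incident to them were all deleted; hence every red perfect matching of $G'$ uses the new edge, while every blue one avoids it. Consequently, if $M$ uses the red strap pattern then $M'$ uses the (red) new edge, is red by induction, and forces $M$ to be all red; if $M$ uses the blue strap pattern then $M'$ avoids the new edge, is therefore blue by induction, and forces $M$ to be all blue. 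In either case $M$ is monochromatic, completing the induction and establishing \cref{pm_graph}. With \cref{pm_graph,pm_cond,ham_cond} verified, statement~4 of \cref{simple_classification_theorem} gives $\mu(G,c)=2$, and since $|V(G)|\ge 6$ implies $G\not\cong K_4$, \cref{bogdanov} gives $\mu(G)\le 2$; thus $\mu(H)=\mu(G)=2$. The most delicate verifications I would take care over are that $G'$ genuinely inherits \cref{property1s,property2s,property3s} and matching-coveredness, and the exhaustiveness of the two strap patterns together with the degree-$3$ analysis at $u,u'$.
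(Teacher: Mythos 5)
Your construction of the colouring is identical to the paper's (alternate the $C$-edges, then give each drum's legal pair the colour of its two $C$-edges, well defined by \cref{property3s}), and your framing via statement~4 of \cref{simple_classification_theorem} plus \cref{bogdanov} matches as well; the genuine difference is how you establish \cref{pm_graph}, i.e.\ that every perfect matching is monochromatic. The paper does this directly, without induction: first a parity argument shows that for any drum with crossing pair $e,e'$ and any perfect matching $M$, either $\{e,e'\}\subseteq M$ or $\{e,e'\}\cap M=\emptyset$ (if $e\in M$ but $e'\notin M$, the odd-sized part ${\cal P}(e)$ forces some $M$-edge to cross $e$, and \cref{property2s} together with \cref{property3s} force that edge to be $e'$); then, given any perfect matching $M$, each drum's legal pair in $M$ is swapped for the drum's equally coloured pair of $C$-edges, producing a perfect matching $M'$ contained in the alternating Hamiltonian cycle $C$, which is necessarily monochromatic, so $M$ was too. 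Your route replaces this one-shot substitution with a drum-excision recursion, and it does work: the colouring-free analogue of \cref{4_cycle_observation} indeed uses only \cref{property1s}, \cref{property2s} and \cref{property3s}; the degree-$3$ claim at $u,u'$ holds (in the ${\cal P}(D)=\emptyset$ case a second legal edge at $u$ would cross the drum's own chord and give it a second drum, violating \cref{property3s}; otherwise it would force a legal edge at a degree-$2$ strap vertex); and heredity is sound, the only delicate point being a potential drum of $G'$ through the new $C'$-edge, which cannot give any legal edge a second partner since the candidate pair already crossed in $G$ and hence already formed a drum there by \cref{property2s}, with surviving $C$-edges. Note, though, that your two forced strap patterns essentially re-prove, in the excised special case, exactly the paper's ``together or not at all'' lemma, and your approach additionally requires verifying matching-coveredness of $G'$, the relabelled parity/colour bookkeeping, and $4$-vertex base cases outside ${\cal H}$ (which you correctly include, since $\mathcal{H}$ demands at least $6$ vertices). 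What your induction buys is an ear-decomposition-like structural recursion for graphs in ${\cal H}$, which has independent interest; what the paper's argument buys is brevity and robustness --- no induction, no heredity checks, and no minimal-drum machinery.
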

\begin{proof}
Let $G=mcg(H)$. We first show that for any graph $G\in {\cal{H}}$, there is an edge colouring $c$ such that $\mu(G,c)$ is $2$.

By definition of ${\cal{H}}$, there exists an even order Hamiltonian cycle, $C$ with respect to which \cref{property1s}, \cref{property2s} and \cref{property3s} are satisfied. Colour $C$ with alternating red and blue colours. Notice that all uncoloured edges are legal due to \cref{property1s}. Consider a legal edge $e$, due to \cref{property3s}, it is part of a unique drum $D$ having another legal edge $e'$, where $e,e'$ form a crossing pair. By the definition of a drum, both the non-adjacent $C$-edges of $D$ must be of the same colour with respect to colouring $c$. Colour $e,e'$ with the same colour as the non-adjacent $C$-edges of $D$. Do this for all legal edges. Due to \cref{property3s}, as each legal edge is part of exactly one drum, each legal edge gets coloured exactly once. Thus, we now have a colouring $c$ for the entire graph. We now prove that this colouring $c$ is perfectly monochromatic.

We first prove that, for a drum $D$ with a crossing pair $e,e'$ and a perfect matching $M$, either $\{e,e'\}\subseteq M$ or $\{e,e'\} \bigcap M=\emptyset$. Towards a contradiction, let there be a perfect matching $M$, which contains $e$ but not $e'$. As $e$ is legal, $|{\cal{P}}(e)|$ and $|{\cal{P}}'(e)|$ are odd. Since $M$ can not match the vertices of ${\cal{P}}(e)$ within itself and every vertex must be covered by $M$, there must be an edge $e''\in M$ crossing $e$. But every crossing pair forms a drum due to \cref{property2s} and $e$ is part of only one drum due to \cref{property3s}. Therefore $e''$ must be $e'$. Thus $e' \in M$. Contradiction.

We now prove that the colouring $c$ is perfectly monochromatic. Towards a contradiction, let there be a non-monochromatic perfect matching $M$. If $M$ contains legal edges, they come in pairs such that each pair belongs to a drum. Clearly, all these pairs of legal edges that belong to $M$ are disjoint from \cref{property3s}. Replace each such crossing pair in $M$ with the associated pair of $C$-edges of the corresponding drum to form a new perfect matching $M'$. Observe that since all drums are monochromatic, we are replacing a pair of crossing edges with $C$-edges of the same colour. It follows that since $M$ is a non-monochromatic perfect matching, $M'$ must also be a non-monochromatic perfect matching. However, $M'$ is a perfect matching from the Hamiltonian cycle $C$, which is alternately coloured. Thus $M'$ must be monochromatic. This is a contradiction.

Therefore, $G_c$ has no non-monochromatic perfect matchings. Therefore, $G_c$ is perfectly monochromatic. Since there are two perfect matchings of different colours in $C$, $\mu(G,c)=2$. Therefore, $\mu(G)\geq 2$. From \cref{bogdanov}, $\mu(G)\leq 2$. It follows that $\mu(G) = 2$ and hence $\mu(H) = 2$.
\end{proof}
\begin{lemma}\label{hish}
$\cal{H}=\cal{H}'$
\end{lemma}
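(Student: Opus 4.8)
The plan is to prove $\cal{H}=\cal{H}'$ by establishing mutual containment, and the key observation is that both containments are essentially free given the lemmas already proved. Since $\cal{H}'$ imposes the defining properties with respect to \emph{all} Hamiltonian cycles, whereas $\cal{H}$ only requires them for \emph{some} Hamiltonian cycle, the inclusion $\cal{H}'\subseteq\cal{H}$ is immediate and has already been noted in the text. The entire content of the lemma is therefore the reverse inclusion $\cal{H}\subseteq\cal{H}'$, which I would obtain by chaining \cref{structure_proof_dir_2} and \cref{structure_proof_dir_1}.

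First I would take an arbitrary $G\in\cal{H}$. By definition $G$ is matching covered, so $G=mcg(G)$; thus applying \cref{structure_proof_dir_2} with $H=G$ yields $\mu(G)=2$. Next I would feed this back into \cref{structure_proof_dir_1}, again with $H=G$: since $\mu(G)=2$, that lemma gives $mcg(G)\in\cal{H}'$. Because $G$ is matching covered, $mcg(G)=G$, and hence $G\in\cal{H}'$. This shows $\cal{H}\subseteq\cal{H}'$, and combined with the trivial inclusion $\cal{H}'\subseteq\cal{H}$ we conclude $\cal{H}=\cal{H}'$.

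The one subtlety I would be careful to flag is the matching-covered hypothesis: both $\cal{H}$ and $\cal{H}'$ are defined as families of \emph{matching-covered} graphs, so for any $G$ in either family we automatically have $mcg(G)=G$, which is exactly what lets the two lemmas compose cleanly without a spurious $mcg$ appearing in the conclusion. I do not anticipate a genuine obstacle here; the real mathematical work lives entirely in \cref{structure_proof_dir_1} and \cref{structure_proof_dir_2}, and this lemma is a short bookkeeping corollary that packages them into the clean statement that the ``exists a cycle'' and ``for all cycles'' formulations coincide. The only thing to double-check is that the order parity and the ``at least $6$ vertices'' conditions are shared by both family definitions, so no graph is lost at the boundary when passing between them.
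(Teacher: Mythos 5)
Your proposal is correct and matches the paper's own proof essentially verbatim: the paper also notes ${\cal{H}}'\subseteq{\cal{H}}$ by definition and obtains ${\cal{H}}\subseteq{\cal{H}}'$ by applying \cref{structure_proof_dir_2} and then \cref{structure_proof_dir_1} to a matching covered $G\in{\cal{H}}$. Your explicit remark that $mcg(G)=G$ for members of either family is a slightly more careful articulation of a step the paper leaves implicit, but it is the same argument.
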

\begin{proof}
It is easy to see that ${\cal{H}'} \subseteq {\cal{H}}$ by definition. Consider any matching covered graph $G \in {\cal{H}}$. From \cref{structure_proof_dir_2}, $\mu(G)=2$. As $\mu(G)=2$, from \cref{structure_proof_dir_1}, $G \in {\cal{H}'}$. It follows that ${\cal{H}} \subseteq {\cal{H}'}$ and hence ${\cal{H}}={\cal{H'}}$.
\end{proof}
%\cref{2_characterization_result} 
%\Htheorem*
%\begin{proof}

By \cref{structure_proof_dir_1}, if $\mu(H)=2$, then $mcg(H) \in {\cal{H}'}$. But from \cref{hish}, ${\cal{H}}'={\cal{H}}$. Finally, from \cref{structure_proof_dir_2}, if $mcg(H) \in {\cal{H}}$ then $\mu(H)=2$. Therefore, \cref{2_characterization_result} follows.

\section{Algorithm to find the matching index}\label{algorithmandproof}

We first prove \cref{algo_proof_lemma}, which is useful to prove the correctness of our algorithm.
\begin{lemma}
\label{algo_proof_lemma}
Let $G$ be a matching-covered graph with ${\mu}(G)=2$. Let $v \in V(G)$ and ${\cal{M}}(e)$ denote the set of all perfect matchings of $G$ containing the edge $e\in E(G)$. There exists two edges $e_1,e_2$ incident on $v$ such that $M_1\bigsqcup M_2$ is a Hamiltonian cycle of $G$, for any $M_1 \in {\cal{M}}(e_1)$ and any $M_2 \in {\cal{M}}(e_2)$
\end{lemma}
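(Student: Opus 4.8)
The plan is to exploit a fact already isolated inside the proof of \cref{simple_classification_theorem}: in a perfectly monochromatic graph, the union of a monochromatic perfect matching of one colour with a monochromatic perfect matching of a different colour is forced to be a single Hamiltonian cycle. So the work is mostly a matter of choosing $e_1,e_2$ so that every matching through them has a prescribed colour, and then invoking this fact. Since $G$ is matching covered with $\mu(G)=2$, I would first use \cref{simple_classification_theorem} to fix a PMValid $2$-edge-colouring $c$ (colours red and blue) for which $G_c$ is perfectly monochromatic and contains a red-blue alternating Hamiltonian cycle $C$.

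For the candidate edges I would take the two $C$-edges incident on $v$. As $v$ lies on the Hamiltonian cycle $C$ and $C$ alternates in colour, exactly one of these edges is red and one is blue; set $e_1$ to be the red one and $e_2$ the blue one (these are distinct precisely because of the alternation). Since $G$ is matching covered, both ${\cal{M}}(e_1)$ and ${\cal{M}}(e_2)$ are nonempty, and since $G_c$ is perfectly monochromatic, every matching in ${\cal{M}}(e_1)$ is entirely red and every matching in ${\cal{M}}(e_2)$ is entirely blue.

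The final step is to check Hamiltonicity of the union for arbitrary $M_1 \in {\cal{M}}(e_1)$ and $M_2 \in {\cal{M}}(e_2)$. Being of different colours, $M_1$ and $M_2$ are edge-disjoint, so at each vertex $M_1 \sqcup M_2$ contributes exactly one red and one blue edge; the union is thus a spanning $2$-regular subgraph, hence a disjoint union of cycles, and along each cycle the edges alternate between $M_1$ and $M_2$, i.e. between red and blue. If there were two or more such cycles, I would take the red matching from one cycle and the blue matchings from the remaining cycles to build a non-monochromatic perfect matching of $G$, contradicting that $G_c$ is perfectly monochromatic. Therefore $M_1 \sqcup M_2$ is a single spanning cycle, i.e. a Hamiltonian cycle of $G$.

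The main point requiring care is not a deep difficulty but the quantifier ordering: the conclusion must hold for \emph{every} $M_1$ and \emph{every} $M_2$, not merely for the red and blue halves of $C$ itself. I would therefore make sure the two ingredients -- monochromaticity of any perfect matching through a coloured edge, and the disjoint-cycles contradiction -- use nothing about $M_1,M_2$ beyond their colours. I would also verify the small bookkeeping, namely that $e_1\neq e_2$, that the $2$-regular union is spanning so that "single cycle" is genuinely "Hamiltonian cycle", and that matching coveredness is what guarantees ${\cal{M}}(e_1),{\cal{M}}(e_2)$ are nonempty so the statement is not vacuous.
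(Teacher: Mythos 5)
Your proposal is correct and matches the paper's own proof essentially step for step: both fix a perfectly monochromatic $2$-colouring via \cref{simple_classification_theorem}, take the two edges of the red--blue alternating Hamiltonian cycle incident on $v$ as $e_1,e_2$, note every matching through each is forced monochromatic of the corresponding colour, and rule out a multi-cycle union by the colour-swapping construction of a non-monochromatic perfect matching. Your added bookkeeping (non-vacuousness via matching coveredness, $e_1\neq e_2$, spanning $2$-regularity) is a harmless elaboration of the same argument.
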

\begin{proof}
Let $c$ be a perfectly monochromatic colouring of $G$ such that $\mu(G,c)=2$. From \cref{simple_classification_theorem}, $G_c$ is perfectly monochromatic and has a red-blue alternating Hamiltonian cycle $C$. For a vertex $v$, let two of its incident edges from $C$ be  $e_1,e_2$. Without loss of generality, let $e_1$ be red and $e_2$ be blue. For those $e_1,e_2$, we prove that $M_1\bigsqcup M_2$ is a Hamiltonian cycle of $G_c$, for any $M_1 \in {\cal{M}}(e_1)$ and $M_2 \in {\cal{M}}(e_2)$.

Since $e_1$ is red, and $G_c$ is perfectly monochromatic, all perfect matchings in ${\cal{M}}(e_1)$ must be monochromatic red perfect matchings by definition. Similarly, all perfect matchings in ${\cal{M}}(e_2)$ must be monochromatic blue perfect matchings. Therefore, $M_1\bigcap M_2=\emptyset$ and $M_1\bigsqcup M_2$ must be a disjoint union of even cycle(s) of alternating colours. If it is a disjoint union of at least two even cycles, then by selecting the red perfect matching from one cycle and blue perfect matchings from the remaining cycles, we can construct a non-monochromatic perfect matching. However, this is not possible as $G_c$ is perfectly monochromatic. Therefore, $M_1\bigsqcup M_2$ is a Hamiltonian cycle.
\end{proof}

\RestyleAlgo{ruled}
\SetKwComment{Comment}{/* }{ */}
\begin{algorithm}[H]
\label{appendix:algo}
\caption{To decide whether a non-trivial matching covered graph non-isomorphic to $K_4$ has matching index  $1$ or $2$}\label{alg:two}
Pick any $v \in V(G)$, arbitrarily.

${\cal{C}} \gets \emptyset$ 

\eIf{$d(v) \geq 5$}{$\mu(G) \gets 1$  \hfill\tcp{See \cref{max_degree_4}}
}
  {
  \For{ all pairs of edges $\{e_1,e_2\}$ incident on $v$}{
Pick arbitrary perfect matchings $M_1,M_2$ such that $e_1 \in M_1$ and $e_2 \in M_2$

  \If{$M_1\bigcup M_2$ is a Hamiltonian cycle}
   {${\cal{C}} \gets {\cal{C}}\bigcup \{M_1\bigcup M_2\}$}}
  }
  \eIf{${\cal{C}} = \emptyset$}{$\mu(G) \gets 1$ \hfill\tcp{See \cref{algo_proof_lemma}}}{
  Pick any $C \in {\cal{C}}$ arbitrarily \;
  \eIf{\cref{property1s}, \cref{property2s}, \cref{property3s} are satisfied in $G$ with respect to $C$}{$\mu(G) \gets 2$ \hfill\tcp{$G \in {\cal{H}'}$, See \cref{hish}  }}{$\mu(G) \gets 1$ \hfill\tcp{$G \notin {\cal{H}}$, See \cref{2_characterization_result} }}}
\end{algorithm}

\subsection{Proof of correctness}
%\vspace{-6mm}

%\section{Algorithm for \cref{subsec:algo}}

%ADDDD ALGORITHM
%\vspace{-8mm}
\begin{proof}
From \cref{bogdanov}, if $H$ isomorphic to $K_4$, $\mu(H)=3$. From \cref{simple_classification_theorem}, $\mu(H)=0$ if and only if there is no perfect matching in $H$. The absence of a perfect matching can be checked using an algorithm of Micali and Vazirani \cite{DBLP:conf/focs/MicaliV80} in $O(\sqrt{|V|}|E|)$ time. 

%\textcolor{red}{small observation about degree being at most $5$ for the proof}

We now know that $\mu(H)=1$ or $2$. We first find the maximum matching covered subgraph $G=mcg(H)$ in $O(|V||E|)$ time using a deterministic algorithm due to Carvalho and Cheriyan \cite{Carvalho2005AnOA}. It can also be computed in $O(|V|^{2.376})$ time using a randomized algorithm due to Rabin and Vazirani \cite{DBLP:journals/jal/RabinV89}. As, $\mu(G)=\mu(H)$, it is now sufficient to find $\mu(G)$ as described in \cref{alg:two}. The overview of \cref{alg:two} is described below, along with its correctness.

%We note that though the matching covered graphs of Type $2$ graphs have a maximum degree of $4$ (and hence $|E|\leq 2|V|$

If $G$ is not Hamiltonian, $\mu(G)=1$ from \cref{simple_classification_theorem}. If $G$ is Hamiltonian and we know a Hamiltonian cycle $C$ of $G$, it is easy to check if $G$ satisfies \cref{property1s}, \cref{property2s} and \cref{property3s} with respect to $C$ in $O(|E|)$ time. Due to \cref{2_characterization_result}, if all three properties are satisfied with respect to $C$, then $G \in {\cal{H}}$ and hence $\mu(G)=2$. If any of the property is not satisfied with respect to $C$, $G \notin {\cal{H}'}=\cal{H}$ (from \cref{hish}) and hence $\mu(G)=1$. However, finding Hamiltonian cycles in general graphs is an NP-hard problem. Fortunately, since we are only interested in graphs where $\mu(G)=2$, we can give a simple efficient algorithm that returns a Hamiltonian cycle of $G$, using \cref{algo_proof_lemma}. %\textcolor{red}{decide it is not..} 
%\textcolor{red}{Comment algorithm?}

We pick any vertex $v \in V(G)$, arbitrarily. Note that the maximum degree is at most $4$ when $\mu(G)=2$ from \cref{max_degree_4}. Therefore, we try to find a Hamiltonian cycle only when $d(v) \leq 4$. There are at most $\binom{4}{2}$ pairs of edges incident on $v$. Since $G$ is a matching covered graph, for each pair $\{e_1,e_2\}$, we can find arbitrary perfect matchings $M_1,M_2$ containing $e_1,e_2$ respectively, using any maximum matching algorithm \cite{DBLP:conf/focs/MicaliV80}. If their union is not a Hamiltonian cycle for all $\binom{4}{2}$ pairs, $\mu(G)=1$ graph from \cref{algo_proof_lemma}. If their union is a Hamiltonian cycle $C$ for at least one pair, then we check if $G \in {\cal{H}}$ or $G \notin {\cal{H}'}$ with respect to $C$. Recall that since $G$ is a matching covered graph, by \cref{2_characterization_result}, $\mu(G)=2$ if and only if $G \in {\cal{H}'}={\cal{H}}$.
%Correctness follows from  \cref{simple_classification_theorem} and \cref{2_characterization}. 
\end{proof}

We note that one can also rephrase our result, i.e. \cref{2_characterization_result} in terms of SPQR trees and then use the linear time algorithm to find the SPQR tree of the matching covered graph $G$ as a black box to detect graphs with matching index $2$. However, the bottleneck is to find the matching covered graph $G=mcg(H)$ in both approaches, which takes $O(|V||E|)$ time. We present the current algorithm as it naturally follows from the structural classification.

%\section{Weighted version}
\section{Edge-weighted version}\label{conclusion}
So far, we answered a natural graph theoretic question arising from quantum photonic experiments: What is the structure of the graphs which admit a PM-Valid $2$-edge colouring? In this section, we provide a little more information on graphs that arise from the quantum optical experiments in the general case; namely when we allow destructive interference. The interested reader can refer to \cite{krenn2019questions} and \cref{appendix} to know how these graphs correspond to the experiments.
%Our characterization gives a full picture of quantum optics experiments (including those containing probabilistic photon pair sources, deterministic photon sources and linear optics elements) which can lead to a $d$-dimensional $n$-particle GHZ state,  when there is no destructive interference. We note that this class of experiments when there is no destructive interference (i.e. the unweighted version) has been of particular interest \cite{Quantum_graphs, vardi1, vardi2}. We now define the edge-weighted extension of this question which captures the full power of destructive interference.

Usually, in an edge colouring, each edge is associated with a natural number. In such edge colourings, the edges are assumed to be monochromatic. But to capture the full power of quantum photonic experiments that we are trying to model, we have to consider a generalized edge colouring using bi-chromatic edges, i.e. one half coloured by a certain colour and the other half coloured by a possibly different colour (as shown in \cref{fig:main_example}). So, we develop some new notation to describe bi-chromatic edges. 

%With a slight abuse of notation, we consider each edge $e$ to be an ordered pair $(u,v)$ (where the edge)
One can imagine an edge $e=\{u,v\}$ to be of two halves, where the half containing $u$ is represented by $u_{h}(e)$ and the the half containing $v$ is represented by $v_{h}(e)$. An edge colouring $c$ associates a colour $i$ to $u_{h}(e)$ and a colour $j$ to $v_{h}(e)$ for some colours $i,j \in \mathbb{N}$ (as shown in \cref{fig:main_example}). For such an edge $e$, if $i\neq j$ we call $e$ to be a bi-chromatic edge and if $i=j$, we call $e$ to be a monochromatic edge. A weight assignment $w$ assigns every edge $e$ a weight $w(e) \in \mathbb{C} \setminus \{0\}$. (As a zero-weight edge $e$ is the same as the edge $e$ being absent, we do not need to consider zero-weight edges.)

%coloured edge $\{(v_1,i),(v_2,j)\}$ to an uncoloured edge $\{v_1,v_2\}$ for some $i,j \in \mathbb{N}$. For an edge $ e= \{(v_1,i),(v_2,j)\}$, we say that $\{i,j\}$ is the colour of $e$. One can imagine $e$ to be of two halves, where the half containing $v_1$ has colour $i$ and half containing $v_2$ has colour $j$ (as shown in \cref{fig:main_example}) . In such an edge $e$, if $i\neq j$ we call $e$ to be a bi-chromatic edge and if $i=j$, we call $e$ to be a monochromatic edge. A weight assignment $w$ assigns every edge $e$ a weight $w(e) \in \mathbb{C} \setminus \{0\}$. (As a zero-weight edge $e$ is the same as the edge $e$ being absent, we don't need to consider zero-weight edges.)

\begin{definition}
The weight of a perfect matching $P$, $w(P)$ is the product of the weights of all its edges $\prod\limits_{e\in P}w(e)$
\end{definition}

\begin{definition}
The weight of a graph $G$, $w(G)=\sum_{P \in \cal{P}} w(P)$ where $\cal{P}$ is the set of all perfect matchings of $G$.
\end{definition}

A vertex colouring $vc$ associates a colour $i$ to each vertex in the graph where $i\in \mathbb{N}$. We use $vc(v)$ to denote the colour of vertex $v$ in the vertex colouring $vc$. %Each vertex colouring $vc$ naturally defines a vertex colouring operation 

\begin{definition}
For an edge-weighted edge-coloured graph $G_c^w$, each vertex colouring $vc$ naturally defines an \textit{edge-filtering} operation $\Theta_{vc}(G_c^w)=H_c^w$ as follows: $V(H_c^w)=V(G_c^w)$ and $e=\{u,v\} \in E(H_c^w)$ if and only if $e \in E(G_c^w)$ and $u_h(e),v_h(e)$ are coloured with $vc(u),vc(v)$ respectively.
\end{definition}
We say that $\Theta_{vc}(G_c^w)$ is the $vc$-filtered subgraph of  $G_c^w$.

% filters an edge-weighted edge-coloured subgraph of $G_c^w$, denoted as  $\Theta_{vc}(G_c^w)$. This filtering happens so that for every edge $e=\{(v_1,i),(v_2,j)\}$, if $vc(v_1)=i$ and $vc(v_2)=j$ then $e$ is retained and discarded otherwise. %A vertex colouring is defined to be feasible on $G_c^w$, if $\Theta_{vc}(G_c^w)$ does not ha. 

% if it is induced by at least one perfect matching.

\begin{definition}
The weight of a vertex colouring $vc$ with respect to $G_c^w$ is defined to be the weight of the subgraph of $G_c^w$ filtered by $vc$, $w(vc,G_c^w)=w(\Theta_{vc}(G_c^w))$
\end{definition}
Note that if $\Theta_{vc}(G_c^w)$ does not have a perfect matching, then $w(vc)=0$ by definition. When the graph $G_c^w$ is understood from the context, we will use $w(vc)$ to denote $w(vc,G_c^w)$.
%The weight of a vertex colouring which is not feasible, is zero by default.

\begin{definition}
An edge-weighted edge-coloured graph $G_c^w$ is said to be \textit{valid} with dimension $\bar\mu({G,c,w})$ (or $\bar\mu({G,c,w})$-valid), if:
\begin{enumerate}
    \item If there exists $\bar\mu({G,c,w})$ monochromatic vertex colourings with a weight of $1$.
    \item All other vertex colourings have a weight of $0$.
\end{enumerate}
\end{definition}

An example of a $2$-valid edge-coloured edge-weighted graph is shown in \cref{fig:main_example}

\begin{figure}[t!]
    \centering   
    \begin{minipage}{0.95\textwidth}
\centering    
{\includegraphics[width=63mm]{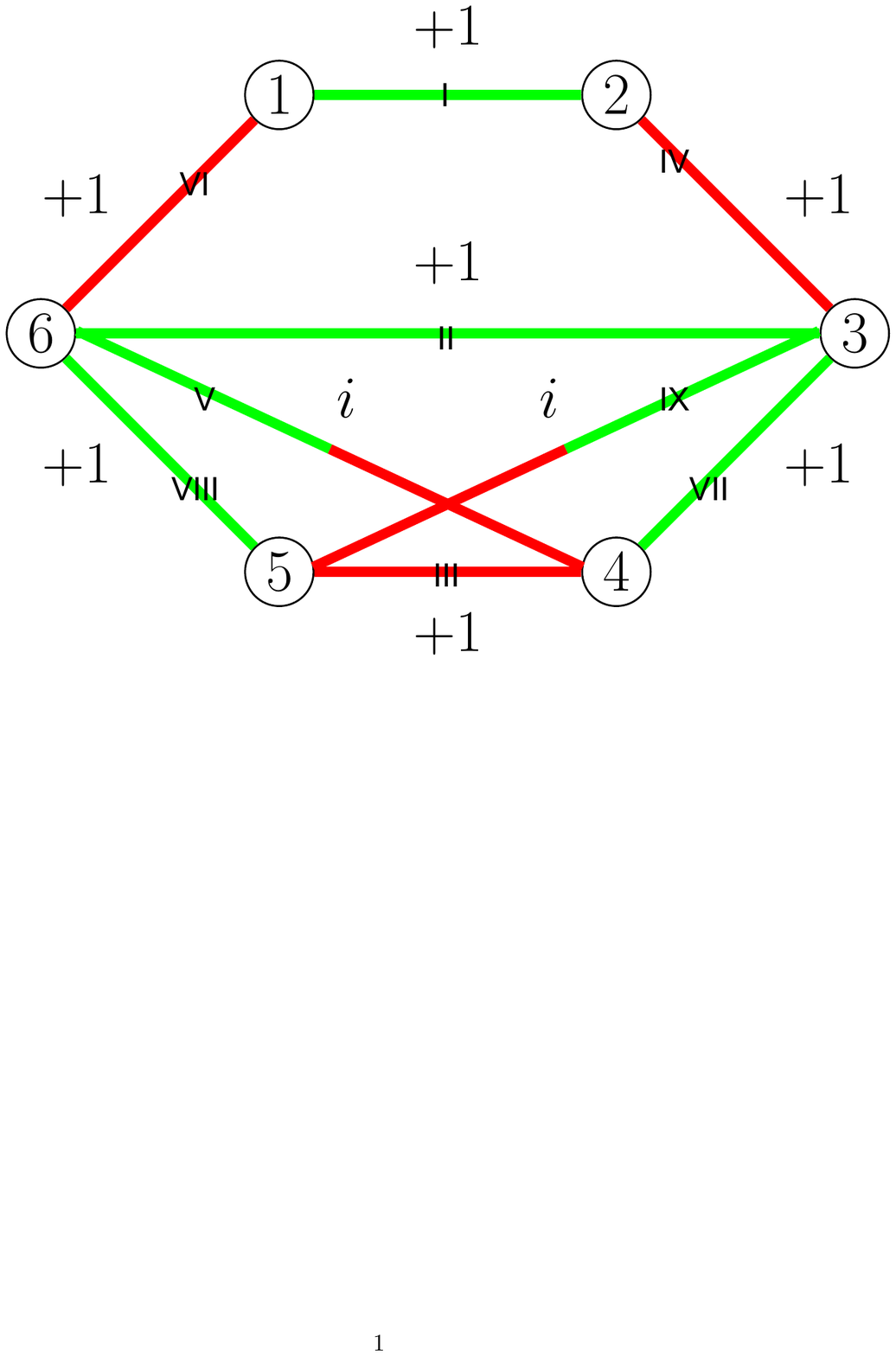}}
\caption{A $2$-valid edge-coloured edge-weighted graph $G_c^w$ (red corresponds to $0$ and green corresponds to $1$)
}
\label{fig:main_example}
    \end{minipage}
   \hspace*{1cm}
\end{figure}

\begin{figure}[t!]
    \centering
\centering    
\begin{subfigure}[b]{0.21\textwidth}
         \centering
         \fbox{\includegraphics[width=0.9\textwidth]{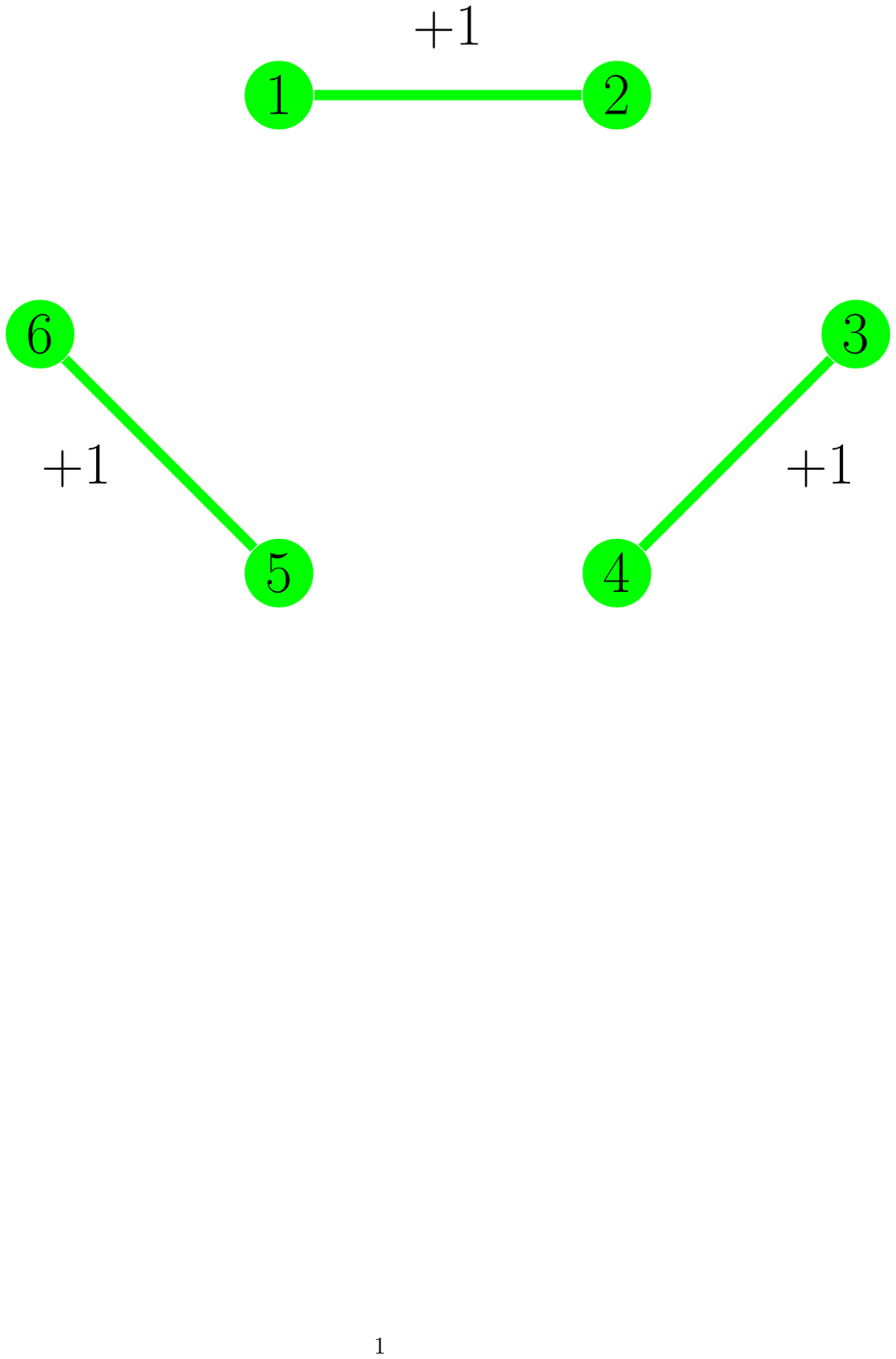}}
         \caption{$\Theta_{|111111\rangle}(G_c^w)$}
         \label{fig:pm1}
\end{subfigure}
\hfill
\begin{subfigure}[b]{0.21\textwidth}
         \centering
         \fbox{\includegraphics[width=0.9\textwidth]{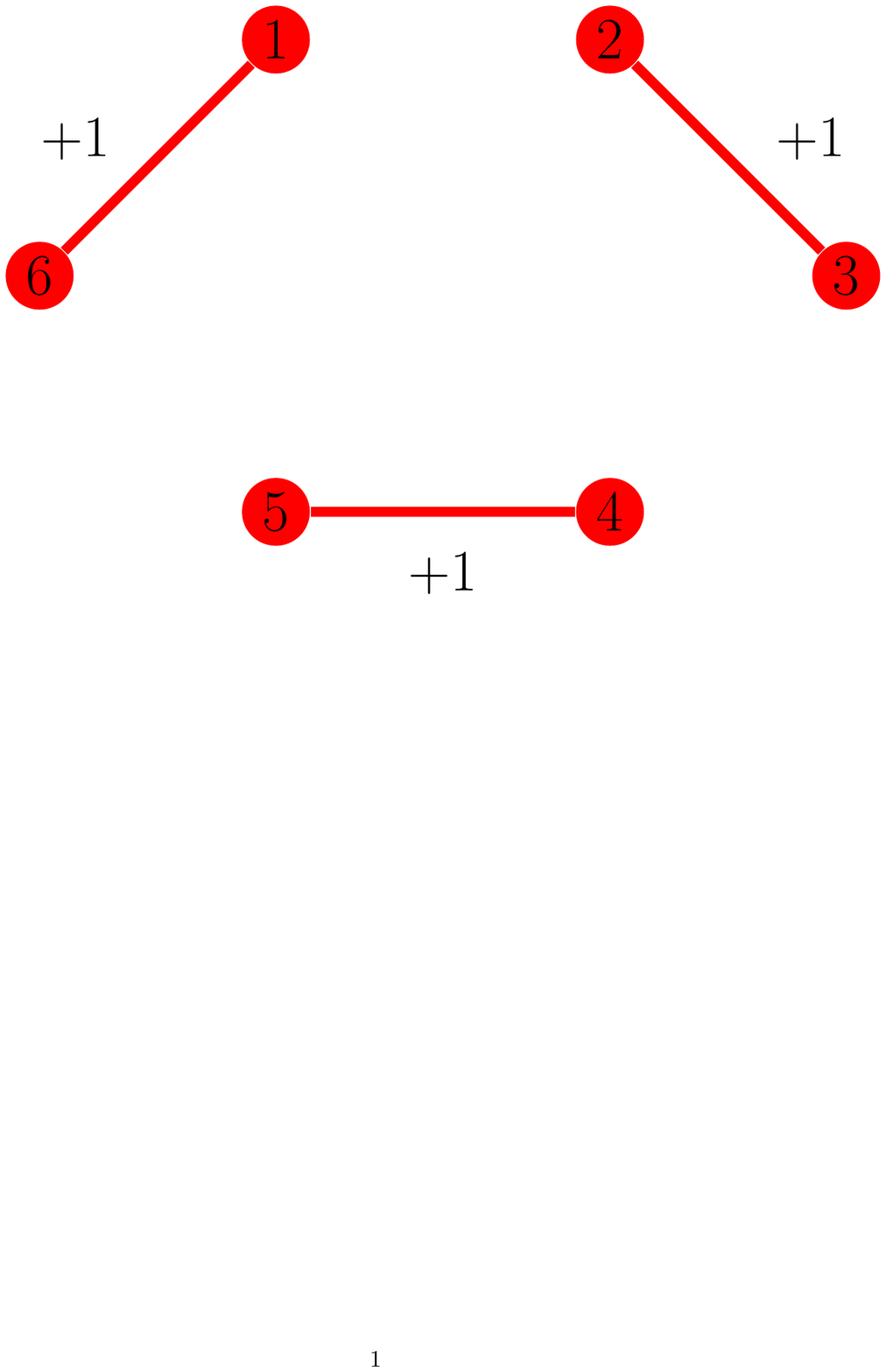}}
         \caption{$\Theta_{|000000\rangle}(G_c^w)$}
         \label{fig:pm2}
\end{subfigure}
\hfill
\begin{subfigure}[b]{0.21\textwidth}
         \centering
         \fbox{\includegraphics[width=0.9\textwidth]{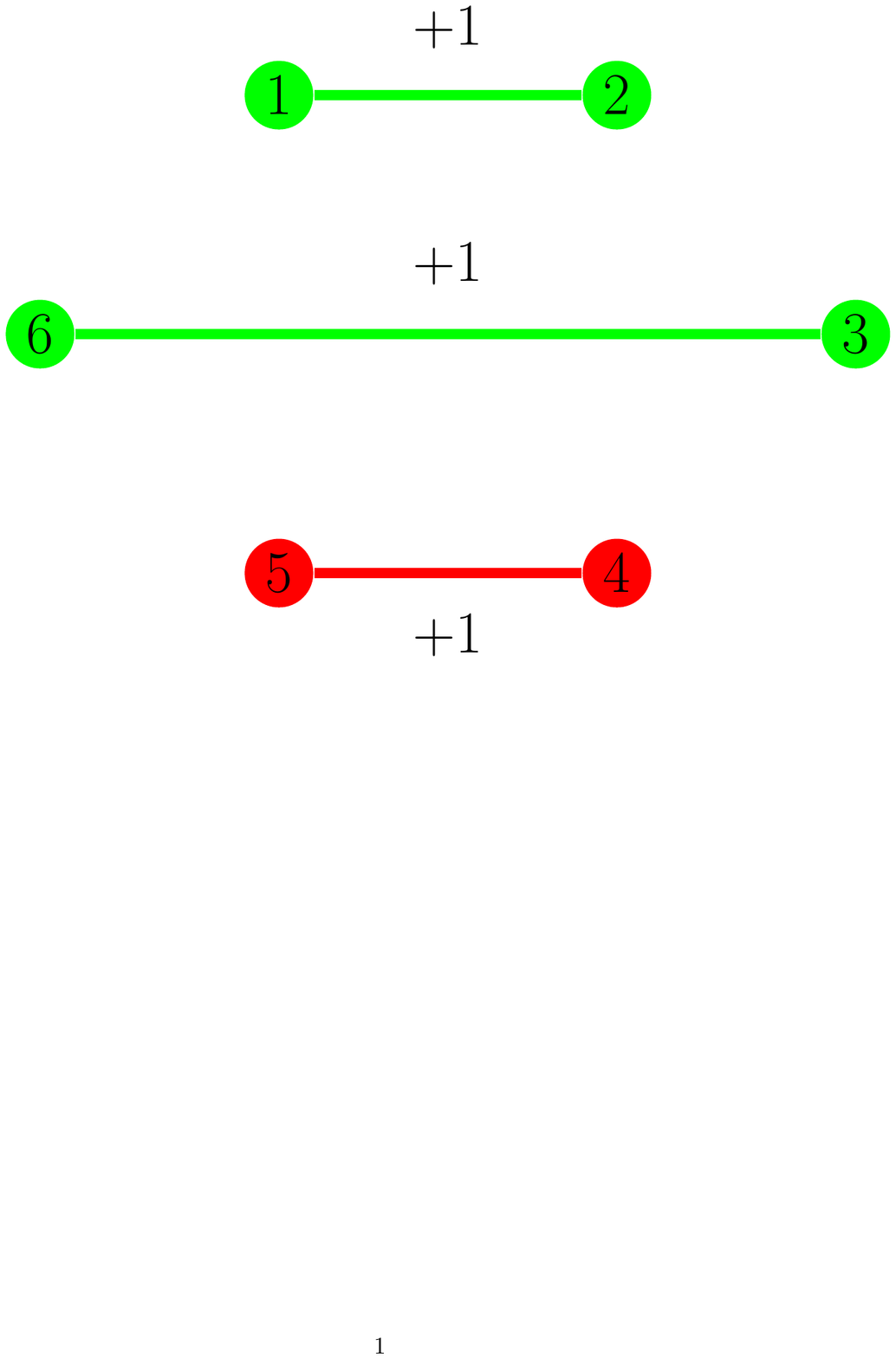}}
         \caption{First perfect matching of $\Theta_{|111001\rangle}(G_c^w)$}
         \label{fig:pm3}
\end{subfigure}
\hfill
\begin{subfigure}[b]{0.21\textwidth}
         \centering
         \fbox{\includegraphics[width=0.9\textwidth]{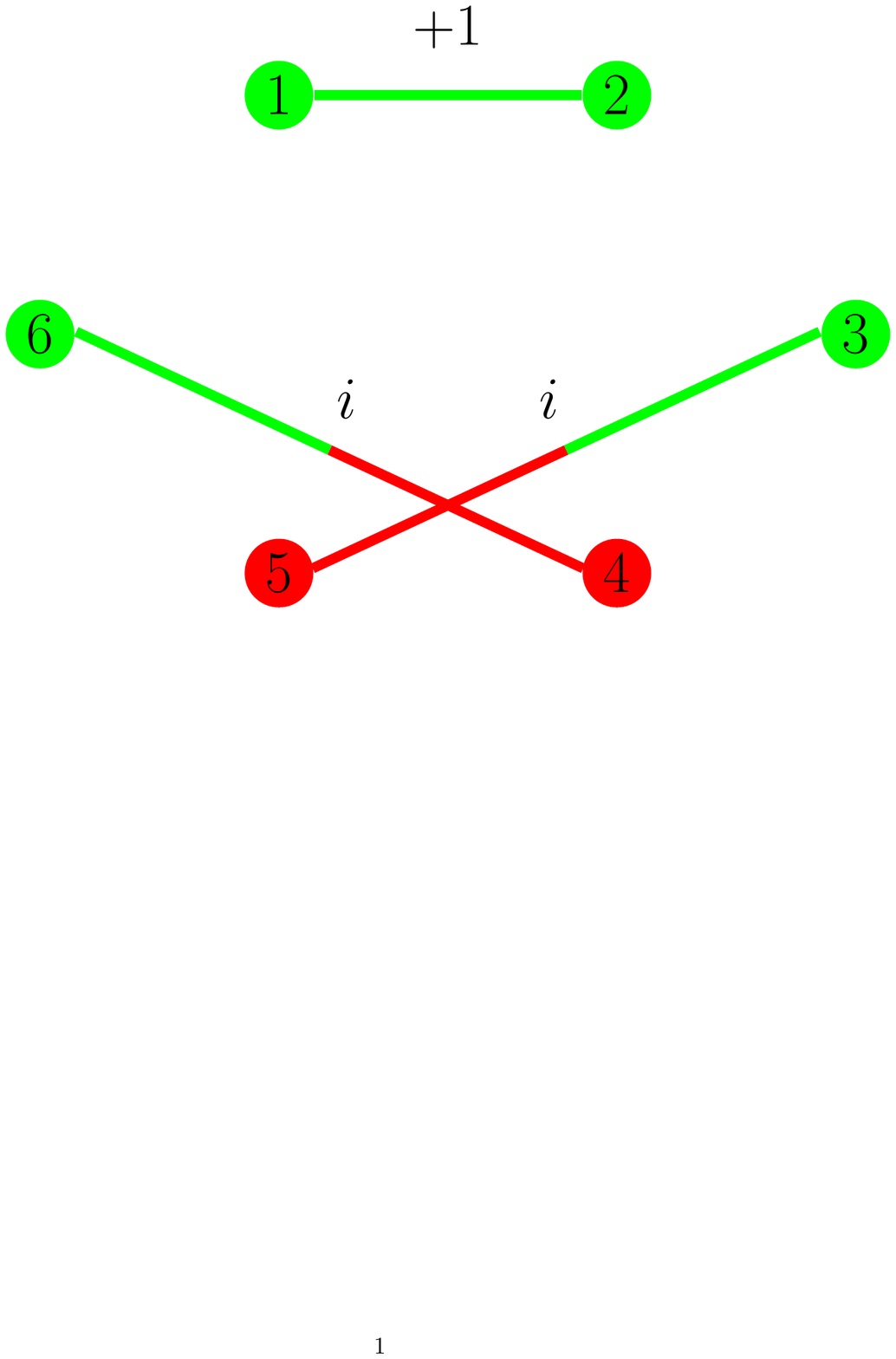}}
         \caption{Second perfect matching of $\Theta_{|111001\rangle}(G_c^w)$}
         \label{fig:pm4}
\end{subfigure}

\caption{The vertex colourings induced by the perfect matchings of \cref{fig:main_example}.}
\label{fig:perfect_matchings}
\end{figure}

%\begin{definition}
%The dimension of a valid experiment graph $G_c^w$, $\bar{\mu}(G,c,w)$, is the number of monochromatic vertex colourings.
%\end{definition}
%\textcolour{red}{fix colours}
\begin{definition}
For a graph $G$, the maximum $k$ such that there exists an edge colouring $c$ and edge-weight assignment $w$ for which $G_c^w$ is $k$-valid (valid with dimension $k$) is defined to be $\Bar{\mu}(G)$. 
\end{definition}

In a graph $G_c^w$, for the weight of the non-monochromatic vertex colouring $vc$ to be zero, it is not necessary that $\Theta_{vc}(G_c^w)$ does not have a perfect matching; the weights of perfect matchings can cancel out each other. For instance, perfect matchings in \cref{fig:pm3} and \cref{fig:pm4} add up to zero. However, when all edge weights are positive real numbers, such cancellations are not possible and hence there cannot be any non-monochromatic perfect matchings in $G_c^w$. Therefore, when all the weights are restricted to positive real values, the weighted problem would reduce to the unweighted problem of PM-valid edge colourings which was discussed in the earlier sections. It is easy to see that $\bar{\mu}(G)\geq \mu(G)$ (\cref{weight_unweight}). 

As we have observed already in terms of weights, the difference between unweighted and weighted case is that essentially we are not restricted to only real positive weights but are allowed to take any complex weights. In quantum physics terms this is expressed by saying that destructive interference is allowed. Krenn and Gu made a conjecture that allowing destructive interference does not help to get a higher dimension (and hence a statement analogous to \cref{bogdanov} for edge-weighted edge-coloured graphs will hold). The reader not familiar with quantum physics can safely discard these comments on destructive interference as the problem and its solution can be understood from a graph theoretic point of view. %The reader may also note that while the Krenn-Gu conjecture is for multi-graphs. 

%Krenn and Gu made a conjecture analogous to \cref{bogdanov} for edge-weighted edge-coloured graphs \cite{krenn_website}.
%In \cref{weight_unweight}, we observe that given an unweighted edge-coloured graph $G_c$ which is PMValid, it is possible to assign weights $w$ to the edges to make the graph $G_c^w$ $\mu(G)$-valid i.e. $\bar{\mu}(G)\geq \mu(G)$.

\begin{conjecture}[Krenn-Gu Conjecture]
\label{KGconj}
For a multigraph $G$ which is non-isomorphic to $K_4$, $\Bar{\mu}(G)\leq 2$ and $\Bar{\mu}(K_4)=3$.   
\end{conjecture}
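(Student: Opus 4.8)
The plan is to recast validity as a tensor identity, dispose of the low-degree and $K_4$ cases by a rank bound, and then isolate the regime where destructive interference must genuinely be tamed. For a perfect matching $P$ and a vertex $v$, let $\chi_P(v)$ be the colour of the half, at $v$, of the unique edge of $P$ meeting $v$, and form the tensor
\[
T(G_c^w)\;=\;\sum_{P} w(P)\,\bigotimes_{v\in V(G)} e_{\chi_P(v)},
\]
a vector in $(\mathbb{C}^d)^{\otimes |V(G)|}$, where $e_i$ denotes the $i$-th basis vector. Since $\Theta_{vc}(G_c^w)$ retains exactly those perfect matchings $P$ with $\chi_P = vc$, the coefficient of $\bigotimes_v e_{vc(v)}$ in $T(G_c^w)$ is precisely $w(vc)$; hence $G_c^w$ is $k$-valid if and only if $T(G_c^w)=\sum_{i=1}^{k}\bigotimes_{v\in V(G)} e_i$, the $k$-dimensional GHZ tensor. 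Redundant edges lie in no perfect matching and so contribute to no monomial, giving $\bar{\mu}(G)=\bar{\mu}(mcg(G))$; thus I may assume $G$ is matching-covered, and this reduction is valid for multigraphs as well.

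Second I would prove the flattening bound $\bar{\mu}(G)\le\delta(G)$. Fixing $v$ and flattening $T(G_c^w)$ across $\{v\}\mid V(G)\setminus\{v\}$, every monomial contributes on the $v$-side one of the at most $d(v)$ half-colours present at $v$, so this flattening has matrix rank at most $d(v)$; the same flattening of $\sum_{i=1}^{k}\bigotimes_{v\in V(G)} e_i$ is a sum of $k$ mutually orthogonal rank-one matrices and so has rank exactly $k$. Therefore $k\le d(v)$ for every $v$, i.e. $\bar{\mu}(G)\le\delta(G)$, and the argument is insensitive to edge multiplicities. This already proves $\bar{\mu}(G)\le 2$ whenever $mcg(G)$ has a vertex of degree at most $2$ (in particular whenever it is a cycle), and it yields $\bar{\mu}(K_4)\le 3$; combined with $\bar{\mu}(K_4)\ge\mu(K_4)=3$ from \cref{weight_unweight} and \cref{bogdanov}, it pins down $\bar{\mu}(K_4)=3$. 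What remains of \cref{KGconj} is therefore the single assertion: $\bar{\mu}(G)\le 2$ for every matching-covered $G\not\cong K_4$ with $\delta(G)\ge 3$.

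Third I would split this remaining regime according to the unweighted index. If $\mu(G)=2$, equivalently $mcg(G)\in\mathcal{H}$, I would invoke the type-2 resolution (\cref{type_2_resolution}): there the drum structure guaranteed by \cref{2_characterization_result} constrains the weighted matching polynomial so tightly that complex weights cannot create a third monochromatic colour, and $\bar{\mu}(G)=2$. This leaves the truly open case---matching-covered graphs with $\delta(G)\ge 3$, $G\not\cong K_4$, and $\mu(G)=1$ (equivalently $mcg(G)\notin\mathcal{H}$). Here the unweighted index is only $1$, yet a priori complex cancellations could manufacture three monochromatic colourings of weight $1$ while cancelling every mixed colouring; excluding this is the main obstacle, and the flattening bound is powerless because $\delta(G)$ may exceed $2$.

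Finally, to attack this crux I would pursue two complementary routes, and I expect both to be hard. The first is to couple the single-vertex flattenings at all vertices at once: demanding that the $k$ half-colour directions at each vertex glue into $k$ globally orthogonal states is a strong joint constraint, and the hope is to show that for $\delta(G)\ge 3$ it forces either $k\le 2$ or the $K_4$ pattern. The second is to use matching theory directly, decomposing the matching-covered graph along its tight cuts into bricks and braces, proving that $\bar{\mu}$ cannot increase under this decomposition, and reducing the conjecture to showing that $K_4$ is the unique brick with $\bar{\mu}=3$. I expect the brick analysis to be the hardest step, since there one must exclude genuine destructive interference with no alternating Hamiltonian cycle to lean on---and it is precisely this case that leaves \cref{KGconj} open.
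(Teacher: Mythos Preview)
The statement you are attempting is \emph{not} proved in the paper: it is the Krenn--Gu conjecture, stated as an open problem. The paper only establishes the partial results $\bar{\mu}(G)\le\delta(G)$ (\cref{min_degree_bound}), $\bar{\mu}(K_4)=3$, and $\bar{\mu}(G)=\mu(G)$ whenever $\mu(G)\neq 1$ (\cref{type_2_resolution}); the case $\mu(G)=1$ is explicitly left open, and indeed \cref{mu1_mubar2} shows that $\mu(G)=1$ need not force $\bar{\mu}(G)=1$.

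Your proposal is therefore not in conflict with the paper's own argument, because there is no such argument to compare against. Your first three steps are sound and parallel the paper's partial results: the tensor/flattening derivation of $\bar{\mu}(G)\le\delta(G)$ is a valid (and slightly more conceptual) alternative to the paper's direct counting proof of \cref{min_degree_bound}; your handling of $K_4$ is correct; and your invocation of \cref{type_2_resolution} for the $\mu(G)=2$ regime is exactly how the paper uses it. But you yourself identify the genuine gap: matching-covered $G\not\cong K_4$ with $\delta(G)\ge 3$ and $\mu(G)=1$. Neither of your two proposed routes (coupled flattenings, tight-cut decomposition into bricks and braces) is carried out, and you explicitly flag the brick analysis as ``the hardest step.'' So what you have written is a correct reduction of \cref{KGconj} to its known open core, together with two plausible strategies, not a proof; the crux case remains exactly as open after your proposal as before it, which is consistent with the paper's own stance that \cref{KGconj} is unresolved.
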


Surprisingly, we prove that once the structure corresponds to $\mu(G)=2$, then the freedom to choose any colouring (with possibly bi-chromatic edges) and to choose complex weight functions does not help to achieve a higher value of $\Bar\mu(G)$ for simple graphs (i.e. destructive interference indeed doesn't help in this case as Krenn and Gu predicted for all cases). %We partially resolve their conjecture by proving that if $\mu(G)\neq 1$, then their conjecture is true for simple graphs. 
\begin{theorem}
\label{type_2_resolution}
For a simple graph $G$, if $\mu(G)\neq 1$, then $\bar\mu(G)=\mu(G)$.
\end{theorem}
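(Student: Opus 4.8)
Throughout write $G=mcg(G)$: a redundant edge lies in no perfect matching, hence in no perfect matching of any filtered subgraph $\Theta_{vc}(G_c^w)$, so it never contributes to any $w(vc)$; thus $\bar\mu(G)=\bar\mu(mcg(G))$, and since $\mu(G)=\mu(mcg(G))$ I may assume $G$ is matching covered. Because $\bar\mu(G)\ge\mu(G)$ always (\cref{weight_unweight}), it suffices to prove $\bar\mu(G)\le\mu(G)$ in the cases $\mu(G)\in\{0,2,3\}$. If $\mu(G)=0$ then $G$ has no perfect matching; every $\Theta_{vc}(G_c^w)$ is a subgraph of $G$ and so has none either, hence every $w(vc)=0$ and $\bar\mu(G)=0$. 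The substance is the reformulation. Each perfect matching $P$ of $G$ induces a unique vertex colouring $vc_P$, where $vc_P(v)$ is the colour of the half $v_h(e)$ of the matched edge $e$ at $v$; moreover $P$ is a perfect matching of $\Theta_{vc}$ exactly when $vc=vc_P$, so $w(vc)=\sum_{P:\,vc_P=vc}w(P)$. Introducing a formal variable $t_v^{(i)}$ for ``$v$ has colour $i$'', $G_c^w$ is $k$-valid iff the generating polynomial $\Phi:=\sum_{P}w(P)\prod_{v}t_v^{(vc_P(v))}$ equals $\sum_{i=1}^{k}\prod_{v}t_v^{(i)}$. Reading $\Phi$ as a tensor over $\bigotimes_{v\in V(G)}\mathbb{C}^{\mathbb N}$, the target is the ``GHZ tensor'' $D_k:=\sum_{i=1}^{k}e_i^{\otimes|V(G)|}$, whose flattening along any bipartition $(A,\bar A)$ with $A,\bar A\neq\emptyset$ has rank exactly $k$ (the vectors $\bigotimes_{v\in A}e_i$ are independent, and likewise on $\bar A$).

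The first tool is a flattening lemma. Suppose $G$ has a $2$-edge-cut $(A,\bar A)$ with crossing edges $e_1,e_2$. In any perfect matching the number of crossing edges used is congruent to $|A|\pmod 2$, so the used set $T\subseteq\{e_1,e_2\}$ takes at most two values; for a fixed $T$ the matching decomposes into independent internal matchings of the two sides, so $\mathrm{col}_A$ and $\mathrm{col}_{\bar A}$ decouple and $w$ factorizes, making the class-$T$ contribution to $\mathrm{Flat}_A(\Phi)$ a single rank-one term. Hence $\mathrm{rank}\,\mathrm{Flat}_A(\Phi)\le 2$, and comparison with $\mathrm{rank}\,\mathrm{Flat}_A(D_k)=k$ forces $k\le 2$. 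This settles $\bar\mu(G)\le 2$ for every $G$ possessing a $2$-edge-cut; in particular, via \cref{4_cycle_observation}, it handles every $G\in{\cal H}$ having a vertex of degree $2$ (isolating such a vertex is a $2$-edge-cut).

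The second tool is an edge-disjointness bound. A monochromatic colour-$i$ vertex colouring is induced only by perfect matchings all of whose edges are $i$-monochromatic, and for distinct $i$ these edge sets are disjoint; so $\bar\mu(G)$ is at most the maximum number of edge-disjoint perfect matchings of $G$. For $K_4$ this number is $3$, giving $\bar\mu(K_4)\le 3$ and hence $\bar\mu(K_4)=3$, which disposes of $\mu(G)=3$. For $G\in{\cal H}$, the proof of \cref{structure_proof_dir_2} shows every perfect matching is \emph{red-based} or \emph{blue-based}; if some red $C$-edge lies in no red drum \emph{and} some blue $C$-edge lies in no blue drum, then all red-based matchings share an edge and all blue-based matchings share an edge, so at most two perfect matchings can be pairwise edge-disjoint and $\bar\mu(G)\le 2$. (Necklace-type graphs, where drums of both colours alternate, fall here.)

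The genuinely hard case — the main obstacle — is a matching-covered $G\in{\cal H}$ that is $3$-edge-connected and yet admits three edge-disjoint perfect matchings; such graphs do exist (for instance $C_8$ together with the drums $\{13,24\}$, $\{57,60\}$, $\{04,15\}$, where the red drums tile all red $C$-edges while a blue drum overlaps them). Here neither tool above applies: every bipartition crosses at least three edges, and the edge-disjointness count is $3$. The plan is to exploit the product structure $\Phi=\Phi_{\mathrm{red}}+\Phi_{\mathrm{blue}}$, where, since red drums are vertex-disjoint (\cref{red_2_blue_2}, \cref{cycle_edge_in_one_drum}), $\Phi_{\mathrm{red}}=A_{\mathrm{red}}\prod_{\text{red drums }j}(C_j+L_j)$ factors as a tensor product over a vertex partition (with $C_j,L_j$ the monomials of the two states of drum $j$), and symmetrically for $\Phi_{\mathrm{blue}}$; crucially, all edge weights are nonzero, so no vertex colouring induced by a single perfect matching can have weight $0$. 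Projecting every vertex outside one innermost drum onto a fixed colour reduces the identity $\Phi=D_k$ to a four-vertex polynomial identity among the rank-one monomials $C_j,L_j$ and the overlapping blue contributions; the remaining work is to turn these local identities, taken simultaneously over all drums and all colours and in the presence of bichromatic edges, into a contradiction with $k\ge 3$. I expect this to require an induction on the number of drums, contracting an innermost drum and using the laminarity of drums (\cref{legal_are_nice}, \cref{exactly_one_drum}) to keep the contracted graph within ${\cal H}$, so that the non-vanishing of weights repeatedly forces a ``mixed'' matching to be the unique inducer of a non-monochromatic colouring.
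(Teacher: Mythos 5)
Your proposal is sound on the easy cases but has a genuine, self-acknowledged gap at its centre: the case you label ``the genuinely hard case'' (a $3$-edge-connected graph in ${\cal H}$ with three pairwise edge-disjoint perfect matchings) is never proved. What you offer there is a plan --- project onto one innermost drum, then ``an induction on the number of drums, contracting an innermost drum'' --- with the key steps unverified: you do not show that the local four-vertex identities are contradictory for $k\geq 3$, and the contraction step is problematic (contracting a drum can create parallel edges and there is no argument that the contracted graph remains in ${\cal H}$, which is defined only for simple matching-covered graphs with at least $6$ vertices). Since for $\mu(G)=2$ this is exactly the case not covered by your two tools, the proof of the theorem is incomplete. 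The parts that do work: the reductions to $mcg(G)$ and to $\mu(G)\in\{0,2,3\}$, the $\mu=0$ and $K_4$ cases (your edge-disjointness bound is essentially the paper's count of at most $3$ disjoint perfect matchings in $K_4$), and your flattening lemma, which is correct and is in fact a genuine strengthening of the paper's \cref{min_degree_bound} (a rank-$\leq 2$ bound across any $2$-edge-cut rather than just $\bar\mu(G)\leq\delta(G)$) --- though for the cases where you apply it, \cref{min_degree_bound} already suffices.

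The idea you are missing is the paper's \cref{4_cycle_lemma}, and it dispatches your hard case in half a page with no tensor machinery or induction. By \cref{4_cycle_observation} there is a drum $D$ all of whose ${\cal P}(D)$-vertices have degree $2$; if ${\cal P}(D)\neq\emptyset$ then $\delta(mcg(G))\leq 2$ and \cref{min_degree_bound} finishes, and if ${\cal P}(D)=\emptyset$ the drum yields a $4$-cycle $abcd$ with $d(a)=d(c)=3$. In a putative $3$-valid $G_c^w$, the three edges at each of $a$ and $c$ must be monochromatic in three distinct colours, forcing opposite edges of the $4$-cycle to share a colour, say $c(ab)=c(cd)=1$ with $c(bc),c(ad)\neq 1$; then $w(\mathbf{1}_{V(G)})=w(\mathbf{1}_S)\,w(ab)\,w(cd)=1$ gives $w(\mathbf{1}_S)\neq 0$ for $S=V(G)\setminus\{a,b,c,d\}$, while the non-monochromatic colouring assigning $1$ to $S$, $c(bc)$ to $\{b,c\}$ and $c(ad)$ to $\{a,d\}$ has weight $w(\mathbf{1}_S)\,w(bc)\,w(ad)=0$, contradicting that all edge weights are nonzero. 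Notably, your own ``hard'' example contains this gadget: in $C_8$ with drums $\{13,24\}$, $\{57,60\}$, $\{04,15\}$, the drum $\{13,24\}$ has an empty side, and vertices $2,3$ have degree $3$ on the $4$-cycle $2,1,3,4$ --- so the obstacle you planned an induction for is precisely the configuration the paper exploits directly.
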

The above theorem is an application of our main result, which is the structural classification of graphs with $\mu(G)=2$. We first observe that a certain gadget is present in the graphs with $\mu(G)=2$, from \cref{2_characterization_result}. We then prove that this gadget would prevent from achieving a higher value of $\Bar{\mu}(G)$ in \cref{sec:proof_type_2_resolution}. A natural question one can ask is if  $\mu(G)$ and $\bar\mu(G)$ should always be the same. We prove that this is not the case in \cref{sec:mu1_mubar2}.
\begin{observation}\label{mu1_mubar2}
There exists a graph $G$, with $\mu(G) = 2$ and $\bar\mu(G)=1$.
\end{observation}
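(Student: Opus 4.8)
Before constructing anything, it is worth pinning down which way the separation can possibly go. Every PMValid $k$-edge-colouring is the special case of a weighted bi-chromatic colouring in which all edges are monochromatic with weight $1$, so one always has $\bar\mu(G)\ge\mu(G)$ (this is \cref{weight_unweight}). Consequently a graph with $\mu(G)=2$ is forced to have $\bar\mu(G)\ge 2$, and the two invariants can separate only in the direction $\mu(G)=1<\bar\mu(G)$. The plan is therefore to exhibit a single graph $G$ with $\mu(G)=1$ and $\bar\mu(G)=2$, and the natural candidate is the $6$-vertex graph $G$ drawn in \cref{fig:main_example}, equipped with the bi-chromatic colouring $c$ and complex weight assignment $w$ shown there. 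The verification then splits into two essentially independent parts.

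To establish $\mu(G)=1$, I would invoke \cref{2_characterization_result}, or equivalently the clause of \cref{simple_classification_theorem} characterising $\mu=1$. Since $G$ has a perfect matching we already have $\mu(G)\ge 1$, so it suffices to rule out $\mu(G)=2$, i.e. to show $mcg(G)\notin\mathcal H$. Concretely, by \cref{simple_classification_theorem} this amounts to checking that $G$ admits no red--blue alternating Hamiltonian cycle under \emph{any} unweighted $2$-colouring, so that \cref{ham_cond} fails; the cleanest sufficient condition is that $mcg(G)$ is not even Hamiltonian, or that no Hamiltonian cycle of it satisfies \cref{property1s}--\cref{property3s}. Either way this is a finite, mechanical check on a fixed small graph.

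To establish $\bar\mu(G)=2$, the lower bound $\bar\mu(G)\ge 2$ is read off directly from the displayed data: I would verify that the two monochromatic vertex colourings (all vertices coloured $0$ and all coloured $1$) each have weight exactly $1$, while every non-monochromatic vertex colouring has weight $0$. The substantive colourings are those whose filtered subgraph still contains perfect matchings, such as $|111001\rangle$ in \cref{fig:perfect_matchings}: here the complex weights are arranged so that the contributions cancel, the two matchings of \cref{fig:pm3} and \cref{fig:pm4} summing to zero. This cancellation is precisely the destructive interference unavailable in the unweighted regime, and it is what lets a graph with $\mu(G)=1$ still reach dimension $2$.

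The main obstacle is the matching upper bound $\bar\mu(G)\le 2$. Because the Krenn--Gu bound (\cref{KGconj}) is only conjectural, one cannot simply cite $\bar\mu\le 2$; dimension $3$ must be excluded by hand for this particular $G$. The route I would take exploits that $G$ is small with few perfect matchings, so the vector of matching-weights indexed by vertex colourings lives in a low-dimensional space; one argues that no choice of colouring and complex weights can make three monochromatic colourings simultaneously carry weight $1$ while annihilating all mixed colourings. A secondary care point is discharging the weight-$0$ condition for \emph{every} non-monochromatic colouring, not only those pictured, which requires tracking the complex weights through each filtered subgraph rather than relying on the illustrative cases in \cref{fig:perfect_matchings}.
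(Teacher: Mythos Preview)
Your diagnosis of the typo in the statement is correct, and your overall plan---use the graph of \cref{fig:main_example}/\cref{fig:uncoloured_example}, verify $\mu(G)=1$ via the characterisation, and verify $\bar\mu(G)=2$ from the displayed weighted colouring---matches the paper. But you make the upper bound $\bar\mu(G)\le 2$ much harder than it is.

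You propose to rule out $\bar\mu(G)=3$ by a bespoke linear-algebra argument on the space of matching weights. The paper instead just notes that $G$ has a vertex of degree $2$, so $\delta(G)=2$, and invokes \cref{min_degree_bound} to get $\bar\mu(G)\le\delta(G)=2$ in one line. No case analysis over colourings or weights is needed; the Krenn--Gu conjecture is irrelevant here because the minimum-degree bound already does the job.

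For $\mu(G)=1$ you are slightly vaguer than necessary. The graph \emph{is} Hamiltonian, so your ``cleanest sufficient condition'' of non-Hamiltonicity does not apply. The paper exhibits a single Hamiltonian cycle $C$ and a single illegal edge $\{3,6\}$ with respect to $C$; since $\mathcal{H}=\mathcal{H}'$ (\cref{hish}), failure of \cref{property1s} for \emph{one} Hamiltonian cycle already forces $mcg(G)\notin\mathcal{H}$, whence $\mu(G)\ne 2$ by \cref{2_characterization_result}, and $\mu(G)=1$ follows from \cref{bogdanov} together with the existence of a perfect matching. You do not need to canvass all Hamiltonian cycles.
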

%We prove \cref{mu1_mubar2} in \cref{sec:mu1_mubar2}

%For a detailed description of the how the quantum physical setup translates to this graph theoretic question, the reader might refer to \cite{krenn2019questions}.

\section{Proof of \cref{type_2_resolution}: Weighted case}
%\section{Proof of \cref{type_2_resolution}: If $\mu(G)\neq 1$, then $\bar\mu(G)=\mu(G)$.}
\label{sec:proof_type_2_resolution}
%\textcolor{red}{to do}
\subsection{Preliminaries for weighted edge-coloured graphs}
\label{sec:prelim}
We first make some simple observations for the weighted version. Let $G_c^w$ be a valid edge-coloured edge-weighted graph. %We call a normal colour class $i$ from the colouring $c$ to be \emph{redundant}, if it does not contain any monochromatic perfect matching of $G_c^w$.  We call a mixed colour class $(r,b)$ from the colouring $c$ to be \emph{redundant}, if either $r$ or $b$ is a redundant normal colour class. Reader may recall the definition of redundant edges and note that a redundant colour class may have non-redundant edges in the weighted case. 
%A graph is matching covered if every edge of it is part of at least one perfect matching. If an edge $e$ is not part of any perfect matching $M$, then we call the edge $e$ to be redundant. By removing all redundant edges from the given graph $G$, we get its unique maximum matching covered sub-graph $mcg(G)$.
Note that a colouring $c$ and a weight assignment $w$ of $G$ induces a colouring and a weight assignment for every subgraph of $G$, respectively. When there is no scope for confusion, we use $c,w$ itself to denote this induced colouring and weight assignment, respectively. It is easy to see that if $c,w$ makes $G$ a $k$-valid edge-coloured edge-weighted graph, they also make $mcg(G)$ a $k$-valid edge-coloured edge-weighted graph and $\bar{\mu}(G,c,w)=\bar{\mu}(mcg(G),c,w)$. Therefore, $\bar{\mu}(G)=\bar{\mu}(mcg(G))$.

For $S\subseteq V(G)$, we define $\mathbf{i}_S$ to be the monochromatic vertex colouring using colour $i$ over the induced subgraph $G_c^w[S]$. The weight of $\mathbf{i}_S$, denoted by $w(\mathbf{i}_S)=w(\Theta_{\mathbf{i}_S}(G_c^w[S]))$. We emphasize that this weight is with respect to the induced subgraph $G_c^w[S]$. Let $G_c^w[\mathbf{i},S]$ denote the subgraph of $G_c^w[S]$ formed by all monochromatic edges of $G_c^w[S]$ of colour $i$. 
\begin{observation}
\label{partition_split_lemma}
For $S \subseteq V(G)$, let the connected components of $G_c^w[\mathbf{i},S]$ be $\bigsqcup\limits_{j=1}^{r} S_j = S$. Then the weight of its monochromatic vertex colouring $\mathbf{i}_S$ is
$$w(\mathbf{i}_S) =
\prod\limits_{j=1}^{r}w(\mathbf{i}_{S_j})$$
\end{observation}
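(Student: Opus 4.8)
The plan is to unwind the definitions and reduce the claim to the familiar fact that the weighted perfect-matching sum (a permanent) factorises over connected components. First I would observe that, by the definition of the filtering operation $\Theta$, an edge $e=\{u,v\}$ of $G_c^w[S]$ survives the monochromatic colouring $\mathbf{i}_S$ exactly when both halves $u_h(e),v_h(e)$ carry colour $i$; that is, precisely when $e$ is a monochromatic edge of colour $i$. Hence $\Theta_{\mathbf{i}_S}(G_c^w[S]) = G_c^w[\mathbf{i},S]$, so that $w(\mathbf{i}_S)=w(G_c^w[\mathbf{i},S])$ by definition. Applying the same identification to each $S_j$ gives $w(\mathbf{i}_{S_j})=w(G_c^w[\mathbf{i},S_j])$, and since $S_j$ is a connected component of $G_c^w[\mathbf{i},S]$, the colour-$i$ monochromatic edges inside $S_j$ are exactly the edges of that component.

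Next I would set up the decomposition of perfect matchings. Because $S=\bigsqcup_{j=1}^{r} S_j$ and no colour-$i$ monochromatic edge joins two distinct components, every perfect matching $P$ of $G_c^w[\mathbf{i},S]$ restricts to a perfect matching $P_j$ of the component $S_j$; conversely, any choice of perfect matchings $P_1,\dots,P_r$, one per component, glues to a perfect matching $P=\bigsqcup_{j=1}^{r} P_j$ of the whole. This yields a bijection between the perfect matchings of $G_c^w[\mathbf{i},S]$ and the elements of the product $\prod_{j=1}^{r}\{\text{perfect matchings of } G_c^w[\mathbf{i},S_j]\}$. Under this bijection the weight multiplies: $w(P)=\prod_{e\in P}w(e)=\prod_{j=1}^{r}\prod_{e\in P_j}w(e)=\prod_{j=1}^{r} w(P_j)$.

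Finally I would apply the distributive law to turn the sum of products into the product of sums:
$$w(\mathbf{i}_S)=\sum_{P} w(P)=\sum_{(P_1,\dots,P_r)}\;\prod_{j=1}^{r} w(P_j)=\prod_{j=1}^{r}\Big(\sum_{P_j} w(P_j)\Big)=\prod_{j=1}^{r} w(\mathbf{i}_{S_j}),$$
which is exactly the claim.

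I do not expect a genuine obstacle here; the only points needing care are the degenerate cases and the component-separation claim. If some $S_j$ admits no perfect matching (for instance if it has odd order), then $w(\mathbf{i}_{S_j})=0$ and, simultaneously, $G_c^w[\mathbf{i},S]$ has no perfect matching, so both sides vanish and the identity still holds with the empty sum read as $0$. The one subtlety worth stating explicitly is that no colour-$i$ monochromatic edge of $G_c^w[S]$ joins two different components $S_j,S_k$; this is immediate from the definition of a connected component, but it is precisely what licenses the product decomposition of matchings. Everything else is the standard argument that the permanent factorises over connected components.
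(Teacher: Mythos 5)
Your proof is correct and matches the intended argument: the paper states \cref{partition_split_lemma} without proof, treating it as immediate, and your reasoning --- identifying $\Theta_{\mathbf{i}_S}(G_c^w[S])$ with $G_c^w[\mathbf{i},S]$, noting that perfect matchings decompose bijectively over connected components with multiplicative weights, and applying the distributive law --- is precisely the standard factorisation the paper leaves implicit. Your treatment of the degenerate case (a component with no perfect matching makes both sides vanish, since the weight of a graph with no perfect matching is $0$ by definition) is also sound.
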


\begin{observation}\label{min_degree_bound}
$\bar{\mu}(G)\leq \delta(G)$
\end{observation}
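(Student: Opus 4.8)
The plan is to prove $\bar{\mu}(G)\leq \delta(G)$ by showing that any vertex $v$ of minimum degree forces the number of distinct monochromatic vertex colourings with nonzero weight to be at most $d(v)=\delta(G)$. Since validity requires exactly $\bar{\mu}(G)$ monochromatic colourings of weight $1$ (and hence nonzero), bounding the number of weight-$1$ monochromatic colourings by $\delta(G)$ immediately gives the claim.

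First I would fix a valid edge-coloured edge-weighted graph $G_c^w$ attaining dimension $k=\bar{\mu}(G)$, so there are exactly $k$ monochromatic vertex colourings of weight $1$ and all others have weight $0$. Let $v$ be a vertex with $d(v)=\delta(G)$. The key observation is that in any monochromatic vertex colouring $\mathbf{i}_{V(G)}$ of colour $i$ with nonzero weight, the filtered subgraph $\Theta_{\mathbf{i}}(G_c^w)$ must have at least one perfect matching (otherwise its weight is $0$ by definition). Every perfect matching must cover $v$, so it uses some edge $e$ incident on $v$ whose half $v_h(e)$ is coloured $i$. Thus for the colouring $\mathbf{i}$ to have nonzero weight, there must exist at least one edge incident on $v$ whose endpoint-half at $v$ carries colour $i$.

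Next I would set up the counting. For each of the $d(v)=\delta(G)$ edges $e$ incident on $v$, the half $v_h(e)$ carries exactly one colour under $c$. Hence the set of colours appearing on the halves incident to $v$ has size at most $d(v)$. Since each monochromatic colouring of nonzero weight must use a colour that appears on some half incident to $v$ (by the matching-covering argument above), the number of distinct colours $i$ for which $\mathbf{i}_{V(G)}$ can have nonzero weight is at most the number of distinct colours on halves at $v$, which is at most $d(v)=\delta(G)$. Therefore the number of weight-$1$ monochromatic colourings is at most $\delta(G)$, giving $\bar{\mu}(G)=k\leq\delta(G)$.

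The main obstacle is making precise that distinct monochromatic colourings of nonzero weight must use distinct colours, and that a weight-$1$ colouring genuinely forces a perfect matching in the filtered subgraph. The first point is immediate since a monochromatic colouring is determined by its single colour, so two distinct such colourings use two distinct colours; the second is exactly the contrapositive of the definitional fact that $w(vc)=0$ whenever $\Theta_{vc}(G_c^w)$ has no perfect matching. I would therefore take care to state the matching-covering step cleanly: every perfect matching of the filtered subgraph saturates $v$, forcing an incident half of the matching colour, and this is the precise link between $\delta(G)$ and the number of admissible colours.
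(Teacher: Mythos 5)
Your proof is correct and follows essentially the same route as the paper: a weight-$1$ monochromatic colouring forces a perfect matching in the filtered subgraph, which forces an edge at $v$ whose half at $v$ (indeed both halves) carries that colour, so the $\bar{\mu}(G)$ distinct colours require $\bar{\mu}(G)$ distinct incident edges, giving $\bar{\mu}(G)\leq d(v)=\delta(G)$. Your version is slightly more explicit about the half-edge bookkeeping and about why distinct monochromatic colourings consume distinct colours, but the underlying argument is identical to the paper's.
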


\begin{proof}
 Let $c, w$ be such that $\bar{\mu}(G, c, w) = \bar{\mu}(G)$. Note that in $G_c^w$ , there are $ \bar{\mu}(G)$ colour classes containing a monochromatic perfect matching. Therefore, any vertex $v$ of $G$ has at least one monochromatic edge incident on it from each of the $ \bar{\mu}(G)$ colours. Therefore, v has degree at least $ \bar{\mu}(G)$. Therefore, the minimum degree of $G$ is at least $ \bar{\mu}(G)$.
\end{proof}

\begin{observation}\label{weight_unweight}
$\bar{\mu}(G)\geq \mu(G)$
\end{observation}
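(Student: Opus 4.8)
The plan is to take a PMValid $k$-edge-colouring of $G$ that witnesses $\mu(G)=k$ and lift it, essentially verbatim, into the bi-chromatic edge-weighted framework, choosing weights so as to meet the ``weight exactly $1$'' requirement of validity. Concretely, let $c_0$ be an ordinary (monochromatic-edge) $k$-edge-colouring of $G$ in which every perfect matching is monochromatic and each of the $k$ colour classes contains a perfect matching. I would build $c$ by colouring \emph{both} halves $u_h(e),v_h(e)$ of each edge $e$ with the single colour $c_0(e)$, so that under $c$ every edge is monochromatic and the colour classes coincide with those of $c_0$. The weights $w$ will be positive reals, chosen per colour class as described below. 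The claim is that $G_c^w$ is then $k$-valid, which gives $\bar{\mu}(G)\geq k=\mu(G)$.

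The first key step is that every non-monochromatic vertex colouring $vc$ has weight $0$. Since each edge $e$ is monochromatic of colour $c_0(e)$, the edge-filtering keeps $e=\{u,v\}$ in $\Theta_{vc}(G_c^w)$ only when $vc(u)=c_0(e)=vc(v)$. Hence any perfect matching $P$ surviving in $\Theta_{vc}(G_c^w)$ is a perfect matching of $G$ all of whose edges have colour equal to the shared $vc$-colour of their endpoints. But $P$, being a perfect matching of $G$, is monochromatic under $c_0$, say of colour $\alpha$; as $P$ covers every vertex, $vc$ must assign $\alpha$ to every vertex, forcing $vc$ to be monochromatic, contrary to assumption. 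Thus $\Theta_{vc}(G_c^w)$ has no perfect matching and $w(vc)=0$. The same reasoning shows that a monochromatic colouring in a colour not used by $c_0$ filters to an edgeless graph and also has weight $0$.

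The second key step is to arrange that each of the $k$ monochromatic colourings $\mathbf{1},\dots,\mathbf{k}$ has weight exactly $1$. For colour $i$, filtering by $\mathbf{i}$ retains precisely the colour-$i$ edges, so $\Theta_{\mathbf{i}}(G_c^w)$ is the colour-$i$ subgraph, whose perfect matchings are exactly the monochromatic colour-$i$ perfect matchings of $G$; there are $m_i\geq 1$ of them. Here I use that the colour classes partition $E(G)$, so scaling the weights of all colour-$i$ edges by a common positive real $t_i$ multiplies the weight of every colour-$i$ perfect matching (which has $n=|V(G)|/2$ edges) by $t_i^{n}$, while leaving the weights of all other colour classes untouched; this independence across colours holds precisely because every perfect matching is monochromatic. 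Taking $t_i=m_i^{-1/n}$ then makes $w(\mathbf{i})=m_i\, t_i^{n}=1$ for every $i$ simultaneously.

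Combining the two steps, under $c$ and this choice of $w$ the only vertex colourings of nonzero weight are $\mathbf{1},\dots,\mathbf{k}$, each of weight $1$, so $G_c^w$ is $k$-valid and $\bar{\mu}(G)\geq \mu(G)$. I expect the only genuinely delicate point to be the normalisation to weight \emph{exactly} $1$ rather than merely nonzero: a naive all-ones weighting would give $w(\mathbf{i})=m_i$, which can exceed $1$ when a colour class carries several perfect matchings. The per-colour rescaling above resolves this, and it is legitimate exactly because the monochromaticity of every perfect matching makes the colour classes behave independently under filtering. The degenerate case $\mu(G)=0$ is immediate, since then $G$ has no perfect matching, every filtered subgraph is perfect-matching-free, and $\bar{\mu}(G)\geq 0$ holds trivially.
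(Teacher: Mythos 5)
Your proof is correct and is essentially the paper's own argument: the paper likewise lifts the PMValid colouring with all edges monochromatic and assigns each colour-$i$ edge the weight $(pm_i)^{-2/|V(G)|}$, which is exactly your $t_i=m_i^{-1/n}$ with $n=|V(G)|/2$ edges per perfect matching, so that each monochromatic vertex colouring gets weight $1$ while monochromaticity of all perfect matchings forces every non-monochromatic vertex colouring to have weight $0$. Your write-up is slightly more explicit than the paper's (it spells out why the filtered subgraph for a non-monochromatic $vc$ has no perfect matching, and handles the $\mu(G)=0$ case), but the approach is the same.
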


\begin{proof}
%We prove that given an unweighted edge-coloured graph $G_c$ which is PMValid, it is possible to assign weights $w$ to the edges to make the graph $G_c^w$ perfectly monochromatic. 
It is sufficient to prove that given an unweighted edge-coloured graph $G_c$ which is PMValid, it is possible to assign weights $w$ to the edges so that the graph $G_c^w$ $\mu(G)$-valid. We first note that any edge which is part of a perfect matching in $G_c$ cannot be bi-chromatic as $c$ is PMValid. Let the colour of an edge $e$ be represented by $c(e)$. Let $pm_{c(e)}>0$ denote the number of monochromatic perfect matchings of colour $c(e)$ in $G_c$. Consider the weight function $w(e)= (pm_{c(e)})^{-2/n}$ on $G_c$. It is easy to see that the weight of each monochromatic perfect matching of colour $c(e)$ is ${pm_{c(e)}}^{-1}$ as each perfect matching has $n/2$ edges. Since there are $pm_{c(e)}$ many monochromatic perfect matchings of colour $c(e)$, the weight of all $\mu(G)$ monochromatic vertex colourings is $1$. As all perfect matchings are monochromatic, all non-monochromatic vertex colourings have weight $0$. Therefore, $G_c^w$ is $\mu(G)$-valid.
\end{proof}

\subsection{Proof of \cref{type_2_resolution}: If $\mu(G)\neq 1$, then $\bar\mu(G)=\mu(G)$}

\begin{proof}
If $\mu(G)=0$, from \cref{simple_classification_theorem}, $G$ has no perfect matching. Therefore, $\bar\mu(G)=0=\mu(G)$. If $\mu(G)=3$, we know that $G$ is $K_4$. As $K_4$ has $6$ edges, it has at most $3$ disjoint perfect matchings. Hence $\bar\mu(G)\leq 3$. We also know that $\bar\mu(G) \geq \mu(G) =3$. Therefore $\bar\mu(K_4)=3=\mu(K_4)$. We now prove that if $\mu(G) = 2$, then $\bar\mu(G)=2$ %using \cref{4_cycle_lemma} below. 

If $\mu(G)=2$, then for $mcg(G)$, from \cref{4_cycle_observation} there exists a drum $D$ (with respect to some Hamiltonian cycle) such that all vertices of the smaller part, ${\cal{P}}(D)$ have degree at most $2$. If ${\cal{P}}(D)$ is non-empty, then $\delta(mcg(G)) \leq 2$ and from \cref{min_degree_bound}, $\bar\mu(G) \leq 2$. %We know that  $\bar\mu(G) \geq \mu(G) =2$. Therefore, $\bar\mu(G) = 2$.

If ${\cal{P}}(D)$ is empty, then the strap between two vertices, say $a,c$ of the drum $D$ is just the cycle edge $ac$. Now, it is easy to see that $a,c$ have degree $3$. Let the other two vertices from the drum $D$ be $b,d$ such that $abcd$ is a $4$ cycle as shown in \cref{fig:gadget}. We now prove that existence of a slightly more general structure (of which \cref{fig:gadget} is a special case) would imply that $\bar\mu(G) \leq 2$ in \cref{4_cycle_lemma}.

%Therefore, from \cref{4_cycle_lemma}, $\bar\mu(G)=\bar\mu(mcg(G)) \leq 2$. 
\begin{figure}%[t!]
   \centering   
    \begin{minipage}{0.49\textwidth}
\centering    
{\includegraphics[width=.4\linewidth]{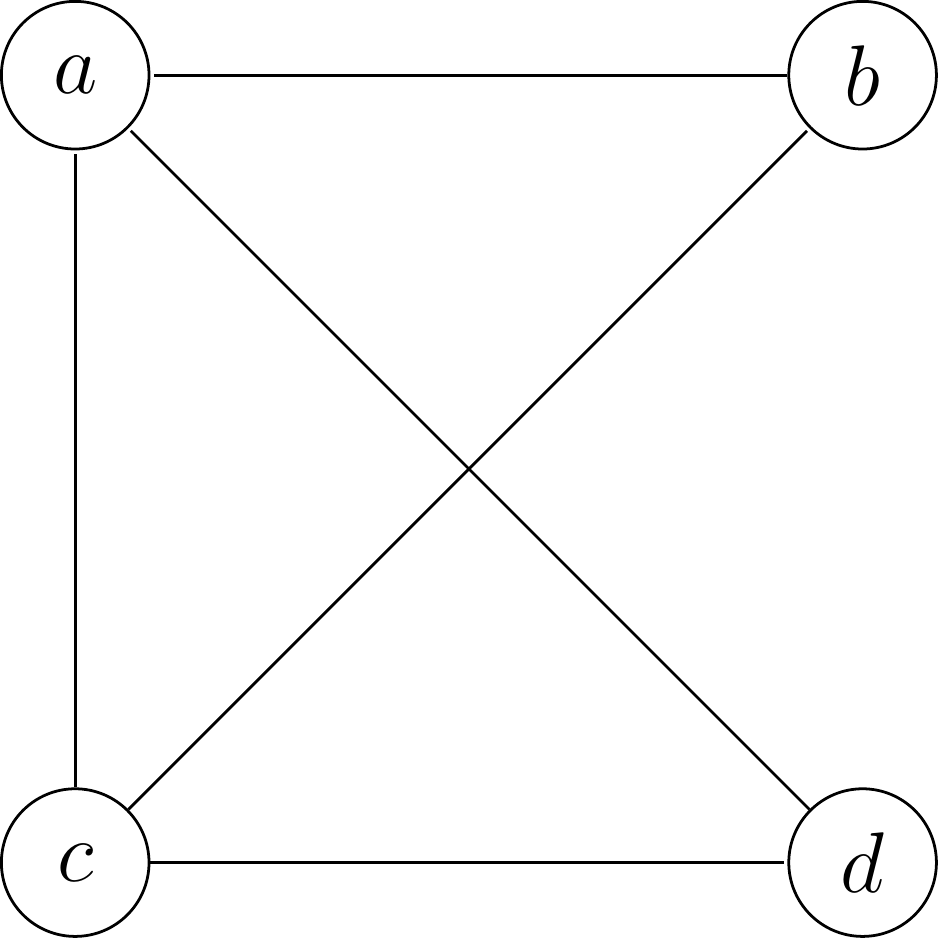}}
\caption{Gadget for \cref{4_cycle_lemma}}
\label{fig:gadget}
    \end{minipage}
    \begin{minipage}{0.49\textwidth}
\centering    
{\includegraphics[width=.69\linewidth]{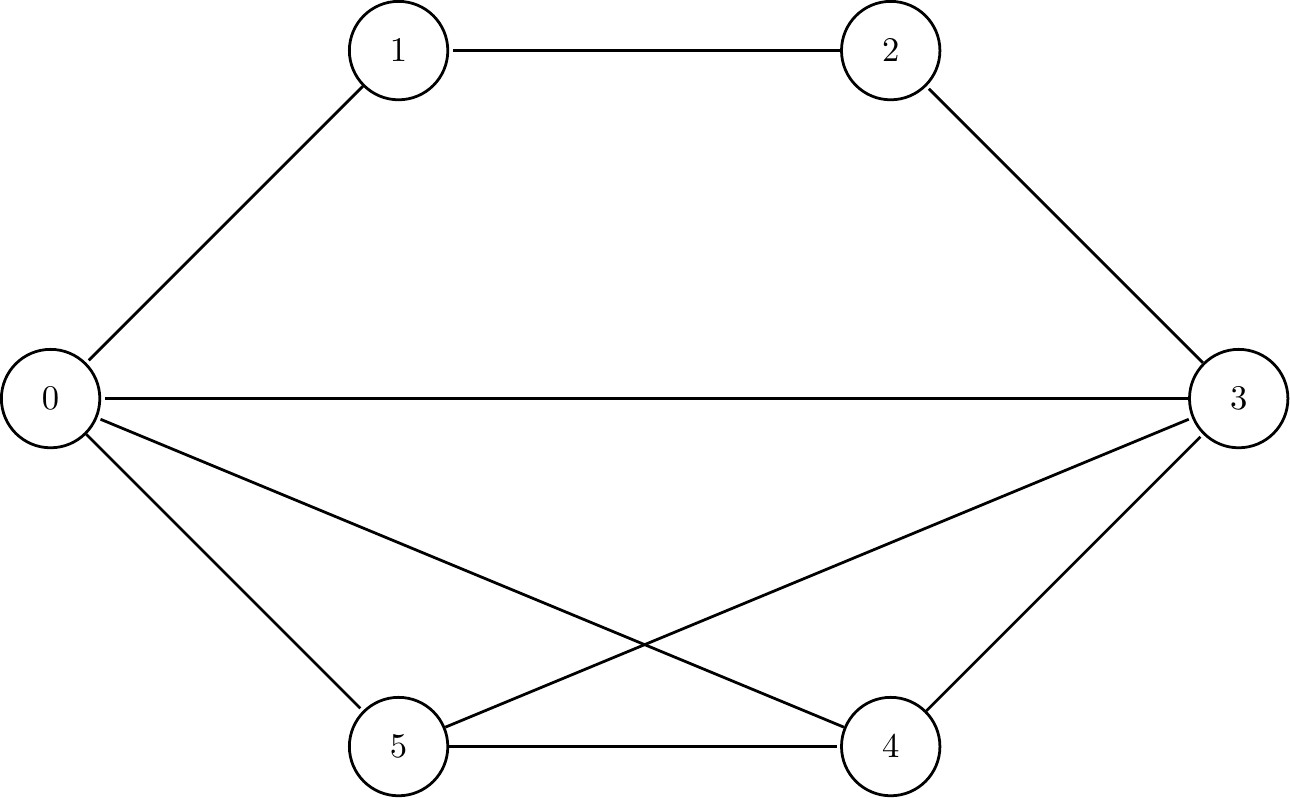}}
\caption{Example for \cref{mu1_mubar2}: \\ Graph $G$ with $\bar\mu(G)=1$ and $\mu(G)=2$}
\label{fig:uncoloured_example}  
    \end{minipage}
\end{figure}

\begin{lemma}
\label{4_cycle_lemma}
If there exists $a,b,c,d \in V(G)$ such that $d(a)=d(c)=3$ and $abcd$ form a $4$ cycle, then $\bar\mu(G) \leq 2$.
\end{lemma}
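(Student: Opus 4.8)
The plan is to assume $\bar\mu(G)\ge 3$ and derive a contradiction; since $\bar\mu(G)\le\delta(G)\le d(a)=3$ by \cref{min_degree_bound}, this forces $\bar\mu(G)\le 2$. So fix a colouring $c$ and weights $w$ witnessing $\bar\mu(G,c,w)=3$, the three weight-$1$ monochromatic colourings using colours $1,2,3$. First I would record the local structure at the degree-$3$ vertices. For each colour $i$ there is a monochromatic colour-$i$ perfect matching, so $a$ carries an incident monochromatic colour-$i$ edge; as $d(a)=3$, the three edges at $a$ are monochromatic and receive the three distinct colours. After relabelling assume $ab,ad,ap$ have colours $1,2,3$ (where $p$ is the third neighbour of $a$), and let the edges $cb,cd,cq$ at $c$ have colours $\beta,\delta,\gamma$, a permutation of $1,2,3$. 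Since a colour-$1$ matching forces the edge $ab$ at $a$ (occupying $b$), if $cb$ were colour $1$ then $c$ could not be matched in colour $1$; thus $\beta\neq1$, and symmetrically $\delta\neq2$.

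The engine of the proof is a bookkeeping identity coming from validity. Introduce a formal variable $x_{v,i}$ for each vertex $v$ and colour $i$, and for a perfect matching $P$ let $vc_P$ be the vertex colouring it induces (each vertex $v$ gets the colour of the half at $v$ of its $P$-edge). Since $P$ is compatible with exactly the colouring $vc_P$, validity with dimension $3$ is equivalent to the polynomial identity
\[
F:=\sum_{P}w(P)\prod_{v\in V(G)}x_{v,\,vc_P(v)}=\sum_{i=1}^{3}\prod_{v\in V(G)}x_{v,i},
\]
every monomial of whose right-hand side colours all vertices alike. I would then marginalise the $4$-cycle: set $G'=G-\{a,b,c,d\}$ and let $\Phi=\sum_{P'}w(P')\prod_{v\in V(G')}x_{v,\,vc_{P'}(v)}$ be the analogous polynomial for $G'$. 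The only perfect matchings of $G$ matching $\{a,b,c,d\}$ internally are $\{ab,cd\}\cup P'$ and $\{ad,bc\}\cup P'$ with $P'$ a perfect matching of $G'$, so these two families are the sole contributors to monomials in which $a,b,c,d$ are matched inside the cycle.

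The key extraction is a pair of symmetric claims. Every matching in the family $\{ab,cd\}$ contributes $x_{a,1}x_{b,1}x_{c,\delta}x_{d,\delta}$ times $\Phi$. If $\delta\neq1$ these monomials colour $a,c$ differently, hence are absent from $F$, forcing $\Phi\equiv0$; if $\delta=1$ they are exactly the monomials of $F$ in which $a,b,c,d$ are all coloured $1$, and comparing coefficients gives $\Phi=\tfrac{1}{w(ab)w(cd)}\prod_{v\in V(G')}x_{v,1}$, nonzero and supported on the single monomial $\prod_{v\in V(G')}x_{v,1}$. The analysis of $\{ad,bc\}$ gives $\Phi\equiv0$ if $\beta\neq2$, and $\Phi=\tfrac{1}{w(ad)w(bc)}\prod_{v\in V(G')}x_{v,2}$ if $\beta=2$. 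Combining, using $\beta\in\{2,3\}$, $\delta\in\{1,3\}$, $\beta\neq\delta$: when $\delta=1,\beta=2$ the two claims make $\Phi$ a nonzero multiple of two \emph{different} monomials (distinct because $G\not\cong K_4$ forces $|V(G)|\ge6$, so $|V(G')|\ge2$), which is impossible; when $\delta=3,\beta=2$ or $\delta=1,\beta=3$ one claim gives $\Phi\equiv0$ and the other $\Phi\neq0$. Every case is contradictory, so $\bar\mu(G)\le2$.

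I expect the main obstacle to be the symmetric case $\delta=1,\beta=2$, in which the $4$-cycle is coloured by its two perfect matchings (colours $1$ and $2$) and the pendant edges carry colour $3$: here no single mismatched colouring rules out dimension $3$, and one really must observe that the two monochromatic requirements pull the residual polynomial $\Phi$ of $G-\{a,b,c,d\}$ onto incompatible monomials. The remaining care is routine: checking that the two internal matchings are the only contributors to the relevant monomials even in degenerate configurations (for instance when $a\sim c$, so $p=c$, $q=a$, $\gamma=3$, and an extra internal matching $\{ac,bd\}$ may exist, which however only touches colour $3$), and verifying the exclusions $\beta\neq1$, $\delta\neq2$ that prune the one combinatorial case the coefficient comparison does not itself kill.
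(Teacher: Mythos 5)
Your proof is correct and is essentially the paper's own argument in generating-function clothing: the key ingredients --- the min-degree bound from \cref{min_degree_bound}, the fact that the three edges at each of the degree-$3$ vertices $a,c$ are monochromatic in the three distinct colours, and the observation that conditioning on the colours of $a$ and $c$ forces every contributing perfect matching through one of the two internal matchings $\{ab,cd\}$ or $\{ad,bc\}$ and hence factors it through the residual graph $G-\{a,b,c,d\}$ --- are exactly those of the paper, with your coefficient extractions at $x_{a,1}x_{c,\delta}$ and $x_{a,2}x_{c,\beta}$ playing precisely the role of the paper's two weight evaluations $w(\mathbf{1}_{V(G)})=w(\mathbf{1}_S)w(ab)w(cd)=1$ and $w(vc)=w(\mathbf{1}_S)w(bc)w(ad)=0$. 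The only cosmetic differences are that the paper uses a pigeonhole normalisation (some opposite pair of cycle edges shares a colour) to collapse your three $(\beta,\delta)$ cases into the single uniform contradiction $w(\mathbf{1}_S)\neq 0=w(\mathbf{1}_S)$, and that your explicit caveat $|V(G')|\geq 2$ (needed in the symmetric case, and justified under the standing assumption $\bar\mu(G)=3$, which gives $\delta(G)\geq 3$ and even order) corresponds to the paper's implicit reliance on $S\neq\emptyset$, an assumption the lemma's bare statement likewise needs since it fails verbatim for $K_4$.
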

\begin{proof}
Towards a contradiction, let $\bar\mu(G) \geq 3$. As there is a $3$ degree vertex, $\delta(G)\leq 3$. From \cref{min_degree_bound}, $\bar\mu(G) \leq \delta(G)\leq 3$. Therefore, $\bar\mu(G) = 3$ and hence there exists a colouring $c$ and weight $w$, such that $G_c^w$ is $3$-valid. Let the three colour classes be $1,2,3$. Note that as $a,c$ have degree $3$, they should have at least one monochromatic edge incident on them from each of the colour classes $1,2,3$. Therefore, they can neither have two edges from the same colour class incident on them or a bi-chromatic coloured edge incident on them. It is easy to see that all the edges of the cycle $abcd$ are monochromatic.

Also note that both the edges incident on $b$ from the cycle $abcd$ cannot belong to the same colour class. This is because if $ab,bc$ (which are monochromatic) have the same colour, say $1$, a monochromatic perfect matching associated with the colour $1$ must contain both $ab$ and $bc$ (as the only one $1$-coloured monochromatic edge incident on $a$ is $ab$ and the only one $1$-coloured monochromatic edge incident on $c$ is $bc$). But there can not be two edges from a perfect matching incident on $b$. Same observation holds for the vertex $d$, that is $da,dc$ must have different colours.

Since $3$ colour classes contain the $4$ edges of the cycle $abcd$, a colour class must contain at least two edges. As these edges can not be adjacent as proved above, without loss of generality, let $ab,cd$ be of colour $1$. Therefore, each of $bc$ and $ad$ can not be coloured $1$; they are coloured with $2$ or $3$. Let $S = V(G)-\{a,b,c,d\}$. From \cref{partition_split_lemma}, $w(\mathbf{1}_{V(G)})=w(\mathbf{1}_{S})w(ab)w(cd)=1$. Therefore, $w(\mathbf{1}_{S}) \neq 0$.

Consider the vertex colouring $vc$, where $S$ is coloured $1$; $\{b,c\}$ are coloured $c_{bc}$; and $\{a,d\}$ is coloured $c_{ad}$. Recall that $c_{bc}, c_{ad}$ cannot be equal to $1$ (as they are adjacent to the edge $ab$ which is of colour $1$). As  $a,c$ have three monochromatic edges of different colours incident on each them, the only edge of colour $c_{ad}$ incident on $a$ is $ad$. Similarly, the only edge of colour $c_{bc}$ incident on $c$ is $bc$. Therefore any perfect matching of $\Theta_{vc}(G_c^w)$ must contain the edges $bc,ad$. Hence, it is easy to see that $w(vc)=w(\mathbf{1}_S)w(bc)w(ad)=0$. By assumption as the edge weights are non-zero, we get $w(\mathbf{1}_S)= 0$. But we know that $w(\mathbf{1}_{S}) \neq 0$. Contradiction. 
 
\end{proof}
We know that  $\bar\mu(G) \geq \mu(G) =2$. Therefore, $\bar\mu(G) = 2$.
\end{proof}

\subsection{Proof of \cref{mu1_mubar2}: A graph $G$, with $\mu(G) = 1$ and $\bar\mu(G)=2$.}
\label{sec:mu1_mubar2}

\begin{proof}
We prove that the graph $G$ shown in \cref{fig:uncoloured_example} satisfies $\mu(G)=1$ and $\bar\mu(G)=2$. Consider the cyclic ordering along a Hamiltonian cycle $C$ as shown. As there is an illegal edge $(3,6)$ in $G$ with respect to $C$, $G \notin {\cal{H}}$ by defintion of ${\cal{H}}$. It follows from \cref{2_characterization_result} that  $\mu(G)\neq 2$ and hence $\mu(G) = 1$ from \cref{bogdanov}.

There is a $c,w$ on $G$ as shown in \cref{fig:main_example} such that $\bar\mu(G,c,w)=2$. 
Hence $\bar\mu(G)\geq 2$. Note that $\delta(G)=2$. Since there exits a vertex of degree $2$, it is easy to see that  $\bar\mu(G)\leq 2$. Therefore, $\bar\mu(G) = 2$.

%It follows from \cref{min_degree_bound} that  $\bar\mu(G)\leq 2$. Therefore, $\bar\mu(G) = 2$
\end{proof}

	\bibliographystyle{apalike}
	\bibliography{references.bib}

	%\vspace{-4mm}

\section{Appendix: The experiment for GHZ state generation}\label{appendix}
The material discussed here is not a contribution of this paper. It is added for the convenience of the reader. We will first outline the experimental setup for GHZ state generation, skipping some technical details. The reader may refer to \cite{krenn2019questions} for the exact details. Several crystals (shaded in blue in \cref{fig:main_experiment}) are used that are intercepted by laser beams. This results in the generation of an entangled photon pair in each crystal. An entangled photon pair is associated with an amplitude (which is a complex number), and each photon in the pair is associated with a mode number. The parameters of the laser can be adjusted so that each photon in a pair gets the desired mode number and the photon pair gets the desired amplitude. It is convenient to represent each possible mode number with a distinct colour. For instance, in crystal V in \cref{fig:main_experiment}, one photon got the colour red (mode number $0$), and the other photon got the colour green (mode number $1$), and the amplitude of the pair is the complex number $i$. Each photon has a photon path associated with it, which ends at a photon detector (the red photon in crystal V is associated with a photon path ending at photon detector $4$ and the green photon in crystal V is associated with a photon path ending at photon detector $6$). Note that the photon paths of multiple photons (from different crystals) can end at the same photon detector. But each of these photon detectors can detect exactly one photon. As the photons come in entangled pairs, it so happens that they are always detected in pairs, i.e. if the photon detector $6$ detects the green photon in crystal $V$, then the red photon in crystal $V$ must have been detected by the photon detector $4$. Thus we can talk about an entangled photon pair being detected rather than an individual photon being detected. It is worth noting that the amplitude of a photon pair is related to the probability of it being detected. 

\begin{figure}[t!]
    \centering   
    \begin{minipage}{0.99\textwidth}
\centering    
{\includegraphics[width=90mm]{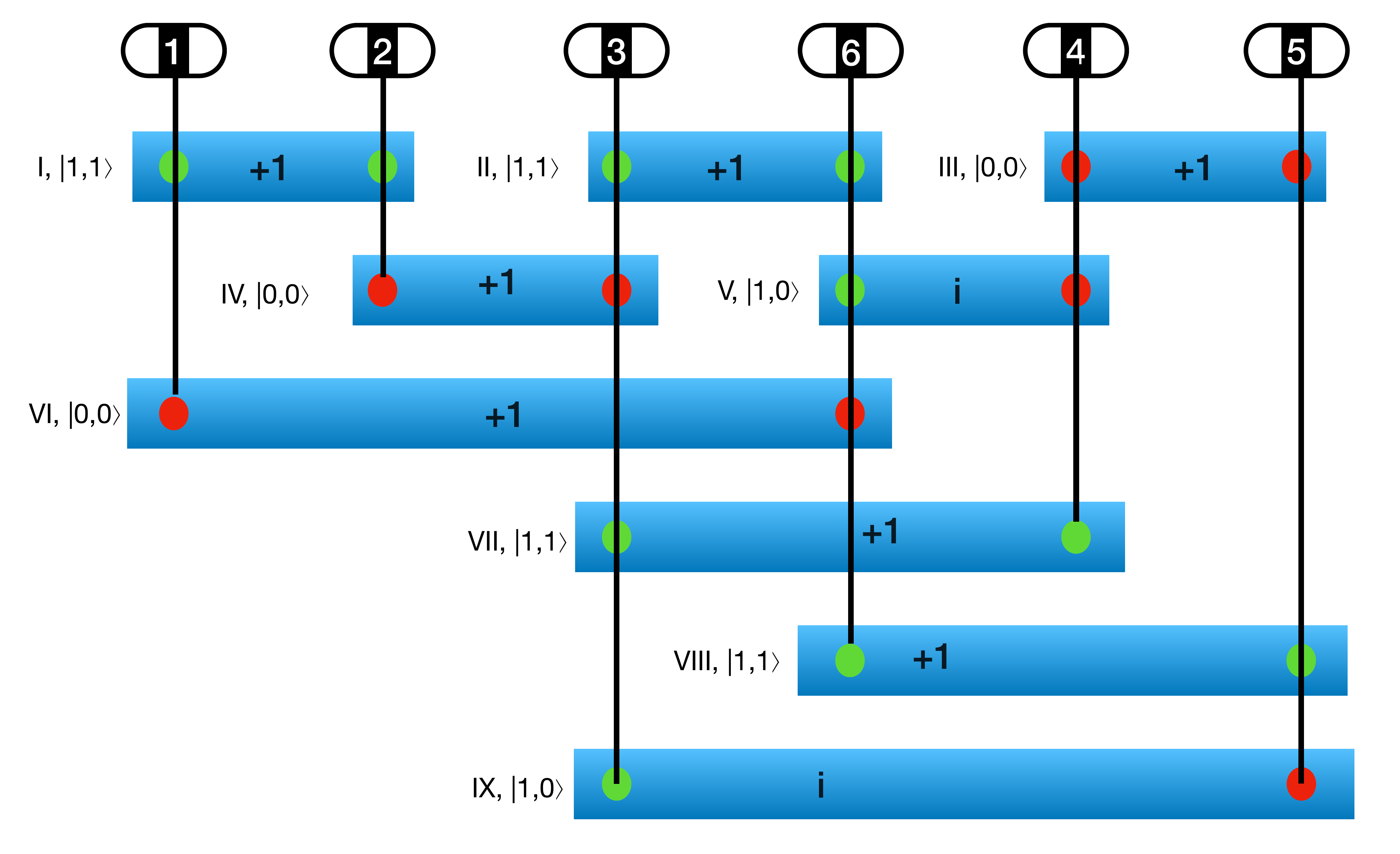}} \caption{An experiment to produce a $6$ particle $2$-dimensional GHZ state.}
 \label{fig:main_experiment}
    \end{minipage}
\end{figure}

In a measurement, one can know the mode numbers of the photons detected on each photon detector. Using ket notation, we can represent the result of the measurement as $|x_1x_2\ldots x_n\rangle$, where $x_i$ denotes the mode number of the photon detected by the $i$th photon detector. For instance, the only way the measurement can result in the multi-photonic term $|11111\rangle$ is if the photon pairs from the crystals $\{\text{I,VII,VIII}\}$ are detected. Whereas the multi-photonic term $|111001\rangle$ can result due to two events, namely, if the photon pairs from the crystals $\{\text{I,V,IX}\}$ are detected or if the photon pairs from the crystals $\{\text{I,II,III}\}$ are detected. The weight of an event (which is related to the probability of occurrence of all photon pairs in that event being simultaneously detected) is the product of the amplitudes of the detected photon pairs. For instance, the weight of the event $\{\text{I,VII,VIII}\}$ is $1\cdot1\cdot1=1$. Now the weight of a measurement resulting in the multi-photonic term is the sum of the weights of the event causing it. For instance, the term $|111001\rangle$ is caused by the events $\{\text{I,V,IX}\},\{\text{I,II,III}\}$ and hence the weight of $|111001\rangle$ is $1\cdot i \cdot i+ 1\cdot1\cdot1=0$. A multi-photonic term is pure if all the photon detectors detect the same mode number and are impure otherwise. For the experiment to lead to a GHZ state, the weights of impure multi-photonic terms should be zero, and pure multi-photonic terms should be non-zero and equal (this can be assumed to be one by normalizing). The number of non-zero multi-photonic terms corresponds to the dimension of the GHZ state.

In \cref{fig:main_experiment}, the event  $\{\text{I,VII,VIII}\}$ is causing the pure multi-photonic term $|11111\rangle$ and the weight of $|11111\rangle$ is $1\cdot1\cdot1=1$ as required. The event $\{\text{III,IV,VI}\}$ is causing the pure term $|000000\rangle$ and the weight of $|000000\rangle$ is $1\cdot1\cdot1=1$ as required. The events $\{\text{I,V,IX}\},\{\text{I,II,III}\}$ are causing the impure term $|111001\rangle$ and the weight of $|111001\rangle$ is $1\cdot i \cdot i+ 1\cdot1\cdot1=0$ as required. These are the only possible multi-photonic terms that can be caused by some event. Therefore, this experiment would give a $6$-particle $2$-dimensional GHZ state. $$|GHZ_{6,2}\rangle = \frac{1}{\sqrt{2}}\left(|000000\rangle + |111111\rangle \right)$$

%\vspace{-4mm}
\subsection{Connection to graph theory}
We now translate the experiment in \cref{fig:main_experiment} to the edge-weighted edge-coloured graph of \cref{fig:main_example}. Each photon detector corresponds to a vertex, and each photon pair generated in a crystal corresponds to an edge. If the photon paths through the photons in a crystal end at photon detectors $u,v$, then we put an edge between the vertices corresponding to the photon detectors $u,v$. The weight of an edge is the amplitude of the corresponding photon pair. Colours on the edges of the experiment graph capture the mode number. As the two photons in a pair can have different mode numbers, each edge can be bi-chromatic; half of an edge gets one colour, and the other half might get a different colour~\cite{krenn2019questions}. {For example, in \cref{fig:main_example}, there is a bi-chromatic edge between the vertices $3,5$ due to the photon pair generated in crystal IX. The photon whose path ends at the photon detector corresponding to vertex $3$ has a mode number corresponding to the colour green, and the other photon whose path ends at the photon detector corresponding to vertex $5$ has a mode number corresponding to the colour red.}

It is easy to see that a multi-photonic resulting from a measurement corresponds to a vertex colouring of the graph. %Also, it is obvious that all possible vertex colourings can not result from a measurement since entangled pairs of photons are always detected together. 
The weight of a multi-photonic term (vertex colouring) should be the sum of the weights of all events causing it. An event corresponds to a set of photon pairs (disjoint edges) that cover all photon detectors (vertices) in the experiment (graph). It is easy to see that such an edge set is a perfect matching of the graph. Each perfect matching of an edge-coloured graph induces a vertex colouring, where each vertex of the graph has the colour of the unique half edge of the matching incident to the vertex; see, for instance, \cref{fig:perfect_matchings}. The weight of a perfect matching (event) is the product of the weights of all its edges (amplitudes of the crystals). A vertex colouring is defined to be feasible if it is induced by at least one perfect matching. It is now easy to see that the weight of a feasible vertex colouring is the sum of weights of all perfect matchings inducing that vertex colouring. Therefore, in graph theoretic terms, the quantum state produced by an experiment is a GHZ state when
\begin{enumerate}
    \item All feasible monochromatic vertex colourings have a weight of $1$.
    \item All non-monochromatic vertex colourings have a weight of $0$.
\end{enumerate}

We consider the weight of all vertex colourings, which are not feasible, to be zero by default. If the monochromatic vertex colouring of colour $i$ is not {feasible}, then all edges with at least half of it coloured $i$ can be discarded as such a mode number $i$ would not help in increasing the dimension of the corresponding GHZ state. This is consistent with interpreting the weight of vertex colourings as the probability of the corresponding multi-photonic term. So, we assume all monochromatic vertex colourings to be feasible when there is no scope for confusion. An edge-coloured edge-weighted graph in which the above two properties are satisfied is called a \textit{perfectly monochromatic} graph. 

As the experiment designer can set the mode numbers and amplitudes, if we find a colouring and a weight assignment for edges of a given finite undirected graph such that the resulting edge-coloured edge-weighted graph is {perfectly monochromatic}, then we can create an experiment which produces a GHZ state.

Experiments with parameters corresponding to a perfectly monochromatic graph produce a GHZ state. Its dimension and the number of particles are equal to the number of colours and vertices of the graph, respectively. The dimension for an edge-colouring $c$ and an edge weight assignment $w$, which makes the graph $G$ perfectly monochromatic, is represented as $\mu(G,c,w)$. In \cref{fig:3dimk4}, we have an edge-coloured edge-weighted perfectly monochromatic $K_4$. The corresponding experiment would produce a GHZ state of dimension $3$ using $4$ particles. 

%$ |GHZ_{4,3}\rangle = \frac{1}{\sqrt{3}}\left(|0000\rangle + |1111\rangle + |2222\rangle\right) $

For a given graph $G$, many possible colourings and weight assignments may make it perfectly monochromatic. For each such perfectly monochromatic edge-coloured edge-weighted graph, a dimension is achieved. The maximum dimension achieved over all possible perfectly monochromatic edge-coloured edge-weighted graphs with the finite undirected graph $G$ as their skeleton is known as the \textit{matching index} of $G$, denoted by $\mu(G)$. %In \cref{fig:k2}, we have an edge-coloured edge-weighted perfectly monochromatic $K_2$ of dimension $t$. Note that $t$ can be arbitrarily large. Therefore, $\mu(K_2)=\infty$. However, only two particles are involved; for a GHZ state to form, we need more than two particles. Therefore, such construction will not give a GHZ state. Given a multi-graph, its skeleton is its underlying simple graph. We note that the matching index is defined for a simple graph $G$ by taking the maximum over all possible multi-graphs with a skeleton $G$. For instance, the simple graph $K_2$ has only one edge. However, we considered all possible multi-graphs having a skeleton $K_2$ to define $\mu(K_2)$. 

\begin{figure}[t!]
    \centering
%\centering    
%\begin{subfigure}[b]{0.4\textwidth}
         \centering
         {\includegraphics[width=40mm]{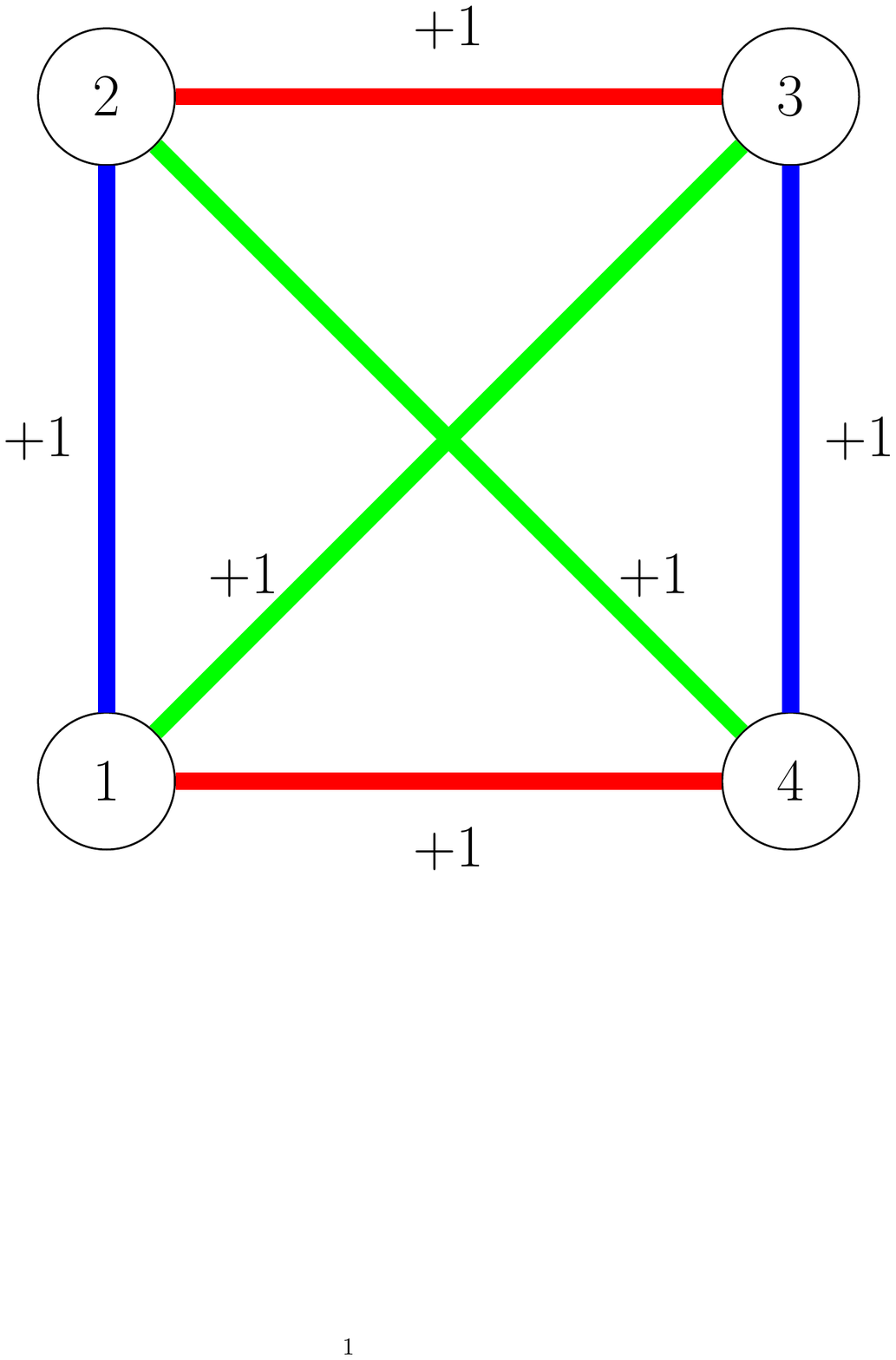}}
         \caption{$K_4$ with $3$ dimensions. %$\{1,2\},\{3,4\}$ are of colour blue. $\{1,3\},\{2,4\}$ are of colour green. $\{1,4\},\{2,3\}$ are of colour red.
         }
         \label{fig:3dimk4}
%\end{subfigure}

%\caption{Perfectly monochromatic graphs}
\label{fig:perfectly_mono_examples}
\end{figure}

It is easy to see that if a graph has a perfect matching, it must contain an even number of vertices. So, we consider matching indices of graphs with even and at least $4$ vertices for the rest of the manuscript. From \cref{fig:3dimk4}, we know that $\mu(K_4)\geq 3$ and, despite the use of huge computational resources \cite{AI1,AI2,neugebauer}, this is the only (up to an isomorphism) known graph of the matching index at least $3$. Any graph with a matching index of at least $3$ and $n$ vertices would lead to a new GHZ state of dimension at least $3$ with $n$ entangled particles. Motivated by this, this problem has been extensively promoted\cite{mixon_website,krenn_website}. Krenn and Gu conjectured that 
\begin{conjecture}
\label{krenn_conjecture}
If $G$ is non-isomorphic to $K_4$, then $\mu(G) \leq 2$ and $\mu(K_4) = 3$.
\end{conjecture}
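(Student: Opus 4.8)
The plan is to argue by contradiction: assume $\bar\mu(G)\ge 3$ and exploit the two degree-$3$ vertices $a,c$ to force a rigid local colouring of the $4$-cycle that a valid weighting cannot support. Since $d(a)=3$, \cref{min_degree_bound} gives $\bar\mu(G)\le\delta(G)\le 3$, so in fact $\bar\mu(G)=3$. Fix a colouring and a weight assignment $w$ witnessing $3$-validity and call the colours $1,2,3$. Each colour class carries a monochromatic perfect matching of weight $1$, so every colour $i\in\{1,2,3\}$ covers $a$ by some monochromatic edge; as $a$ has only three edges, all three edges at $a$ are monochromatic and carry three \emph{distinct} colours, and likewise at $c$. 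Consequently all four edges $ab,bc,cd,da$ of the cycle are monochromatic.

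Next I would show that the two cycle edges at $b$ (and, symmetrically, at $d$) receive different colours. If $ab$ and $bc$ both had colour $1$, then --- these being the unique colour-$1$ edges at $a$ and at $c$ --- every colour-$1$ monochromatic perfect matching would be forced to contain both of them to cover $a$ and $c$; but both meet $b$, which is impossible. Pigeonholing the four cycle edges into the three colours now produces a repeated colour, and since adjacent cycle edges differ, this repetition must sit on an opposite pair. Without loss of generality $ab$ and $cd$ both have colour $1$, while the colours of $bc$ and $da$ are each different from $1$.

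The crux is an interference argument comparing two vertex colourings that agree off the gadget. Put $S=V(G)\setminus\{a,b,c,d\}$. In any colour-$1$ monochromatic perfect matching the degree-$3$ vertices $a,c$ are forced to be covered by their unique colour-$1$ edges $ab,cd$, which also cover $b,d$, so the rest of the matching lies inside $S$. Hence the colour-$1$ matchings of $G$ are exactly the colour-$1$ matchings of $G[S]$ with $\{ab,cd\}$ adjoined, and by the factorisation of \cref{partition_split_lemma} we get $w(\mathbf{1}_{V(G)})=w(ab)\,w(cd)\,w(\mathbf{1}_{S})=1$, so $w(\mathbf{1}_{S})\ne 0$. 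Now consider the vertex colouring $vc$ that paints $S$ with colour $1$, the pair $\{a,d\}$ with the colour of $da$, and the pair $\{b,c\}$ with the colour of $bc$. Because neither of these colours equals $1$, $vc$ is non-monochromatic and so must have weight $0$ by $3$-validity. But the same degree-$3$ forcing at $a,c$ now selects the edges $da,bc$, detaching $\{a,b,c,d\}$ exactly as before and leaving a colour-$1$ matching of $G[S]$; thus $w(vc)=w(da)\,w(bc)\,w(\mathbf{1}_{S})$. As all edge weights are non-zero this forces $w(\mathbf{1}_{S})=0$, contradicting the previous line, and therefore $\bar\mu(G)\le 2$.

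I expect the delicate point to be precisely these two weight factorisations and the recognition that they share the common factor $w(\mathbf{1}_{S})$: swapping the opposite pair $\{ab,cd\}$ for $\{da,bc\}$ turns a weight-$1$ monochromatic colouring into a weight-$0$ non-monochromatic one over the \emph{same} residual graph $G[S]$, which is the source of the contradiction. The only subtlety to flag is the boundary case $S=\emptyset$, i.e.\ $G\cong K_4$: there the two swapped colours necessarily coincide, $vc$ is in fact monochromatic, and no contradiction is obtained --- consistent with $\bar\mu(K_4)=3$. In the intended application $|V(G)|\ge 6$, so $S\ne\emptyset$ and the argument goes through.
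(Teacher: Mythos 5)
The statement you are trying to prove is the full Krenn--Gu conjecture (\cref{krenn_conjecture}; in the main text's weighted notation, \cref{KGconj}), and the paper does \emph{not} prove it --- it is stated, and remains, open. What your argument actually establishes is the paper's \cref{4_cycle_lemma}, reproduced essentially step for step: the forcing of three distinct monochromatic colours at the degree-$3$ vertices $a,c$, the observation that adjacent cycle edges at $b$ and $d$ differ in colour, the pigeonhole placing the repeated colour on an opposite pair $ab,cd$, and the two factorisations $w(\mathbf{1}_{V(G)})=w(ab)\,w(cd)\,w(\mathbf{1}_S)=1$ versus $w(vc)=w(da)\,w(bc)\,w(\mathbf{1}_S)=0$ sharing the factor $w(\mathbf{1}_S)$ (via \cref{partition_split_lemma}). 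That local argument is correct, and your flag about the boundary case $S=\emptyset$ (i.e.\ $K_4$) is apt. But it only bounds $\bar\mu(G)\le 2$ for graphs that \emph{contain} the gadget: a $4$-cycle $abcd$ with $d(a)=d(c)=3$.

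The gap is that you never show a general $G\not\cong K_4$ contains this gadget, and it need not. The paper extracts the gadget only in the case where the \emph{unweighted} matching index is $2$: by \cref{2_characterization_result} and \cref{4_cycle_observation}, $mcg(G)$ then has a drum $D$ whose smaller side ${\cal P}(D)$ consists of degree-$2$ vertices; if ${\cal P}(D)\neq\emptyset$ one instead invokes $\delta\le 2$ with \cref{min_degree_bound}, and only when ${\cal P}(D)=\emptyset$ does your $4$-cycle with two degree-$3$ vertices appear. This yields exactly \cref{type_2_resolution} ($\bar\mu(G)=\mu(G)$ whenever $\mu(G)\neq 1$), not the conjecture. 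The untouched case is $\mu(G)=1$: for such graphs (e.g.\ matching covered graphs of minimum degree at least $4$, or non-Hamiltonian matching covered graphs) no degree-$3$ gadget or small-degree vertex is guaranteed, your interference argument has no foothold, and no bound $\bar\mu(G)\le 2$ is known --- this is precisely where the conjecture is open and why rewards were announced for it. The paper even exhibits a graph with $\mu(G)=1$ but $\bar\mu(G)=2$ (\cref{mu1_mubar2} and \cref{sec:mu1_mubar2}), so the weighted index genuinely escapes the unweighted structure theory in the case you omit; it cannot be dismissed by the methods you use. (Note also that \cref{KGconj} is stated for multigraphs, while your argument, like \cref{type_2_resolution}, lives in the simple-graph setting.)
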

Several cash rewards were also announced for a resolution of this conjecture \cite{krenn_website}. We note the following implications of resolving this conjecture
\begin{enumerate}
    \item Finding a counterexample for this conjecture would uncover new peculiar quantum interference effects of a multi-photonic quantum system using which we can create new GHZ states
    \item
    \begin{enumerate}
    \item Proving this conjecture would immediately lead to new insights into resource theory in quantum optics
    \item Proving this conjecture for different graph classes would help us understand the properties of a counterexample and guide experimentalists in finding it. This is particularly important since huge computational efforts are being put \cite{AI1,AI2,neugebauer}. 
    \end{enumerate}
\end{enumerate}

\end{document}